\newtcolorbox{idea}[1][]
{
colbacktitle=cyan,
colback=cyan!10,
arc=1pt,
boxrule=1pt,
title=#1 
}
\newtcolorbox{update}[1][]
{
colbacktitle=gray,
colback=gray!10,
arc=1pt,
boxrule=1pt,
title=#1 
}
\newtcolorbox{question}[1][]
{
coltitle=black,
colbacktitle=yellow,
colback=yellow!10,
arc=1pt,
boxrule=1pt,
title=#1 
}
\newtcolorbox{note}[1][]
{
coltitle=black,
colbacktitle=green,
colback=green!10,
arc=1pt,
boxrule=1pt,
title=#1 
}
\newtcolorbox{problem}[1][]
{
coltitle=black,
colbacktitle=red!60,
colback=red!10,
arc=1pt,
boxrule=1pt,
title=#1 
}
\newcommand{\bI}{\boldsymbol{I}}
\newcommand{\fN}{\field{N}}
\newcommand{\sA}{\mathcal{A}}
\newcommand{\sB}{\mathcal{B}}
\newcommand{\sD}{\mathcal{D}}
\newcommand{\sE}{\mathcal{E}}
\newcommand{\sG}{\mathcal{G}}
\newcommand{\sP}{\mathcal{P}}
\newcommand{\sR}{\mathcal{R}}
\newcommand{\ccdh}{C}
\newcommand{\approxccdh}{\widehat{\ccdh}}
\newcommand{\degdist}{D}
\newcommand{\approxdom}{\approxerror_\domain}
\newcommand{\approxran}{\approxerror_\ran}
\newcommand{\biapprox}{\fbrac{\approxdom,\approxran}}
\newcommand{\ccdhthreshold}{\tau_\ccdh}
\newcommand{\vertexsamplesize}{q}
\newcommand{\vertexsample}{\vertexset_\vertexsamplesize}
\newcommand{\degreesample}{D_\vertexsamplesize}
\newcommand{\degreethreshold}{\tau_{\degree{}}}
\newcommand{\edgesamplesize}{r}
\newcommand{\edgesample}{\edgeset_\edgesamplesize}
\newcommand{\approxdegree}[1]{\widehat{d}_{#1}}
\newcommand{\approxhindex}{\hindex'}
\newcommand{\stream}{\sigma}
\newcommand{\streamelement}{\edge}
\newcommand{\vertexsampleindex}{\bI_{\vertex}}
\newcommand{\streamspace}{S}
\newcommand{\actvercount}{\vertexcount_a}
\newcommand{\actverset}{\vertexset_a}
\newcommand{\vertexsetA}{A}
\newcommand{\vertexsetC}{C}
\newcommand{\vertexsetB}{B}
\newcommand{\vertexsetD}{D}
\newcommand{\vertexsetE}{E}
\newcommand{\isovertexset}{I}
\newcommand{\vertexA}{a}
\newcommand{\vertexB}{b}
\newcommand{\vertexC}{c}
\newcommand{\vertexD}{d}
\newcommand{\vertexE}{e}
\newcommand{\graphclass}[1]{\sG_{#1}}
\newcommand{\gopinath}[1]{\textcolor{purple}{Gopinath~:~#1}\xspace}
\newcommand{\debarshi}[1]{\textcolor{blue}{Debarshi~:~{#1}}}
\newcommand{\field}[1]{\mathbb{#1}}
\newcommand{\R}{\field{R}}
\newcommand{\Nat}{\field{N}}
\newcommand{\func}[3]{{#1} : {#2} \rightarrow {#3}}
\newcommand{\defeq}{\stackrel{\rm def}{=}}
\newcommand{\blue}[1]{\textcolor{blue}{#1}}
\newcommand{\red}[1]{\textcolor{red}{#1}}
\newcommand{\fbrac}[1]{\left({#1}\right)}
\newcommand{\sbrac}[1]{\left\{{#1}\right\}}
\newcommand{\tbrac}[1]{\left[{#1}\right]}
\newcommand{\abs}[1]{\left|{#1}\right|}
\newcommand{\size}[1]{\left|{#1}\right|}
\newcommand{\constant}{c}
\DeclareMathOperator*{\Prob}{\field{P}}
\DeclareMathOperator*{\E}{\field{E}}
\newcommand{\uniform}{\mathrm{Unif}}
\newcommand{\approxerror}{\varepsilon}
\newcommand{\bigo}[1]{O\fbrac{{#1}}}
\newcommand{\bigot}[1]{\widetilde{O}\fbrac{{#1}}}
\newcommand{\bigomega}[1]{\Omega\fbrac{{#1}}}
\newcommand{\algo}{\sA}
\newcommand{\graph}{G}
\newcommand{\vertexset}{V}
\newcommand{\edgeset}{E}
\newcommand{\vertexcount}{n}
\newcommand{\edgecount}{m}
\newcommand{\vertex}{v}
\newcommand{\altvertex}{u}
\newcommand{\edge}{e}
\newcommand{\neighbour}{Neigh}
\newcommand{\degree}[1]{d_{#1}}
\newcommand{\hindex}{h}
\newcommand{\randedgeq}{\texttt{Random Edge}}
\newcommand{\neighbourq}{\texttt{Neighbor}}
\newcommand{\edgeexistsq}{\texttt{EdgeExist}}
\newcommand{\degreeq}{\texttt{Degree}}
\newcommand{\domain}{\sD}
\newcommand{\ran}{\sR}
\newcommand{\ccspace}{Q}
\newcommand{\querycomm}{q}
\newcommand{\disjointness}[1]{\texttt{DISJ}_{#1}}
\newcommand{\promisedisjointness}[1]{\texttt{DISJ}^{prom}_{#1}}
\newcommand{\stringlength}{M}
\newcommand{\alicestring}{x}
\newcommand{\bobstring}{y}
\newcommand{\ccgraph}{{\graph_{\alicestring,\bobstring}}}
\newcommand{\ccvertexset}{\vertexset_{\alicestring,\bobstring}}
\newcommand{\ccedgeset}{\edgeset_{\alicestring,\bobstring}}
\newcommand{\embedding}{\sE}
\newcommand{\threshold}{\tau}
\newcommand{\remove}[1]{}
\theoremstyle{plain}
\newtheorem{theorem}{Theorem}
\newtheorem{lemma}[theorem]{Lemma}
\theoremstyle{definition}
\newtheorem{definition}[theorem]{Definition}
\theoremstyle{remark}
\title{Towards Tight Bounds for Estimating Degree Distribution in Streaming and Query Models}
\author{%
  Arijit Bishnu \\
  Indian Statistical Institute \\
  Kolkata, India\\
  \And
  Debarshi Chanda \\
  Indian Statistical Institute \\
  Kolkata, India\\
  \And
  Gopinath Mishra \\
  National University of \\ 
  Singapore\\
}
\newif\ifarxiv
\newif\ifupd
\begin{document}

\maketitle
\doparttoc 
\faketableofcontents 
\begin{abstract}
    The \emph{degree distribution} of a graph $G=(V,E)$, $\size{V}=n$, $\size{E}=m$ is one of the most fundamental objects of study in the analysis of graphs as it embodies relationship among entities.  
In particular, an important {derived distribution} from \emph{degree distribution} is the \emph{complementary cumulative degree histogram} (ccdh). The ccdh is a fundamental summary of graph structure, capturing, for each threshold $d$, the number of vertices with degree at least $d$.  For approximating ccdh, we consider the $(\varepsilon_\mathcal{D},\varepsilon_\mathcal{R})$-BiCriteria Multiplicative Approximation, which allows for controlled multiplicative slack in both the domain and the range. 

There are only two works in sublinear models prior to us that considered ccdh -- one in streaming setup [Simpson et al. (ICDM 2015)] and the other in property testing setup with query models [Eden et al.~(WWW 2018)].  However, the exact complexity of the problem, {including lower bounds}, was not known and had been posed as an open problem in WOLA 2019 [Sublinear.info, Problem 98]. 
\remove{
 proposed a one-pass streaming algorithm that exhibited good empirical performance but lacked theoretical guarantee. In the property testing setup with query models, Eden et al.~[WWW 2018] proposed an algorithm with upper bound of $\widetilde{O}\left(\frac{n}{h}+\frac{m}{z^2}\right)$\footnote{The $\widetilde{O}(\cdot)$ notation hides polynomial factors in $\log n$, $1/\varepsilon_\mathcal{D}$, and $1/\varepsilon_\mathcal{R}$ in the upper bound.} \blue{using \emph{local queries}, where $h$ is the h-index of the graph, and $z$ is a fatness index of the graph}.
 However, the exact complexity of the problem was not known and had been posed as an open problem in WOLA 2019 [Sublinear.info, Problem 98]. 
}

In this work, we first design an algorithm that can approximate ccdh if a suitable vertex sample and an edge sample can be obtained and thus, the algorithm is independent of any sublinear model. Next, we show that in the streaming and query models such samples can be obtained, and (almost) settle the complexity of the problem across both the sublinear models. \remove{We extend the algorithms to both the non-adaptive and adaptive query settings as well as one-pass and multi-pass streaming setups and obtain:} Our upper bound results are the following where $\hindex$ is the $\hindex$-index of $G$:
\begin{itemize}
    \item For the non-adaptive and adaptive query setup, our algorithms make $\widetilde{O}\left(\nicefrac{n}{\varepsilon_R^2h} + \nicefrac{m}{\varepsilon_D^2h}\right)$ and $\widetilde{O}(\nicefrac{m}{h}(\nicefrac{1}{\varepsilon_R^2}$ $+\nicefrac{1}{\varepsilon_D^2}))$ {local and random edge} queries, respectively, where $\widetilde{O}(\cdot)$ hides polynomial in $\log n$.
    \item For the one-pass and two-pass streaming setup, our algorithms use $\widetilde{O}\left(\nicefrac{n}{\varepsilon_R^2h} + \nicefrac{m}{\varepsilon_D^2h}\right)$ and $\widetilde{O}(\nicefrac{m}{h}(\nicefrac{1}{\varepsilon_R^2}+$ $\nicefrac{1}{\varepsilon_D^2})$ space, respectively.
\end{itemize}

Prior to our work, there were no known lower bounds for ccdh across sublinear models. In terms of lower bounds, we first show that the problem does not have a sublinear solution for all graphs through a $\Omega(n)$ lower bound on both query and space. Furthermore, we establish a $\Omega\left(\nicefrac{m}{h}\right)$ lower bound for the adaptive query and multi-pass streaming setup, (almost) matching our upper bounds. A consequence of our lower bound results is that the SADDLES algorithm of [Eden et al.~(WWW 2018)] is almost tight.
\end{abstract}

%

\section{Introduction}\label{Section: Introduction}
Degree sequence in graphs has been of theoretical interest for a long time~\cite{west_introduction_2000,Bollobas_2001}. Degree distribution, which can be derived from degree sequences, was also of theoretical interest for researchers studying random networks like Erd\H{o}s-R\'enyi graphs. The problem of degree distribution received renewed attention with the advent of massive graphs. \remove{generated from internet-based social networks, biological networks, huge transportation networks, etc.} These massive graphs are ubiquitous wherever there are entities and relationships among those entities like in web-pages and hyperlinks between web-pages, IP addresses and connection links between addresses, social network entitites and their friendships, etc.
The degree distribution of a graph embodies relationship among entities across domains as varied as  biology~\cite{PrzuljNatasa/Bioinformatics/2007/DegreeDistBiologicalNetComp}, social networks~\cite{Golbeck/Book/2013/SocialWebAnalyzing/DegDist,SeshadriKoldaPinar/PhysRevE/2012/CommunityStructureandDegDist,Ebbes/WITS/2008/SamplingLargeScaleSocialNetworks}, network robustness~\cite{YuanShaoStanleyHavlin/PhysRevE/2015/BreadthDegDistRobustness}, web graphs~\cite{CohenErezBenAvrahamHavlin/PhysRevLett/2000/WebGraphDegreeDist,PennockFlakeLawrence/PNAS/2002/WebDegDist,AksoyKoldaPinar/JCompNet/2017/MeasuringandModelingBipartiteGraphs}, graph mining~\cite{ChakrabartiFaloutsos/ACMCSurvey/2006/GraphMiningSurvey}, study of algorithms~\cite{DuarkKoldaPinarSeshadri/NSW/2013/NullModelforAllDegreeDistribution,Mitzenmacher/SurveyBook/2004/BriefSurveyPowerLaw} etc. 
 This explosion in the practical usage of massive graphs generated a new field of \emph{network science}~\cite{book/network-science, BarbasiAlbert/Science/1999/EmergenceofScalinginRandomNetworks,BroderKMRRSTW/CompNet/2000/GraphStructureinTheWeb,FaloutsosCubed/SIGCOMMCCR/1999/OnPowerLawRelationShipsoftheInternetTopology,Newman/SiamRev/2003/StructureandFunctionofComplexNetworks}. 
The way these massive graphs are stored and processed are also varied giving rise to the study of massive graph algorithms under different computation models. They can be stored in a distributed fashion~\cite{GeaBase/FuWuLiChenYeYuHu, ICMD2020/BuragohainRisvik,book/biggraphdistributed}, they can be presented as massive graph streams~\cite{Mcgregor/ACMSigmod/GraphStreamAlgorithms,IEEEKDE/GouZouZhaoYang}, they can be accessed only through some graph samples obtained through some particular query accesses ~\cite{FaloutsosCubed/SIGCOMMCCR/1999/OnPowerLawRelationShipsoftheInternetTopology, Eden/WWW/2017/CCDHinQueryModel} to the graph. 

Almost parallely in time frame, and also fueled by the paradigm of \emph{massive data}, two new areas of study -- sub-linear space (streaming) and sub-linear time (property testing) algorithms -- started gaining attention~\cite{GoldreichRon/STOC/97/PropertyTestinginBoundedDegreeGraphs,GoldreichGoldwasserRon/JACM/98/PropertyTestingAndConnections,Goldreich/RandomMethods/1999/SurveyOnCombinatorialPropTest, jcss/AlonMS99}. It was only natural that fundamental problems of network science will also be of interest to the sub-linear algorithms community. Degree distribution is one such fundamental problem that drove the study of network science~\cite{book/network-science}. The problem of degree distribution is characterized by scale freeness and heavy tail. The sub-linear algorithms community considered the problem of degree distribution from sublinear-space~\cite{SeshadriMcGregor/ICDM/2015/CCDHStreamingEmpirical} and sublinear-time~\cite{Eden/WWW/2017/CCDHinQueryModel} perspectives. As a comprehensive theoretical solution, including a lower bound, was eluding the community, the problem found its way into the \emph{open problem list}~\cite{sublinearProblemEstimating} of the sub-linear community in 2019. Since then, it seems that no sublinear solution has been proposed for this problem. We revisit this open problem both in terms of streaming and property testing. 

Given complete access to the graph in the RAM model, degree distribution can be solved in linear time and space. In massive graph scenario, such an access is almost ruled out. Graph parameters, including degree distribution, are estimated from small sized graph subsamples. There is a vast literature on estimating the degree distribution using a small subsample~\cite{RibeiroTowsley/CDC/2012/DegreeDistributionGraphSampling,ZhangKolaczykSpencer/AnnalsAS/2015/EstimatingDegreeDistributions,www/dasguptaKumarSarlos,STEWART2025105420,Antunes02102021,Eden/WWW/2017/CCDHinQueryModel}. Ideally, the sublinear community wants this subsample size to be $o(n)$ but classical works in the graph sampling framework used $10-30\%$ of the vertices as a subsample that was improved to about $1\%$ in~\cite{Eden/WWW/2017/CCDHinQueryModel}. 
\remove{
More interestingly, as pointed out in~\cite{Eden/WWW/2017/CCDHinQueryModel}, based on works in~\cite{FaloutsosCubed/SIGCOMMCCR/1999/OnPowerLawRelationShipsoftheInternetTopology,jacm/AchlioptasCKM09,infocom/LakhinaBCX03}, a direct extrapolation of the degree distribution from a graph subsample is not valid for the underlying graph. The work of Faloutsos et al.~\cite{FaloutsosCubed/SIGCOMMCCR/1999/OnPowerLawRelationShipsoftheInternetTopology} discovered power laws for the node outdegree for the internet topology based on a set of \emph{traceroute} queries on a set of routers. Later on, it was found both empirically~\cite{infocom/LakhinaBCX03} and theoretically~\cite{jacm/AchlioptasCKM09} that traceroute responses can show the existence of power law for the degree distribution whereas, the real network may have a different behaviour.} 
The same work highlighted the need of a theoretical guarantee for the estimation of the degree distribution from graph subsamples.

Moreover, to the best of our knowledge, there were no known lower bounds for the estimation of degree distribution. Against this backdrop, we contextualize our work. The next section sets up the problem, the machine models and highlights our results and also places our work in relation to the two prior sublinear works~\cite{Eden/WWW/2017/CCDHinQueryModel,SeshadriMcGregor/ICDM/2015/CCDHStreamingEmpirical} in approximating ccdh. Section~\ref{Section: Technical Overview} has the technical overview of the work. Section~\ref{Section: Algorithm} presents the unified algorithmic idea that can be implemented both across the streaming and query models. Section~\ref{sec:sparsegraph} proposes a unified framework to improve the algorithm for sparse graphs in adaptive query and two-pass streaming setup. Section~\ref{Section: Lower Bounds} discusses the lower bounds for both the streaming and query models.



\remove{


Given a graph $\graph$ on vertices $\vertexset$, denote $\degree{\vertex}$ to be the degree of the vertex $\vertex$. The degree distribution of a graph is the sequence of numbers $\degdist(1),\degdist(2),\ldots,\degdist(\vertexcount)$ where $\degdist(\degree{}) = \abs{\sbrac{\vertex \in \vertexset|\degree{\vertex} = \degree{}}}$. The degree distribution of a graph has been used across domains such as biology~\cite{PrzuljNatasa/Bioinformatics/2007/DegreeDistBiologicalNetComp}, social networks~\cite{Golbeck/Book/2013/SocialWebAnalyzing/DegDist,SeshadriKoldaPinar/PhysRevE/2012/CommunityStructureandDegDist,Ebbes/WITS/2008/SamplingLargeScaleSocialNetworks}, network robustness~\cite{YuanShaoStanleyHavlin/PhysRevE/2015/BreadthDegDistRobustness}, and web graphs~\cite{CohenErezBenAvrahamHavlin/PhysRevLett/2000/WebGraphDegreeDist,PennockFlakeLawrence/PNAS/2002/WebDegDist,AksoyKoldaPinar/JCompNet/2017/MeasuringandModelingBipartiteGraphs}, graph mining~\cite{ChakrabartiFaloutsos/ACMCSurvey/2006/GraphMiningSurvey}, study of algorithms~\cite{DuarkKoldaPinarSeshadri/NSW/2013/NullModelforAllDegreeDistribution,Mitzenmacher/SurveyBook/2004/BriefSurveyPowerLaw} etc.

Given complete access to the graph in the RAM model, the problem can be solved in linear time and space through maintaining $\vertexcount$ counters. However, the problem of learning the distribution when the complete graph is not available is an interesting question.  Providing an algorithm with theoretical guarantees that holds for all instances of the problem has been studied in terms of sublinear space~\cite{SeshadriMcGregor/ICDM/2015/CCDHStreamingEmpirical}, and time~\cite{Eden/WWW/2017/CCDHinQueryModel}. \cite{SeshadriMcGregor/ICDM/2015/CCDHStreamingEmpirical} provided empirical demonstration of the performance of their algorithm in the streaming model as well as heuristic arguments to substantiate it. \cite{Eden/WWW/2017/CCDHinQueryModel} established a theoretically sound query algorithm that makes sublinear number of queries if the graph is \emph{sufficiently dense}.

\todo[inline]{Introduce ccdh?}
 
 Providing an algorithm with theoretical guarantees that holds for all instances of the problem has been studied in terms of sublinear space~\cite{SeshadriMcGregor/ICDM/2015/CCDHStreamingEmpirical}, and time~\cite{Eden/WWW/2017/CCDHinQueryModel}. \cite{SeshadriMcGregor/ICDM/2015/CCDHStreamingEmpirical} provided empirical demonstration of the performance of their algorithm in the streaming model as well as heuristic arguments to substantiate it. \cite{Eden/WWW/2017/CCDHinQueryModel} established a theoretically sound query algorithm that makes sublinear number of queries if the graph is \emph{sufficiently dense}. 

\todo[inline]{Add WOLA19 here.}
\todo[inline]{Do we add models of computation part here?}
\begin{note}[Citations]
    \begin{itemize}
        \item \textbf{Motivation: }
        \begin{itemize}
            \item \textbf{Degree Distribution in Network}: 
            \cite{AhmedNevilleKompella/TKDD/2014/NetworkSamplingStaticToStreaming}
            \cite{AhmedNevilleKompella/BigMine/2012/SpaceEfficientSamplingfromSocialActivityStreams}
            \cite{FaloutsosCubed/SIGCOMMCCR/1999/OnPowerLawRelationShipsoftheInternetTopology}
            \cite{BarbasiAlbert/Science/1999/EmergenceofScalinginRandomNetworks}
            \cite{BroderKMRRSTW/CompNet/2000/GraphStructureinTheWeb}
            \item \textbf{Query Model: }
        \end{itemize}
        \item \textbf{Classical Work in Sampling: } \cite{ZhangKolaczykSpencer/AnnalsAS/2015/EstimatingDegreeDistributions} \cite{StumphWiuf/PhyiscalReview/2005/SamplingGraphsDegreeDist} \cite{RibeiroTowsley/CDC/2012/DegreeDistributionGraphSampling}
        \cite{MaiyaBerger-Wolf/KDD/2011/BiasImprovesNetworkSampling}
        \cite{LeskovecFaloutsos/KDD/2006/SamplingFromLargeGraphs}
        \cite{LeeKimJeong/PhysRevE/2006/StatPropSampledNetworks}
        \cite{AhmedNevilleKompella/TKDD/2014/NetworkSamplingStaticToStreaming}
        \cite{AhmedNevilleKompella/BigMine/2012/SpaceEfficientSamplingfromSocialActivityStreams}
        \item \textbf{Previous Work: }\cite{Eden/WWW/2017/CCDHinQueryModel}   \cite{SeshadriMcGregor/ICDM/2015/CCDHStreamingEmpirical} 
        \item \textbf{Random Edge: } \cite{RibeiroTowsley/CDC/2012/DegreeDistributionGraphSampling}\cite{StumphWiuf/PhyiscalReview/2005/SamplingGraphsDegreeDist} \cite{LeskovecFaloutsos/KDD/2006/SamplingFromLargeGraphs} uses weighted sampling.
            
    \end{itemize}
\end{note}

\subsection{Related Work}\label{Subsection: Related Work}

Our work falls in an intersection of several streams of research. The problem of estimating the degree distribution has been studied in context of graph sampling and have been one of the pivotal questions studied within the context of graph mining\cite{ZhangKolaczykSpencer/AnnalsAS/2015/EstimatingDegreeDistributions,StumphWiuf/PhyiscalReview/2005/SamplingGraphsDegreeDist,RibeiroTowsley/CDC/2012/DegreeDistributionGraphSampling,MaiyaBerger-Wolf/KDD/2011/BiasImprovesNetworkSampling,LeskovecFaloutsos/KDD/2006/SamplingFromLargeGraphs,LeeKimJeong/PhysRevE/2006/StatPropSampledNetworks,AhmedNevilleKompella/TKDD/2014/NetworkSamplingStaticToStreaming,AhmedNevilleKompella/BigMine/2012/SpaceEfficientSamplingfromSocialActivityStreams,AhmedNevilleKompella/WIN/2010/ReconsideringFOundationsofNetworkSampling,Ebbes/WITS/2008/SamplingLargeScaleSocialNetworks}. However, in terms of the computational model, our work is more closely related to the graph streaming and query models. We refer the interested reader to~\cite{Mcgregor/ACMSigmod/GraphStreamAlgorithms}~and~\cite{Goldreich/Book/2010/GraphPropertyTestingSurvey} for more detailed surveys on these models, respectively.

\begin{note}[Todo]
    \begin{itemize}
        \item Contextualize in terms of distribution testing~\cite{Canonne/ToC/2020/SurveyDistributionTesting} and learning\cite{Canonne/Arxiv/2020/ShortNoteLearningDiscreteDistribution}. PR~\cite{OnakSun/PMLR/2018/PRSampling} and Dual models~\cite{CanonneRubinfeld/ICALP/2014/DualModel}.
        \item Development of random edge\cite{RibeiroTowsley/CDC/2012/DegreeDistributionGraphSampling,StumphWiuf/PhyiscalReview/2005/SamplingGraphsDegreeDist,LeskovecFaloutsos/KDD/2006/SamplingFromLargeGraphs,Aliakbarpour/Algorithmica/2018/CountingStarSUbgraphsEdgeSampling,AssadiKapralovKhanna/ITCS/2019/SimpleSUblinearSubgraph} and contextualization
        \item Argue placement w.r.t. ~\cite{SeshadriMcGregor/ICDM/2015/CCDHStreamingEmpirical,Eden/WWW/2017/CCDHinQueryModel}.
        \item Observations regarding lower bounds~\cite{EdenRosenbaum/Approx/2018/LowerBoundGraphCommunication,AssadiNguyen/Approx/2022/OptimalH-IndexAndTriangle}
        \item Reorganize section 1 and 2.
    \end{itemize}
\end{note}

\subsection{Our Results}

In this work, we establish (almost) optimal bounds for the problem of estimating the degree distribution across both streaming and query models. 

\subsubsection{Upper Bounds:}

We propose a generic algorithm that requires degrees of $\bigo{\frac{\vertexcount}{\approxran^2\hindex}}$ random vertices, and $\bigo{\frac{\edgecount}{\approxdom^2\hindex}}$ random edges to obtain an $\biapprox$-BMA of the ccdh. In the Query Model, this can be realized using $\bigo{\min\fbrac{\vertexcount,\frac{\vertexcount}{\approxran^2\hindex}+\frac{\edgecount}{\approxdom^2\hindex}}}$ \degreeq{} and \randedgeq{} queries. In the streaming model, this can be realized using $\bigo{\min\fbrac{\vertexcount,\frac{\vertexcount}{\approxran^2\hindex}+\frac{\edgecount}{\approxdom^2\hindex}}}$ space. Combined with the fact that $\vertexcount$ degree queries suffices in the query model, and keeping $\vertexcount$ counters for degrees in the streaming model, we obtain a $\bigo{\min\fbrac{\vertexcount,\frac{\vertexcount}{\approxran^2\hindex}+\frac{\edgecount}{\approxdom^2\hindex}}}$  query (resp. space) algorithm in the query(resp. streaming) model.

\subsubsection{Lower Bounds:}

We establish the lower bounds in three parts. Firstly, we show that for any $\vertexcount \in \Nat$, there exists graphs with $\vertexcount$ vertices such that any query (resp. streaming) algorithm that computes an approximate ccdh requires $\bigomega{\vertexcount}$ queries (resp. space). This establishes the fact that there can not be a sublinear (either time or space) algorithm to compute an approximate ccdh for arbitrary graphs.

Secondly, for any $\vertexcount, \edgecount, \hindex \in \Nat$, we show that any query (resp. streaming) algorithm that computes an approximate ccdh of a graph with $\vertexcount$ vertices, $\edgecount$ edges, and h-index $\hindex$ must use $\bigomega{\frac{\edgecount}{\hindex}}$ queries (resp. space). This makes our algorithm almost optimal for the regime $\edgecount = \bigomega{\vertexcount}$. 

Thirdly, for any $\vertexcount \hindex \in \Nat$, we show that any query   algorithm that computes an $\biapprox$-BMA of the ccdh on a graph with $\vertexcount$ vertices, and h-index $\hindex$ must use $\bigomega{\frac{\vertexcount}{\approxran^2\hindex}}$ queries. \red{Observations?}

\renewcommand{\arraystretch}{2}
\begin{table}[ht!]
    \centering
    \begin{tabular}{|c|c|c|c|}
    \hline
                        & Query Model   & Streaming Model\\\hline
         Upper     & Adaptive: $\bigo{\min\fbrac{\vertexcount,\frac{\edgecount}{\approxerror^2\hindex}}}$             & Multi-Pass: $\bigo{\min\fbrac{\vertexcount,\frac{\edgecount}{\approxerror^2\hindex}}}$\\\cline{2-3}
         Bound    & Non-Adaptive: $\bigo{\min\fbrac{\vertexcount,\frac{\vertexcount}{\approxran^2\hindex}+\frac{\edgecount}{\approxdom^2\hindex}}}$             & Single-Pass: $\bigo{\min\fbrac{\vertexcount,\frac{\vertexcount}{\approxran^2\hindex}+\frac{\edgecount}{\approxdom^2\hindex}}}$\\\hline
         Lower      Bound &  Adaptive: $\bigomega{\min\fbrac{\vertexcount,\frac{\edgecount}{\hindex}}} $            & Multi-Pass: 
         $\bigomega{\min\fbrac{\vertexcount,\frac{\edgecount}{\hindex}}}$\\\hline
    \end{tabular}
    \caption{Our Results, \red{Here, $\approxerror = \max\fbrac{\approxdom,\approxran}$}}
    \label{Tab: Results Table}
\end{table}
}

\section{Problem Setup}\label{Section: Problem Setup}

In this section, we formally define the problem of computing the complementary cumulative degree histogram, the models of computation used and the results.


\subsection{Problem Details}\label{Subsection: Problem Details}
We consider simple graphs $\graph = \fbrac{\vertexset,\edgeset}$, with $\size{\vertexset} = \vertexcount$, and $\size{\edgeset} = \edgecount$. For any vertex $\vertex \in \vertexset$,  we denote $\neighbour(\vertex) = \sbrac{\altvertex~|~\fbrac{\altvertex,\vertex}\in\edgeset}$ to be the set of neighbors of the vertex $\vertex$, and $\degree{\vertex} = \size{\neighbour(\vertex)}$ to be the degree of the vertex $\vertex$. The degree distribution of a graph with $\vertexcount$ vertices is the sequence of numbers $\vertexcount_{0}$, $\vertexcount_{1}$, $\ldots$, $\vertexcount_{n}$, where $\vertexcount_{\degree{}} = \size{\sbrac{\vertex \in \vertexset~|~\degree{\vertex} = \degree{}}}$ is the number of vertices in $\graph$ with degree $\degree{}$. Let $\actverset$ be the set of active vertices, i.e., vertices with degree at least $1$, and denote $\actvercount = \size{\actverset}$. Based on the findings in~\cite{siamrev/ClausetSN09}, both the prior works in streaming~\cite{SeshadriMcGregor/ICDM/2015/CCDHStreamingEmpirical} and query~\cite{Eden/WWW/2017/CCDHinQueryModel} use the notion of \emph{Complementary Cumulative Degree Histogram} (ccdh) instead of \emph{degree histogram} because ccdh is more immune to noise and is also monotonically decreasing~\cite{siamrev/ClausetSN09}. We also follow suit. The notion of ccdh formally defined below, is based on the number of vertices of degree at least $\degree{}$. 
\begin{definition}[Complementary Cumulative Degree  Histogram]\label{Definition: CCDH}
    Given a graph $\graph$, let $\func{\ccdh}{\tbrac{\vertexcount}}{\tbrac{\vertexcount}}$ be the complementary cumulative degree histogram,  defined as:
    \begin{align*}
        \ccdh(d) \defeq \sum_{r \geq d} \vertexcount_r = \size{\sbrac{\vertex \in \vertexset~|~\degree{\vertex}\geq\degree{}}}
    \end{align*}
\end{definition}


$\approxccdh(d)$ will denote the approximation of $\ccdh(d)$. 
Ideally, a ccdh value (or approximate ccdh) for a degree $\degree{}$ should be specified as $\ccdh(d)$ (or $\approxccdh(d)$), but at times we would drop the dependence on $\degree{}$ and just write $\ccdh$ (or $\approxccdh$). We want to approximate ccdh in terms of the following bi-criteria multiplicative approximation guarantee defined in~\cite{Eden/WWW/2017/CCDHinQueryModel}:
\begin{definition}[$(\approxdom,\approxran)$-BiCriteria Multiplicative Approximation]\label{Definition: BiCriteria Multiplicative Approximation}
    Given a monotonically decreasing function $\func{f}{\domain}{\ran}$, with $\domain \subseteq \R$, $\ran \subseteq \R$, we define an approximation $\hat{f}$ to be $(\approxdom,\approxran)$-BiCriteria Multiplicative Approximation if:
    \begin{align*}
        \forall x\in \domain, (1-\approxran)f((1+\approxdom)x) \leq \hat{f}(x) \leq (1+\approxran)f((1-\approxdom)x)
    \end{align*}
\end{definition}
The bi-criteria formulation allows for two approximation factors, $\approxdom$ and $\approxran$ that are for the domain and range of the function, respectively. The monotonicity condition on the underlying function is to ensure that $\approxdom$-based domain relaxation is meaningful. A similar criteria can also be defined for monotonically increasing functions by reversing the signs of the domain relaxation. For simplicity, we denote this criterion as $\biapprox$-BMA henceforth.

Note that this criterion, while \emph{weaker} than  multiplicative approximation, is \emph{stronger} than standard distributional norms. The choice of this distance captures the closeness of the graph structure properly. In contrast to standard norms like $\ell_1$-distance, it does not allow us to discount small mass segments entirely which may contain high degree vertices critical to the graph structure. A simple motivating example is to consider a matching and a star graph, which would be close in $\ell_1$ distance, but is not an $\biapprox$-BMA of each other.  

\remove{
\gopinath{May be we can make the above paragraph a bit verbatim as below paragraph?}

\textcolor{blue}{Note that this criterion, while weaker than pointwise multiplicative approximation, is stronger than standard distributional norms. The choice of this distance is motivated due to it capturing the closeness of graph structure properly. In contrast with standard norms like $\ell_1$-distance, it does not allow us to discount small mass segments entirely, which may contain high-degree vertices critical to the graph structure. The $\ell_1$-distance between two functions $f$ and $g$ is defined as $\| f - g \|_1 = \sum_d |f(d) - g(d)|$. A pointwise multiplicative approximation requires that, for each $d$ in the domain, we have $(1 - \approxran) f(d) \leq \hat{f}(d) \leq (1 + \approxran) f(d)$, which bounds the approximation for each individual degree. However, unlike the $\biapprox$-BMA, it does not allow flexibility in both the domain and the range. A simple motivating example can be to consider a matching and a star graph. These graphs would be close in $\ell_1$-distance, but they are not $\biapprox$-BMA of each other. The degree distributions differ significantly: in the matching graph, only a few vertices have non-zero degrees, while in the star graph, one vertex has a high degree and the rest have degree 1. Despite their similarity in $\ell_1$-distance, these graphs are not close in the $\biapprox$-BMA sense because the multiplicative approximation cannot preserve the important structural difference, especially the degree of the central vertex in the star graph.}

\debarshi{Somewhat unnecessary I feel, it is already defined and motivated in previous works. Furthermore, we will also need to motivate why pointwise multiplicative approximation becomes too difficult a goal if we are delving into the details.}
}

We also introduce the $\hindex$-index formally:
\begin{definition}[h-index]\label{Definition: h-index}
    Given a graph, we define its h-index, denoted $\hindex$, as:
    \begin{align*}
        \hindex = \max \sbrac{\degree{}|\size{\sbrac{\vertex\in\vertexset|\degree{\vertex}\geq\degree{}}}\geq\degree{},{\degree{}\in\tbrac{\vertexcount}}}
    \end{align*}
\end{definition}
Essentially, $\hindex$-index is the maximum degree $\degree{}$ such that the number of vertices with degree greater than or equal to $\degree{}$ is at least $\degree{}$. The $\hindex$-index for a graph indicates the number of vertices that have a high degree. Notice that $h \geq \nicefrac{m}{n}$. Thus, $\hindex$-index is indicative of the density of the graph. 

We assume $\constant$ to be a large enough universal constant in this work. By high probability, we mean the probability to be at least $1 - \nicefrac{1}{\vertexcount^\constant}$. We use $\widetilde{O}$ notation to hide polynomial dependencies on $\log\vertexcount$.




\subsection{The Computing Models}

\subsubsection{Streaming Model}\label{Subsection: Problem Setup - Streaming Model}

In the streaming model, or the sublinear-space setting, we consider the edge-arrival model of graph streaming. In the insertion-only variant of the model, we are presented with a stream of edges $\sbrac{\stream_i}_{i \in [\edgecount]}$ of a graph $G$, where each edge arrives in an arbitrary order. In the turnstile variant of the model, deletion of edges over the stream is allowed over and above addition. Our objective is to output $\approxccdh$, an $\biapprox$-BMA to the true ccdh $\ccdh$ of the graph $\graph$, using as small space as possible.



\subsubsection{Query Model}\label{Subsection: Problem Setup - Query Model}

In the sublinear-time setting, $\graph$ can be accessed with the following queries:
\begin{itemize}
    \item{$\degreeq{(\vertex)}$:} Given a vertex $\vertex \in V$, return $\degree{\vertex}$. 
    \item{$\neighbourq(\vertex,i)$:} Given a vertex $\vertex \in \vertexset$ and an index $i \in \tbrac{\vertexcount}$, return the $i$-th vertex $\altvertex \in \neighbour(\vertex)$, if it exists; otherwise, return $\perp$    
    \item{$\edgeexistsq(\altvertex,\vertex)$: } Given two vertices $\altvertex,\vertex \in \vertexset$, return $1$ if $(\altvertex,\vertex) \in \edgeset$, and $0$, otherwise.
    \item{$\randedgeq$:} Returns an edge $\edge \in \edgeset$ uniformly at random.
\end{itemize}



The \degreeq{}, \neighbourq{}, and \edgeexistsq{} queries, collectively known as local queries are well known in the query model~\cite{Goldreich/Book/2010/GraphPropertyTestingSurvey}. The \randedgeq{} query has been considered in more recent works~\cite{Aliakbarpour/Algorithmica/2018/CountingStarSUbgraphsEdgeSampling,AssadiKapralovKhanna/ITCS/2019/SimpleSUblinearSubgraph} and are similar to the edge sampling framework for graph sampling\cite{RibeiroTowsley/CDC/2012/DegreeDistributionGraphSampling,StumphWiuf/PhyiscalReview/2005/SamplingGraphsDegreeDist,LeskovecFaloutsos/KDD/2006/SamplingFromLargeGraphs}. This query is well motivated for large graphs as they are generally stored as edge files~\cite{snapnets}. We consider both non-adaptive and adaptive models for the graph property testing~\cite{GoldreichWigderson/FOCS/2022/NonAdaptivevsAdaptiveDenseGraphTesting}. In the non-adaptive model, all the queries made by the algorithm are pre-determined based on the input parameters, and in particular, does not depend on the subsequent queries. In the adaptive model, a query depends on the prior sequence of queries.
The objective in both cases is to output $\approxccdh$, an $\biapprox$-BMA of the true ccdh $\ccdh$ using as minimum queries as possible. 


\subsection{Our Results}
\label{ssec:ourresults}
In this work, we establish (almost) optimal bounds for the problem of estimating ccdh across both streaming and query models. \Cref{Tab: Results Table} sums up our results.

\begin{table}[ht!]
    \centering
    \begin{tabular}{|c||c|c|c|} 
    \hline \hline
                        & Query Model   & Streaming Model\\ \hline \hline
         Upper     & Adaptive: $\bigot{\min\fbrac{\vertexcount,\frac{\edgecount}{\approxran^2\hindex}+\frac{\edgecount}{\approxdom^2\hindex}}}$             & Multi-Pass: $\bigot{\min\fbrac{\vertexcount,\frac{\edgecount}{\approxran^2\hindex}+\frac{\edgecount}{\approxdom^2\hindex}}}$\\\cline{2-3}
         Bound    & Non-Adaptive: $\bigot{\min\fbrac{\vertexcount,\frac{\vertexcount}{\approxran^2\hindex}+\frac{\edgecount}{\approxdom^2\hindex}}}$             & Single-Pass: $\bigot{\min\fbrac{\vertexcount,\frac{\vertexcount}{\approxran^2\hindex}+\frac{\edgecount}{\approxdom^2\hindex}}}$\\\hline
         Lower      Bound &  Adaptive: $\bigomega{\min\fbrac{\vertexcount,\frac{\edgecount}{\hindex}}} $            & Multi-Pass: 
         $\bigomega{\min\fbrac{\vertexcount,\frac{\edgecount}{\hindex}}}$\\ \hline \hline
    \end{tabular}
    \vspace{10pt}
    \caption{Our Results. Here, $\approxerror = \max\fbrac{\approxdom,\approxran}$. Observe that $\vertexcount \geq \nicefrac{\edgecount}{\hindex}$, and thus the lower bound should be read as $\bigomega{\nicefrac{\edgecount}{\hindex}}$. We keep the table in its current form to make the presentation commensurate with Section~\ref{Section: Lower Bounds}.}
    \label{Tab: Results Table}
\end{table}

\subsubsection{Upper Bounds}
For the non-adaptive query and one-pass streaming setup, we propose a generic algorithm that requires $\widetilde{O}(\nicefrac{\vertexcount}{\approxran^2\hindex})$ vertices, and $\widetilde{O}(\nicefrac{\edgecount}{\approxdom^2\hindex})$ edges chosen uniformly at random to obtain an $\biapprox$-BMA of the ccdh. In the non-adaptive query (resp. one-pass streaming) model, this can be implemented using $\widetilde{O}(\nicefrac{\vertexcount}{\approxran^2\hindex}+\nicefrac{\edgecount}{\approxdom^2\hindex})$ queries (resp. space). Furthermore, for the adaptive query and two-pass streaming setup, we show that $\widetilde{O}(\nicefrac{\edgecount}{\approxran^2\hindex}+\nicefrac{\edgecount}{\approxdom^2\hindex})$ edges chosen uniformly at random suffice to obtain an $\biapprox$-BMA of the ccdh. In the non-adaptive query (resp. one-pass streaming) model, this can be implemented using $\widetilde{O}(\nicefrac{\vertexcount}{\approxran^2\hindex}++\nicefrac{\edgecount}{\approxdom^2\hindex})$ queries (resp. space). Note that the query complexity in this case is asymptotically better for the case of sparse graphs, i.e. when $\edgecount = o(\vertexcount)$.



\subsubsection{Lower Bounds}
For our lower bounds, in the streaming model, we consider the multi-pass insertion-only streams. In the query model, we consider adaptive algorithms using \degreeq{}, \neighbourq{}, \edgeexistsq{}, and $\randedgeq{}$ queries. We establish our lower bounds in two parts:

Firstly, we show that for any $\vertexcount \in \Nat$, there exists graphs with $\vertexcount$ vertices such that any query (resp. streaming) algorithm that computes an approximate ccdh requires $\bigomega{\vertexcount}$ queries (resp. space). This establishes the fact that there cannot be a sublinear (either time or space) algorithm to compute an approximate ccdh for arbitrary graphs.

Secondly, for any $\vertexcount, \edgecount, \hindex \in \Nat$, we show that any query (resp. streaming) algorithm that computes an approximate ccdh of a graph with $\vertexcount$ vertices, $\edgecount$ edges, and h-index $\hindex$ must use $\bigomega{{\edgecount}/{\hindex}}$ queries (resp. space). This makes our algorithms for the adaptive query and multi-pass streaming setup almost optimal for all graphs, and those for the non-adaptive query and single-pass streaming setup almost optimal for non-sparse graphs, i.e. when $\edgecount = \bigomega{\vertexcount}$.

\subsubsection{Our work vis-a-vis prior sublinear works~\cite{Eden/WWW/2017/CCDHinQueryModel, SeshadriMcGregor/ICDM/2015/CCDHStreamingEmpirical}} The only known theoretical guarantee prior to our work was a query based algorithm of~\cite{Eden/WWW/2017/CCDHinQueryModel} that gave a sublinear guarantee on the size of the graph subsample when the graph is sufficiently dense. 
They proposed an algorithm in the adaptive query model that computes a $\fbrac{\approxerror,\approxerror}$-BMA of ccdh using $\bigot{\nicefrac{\vertexcount}{\hindex}+\nicefrac{\edgecount}{z^2}}$ \degreeq{} and \neighbourq{} queries. Here $z$ is the z-index of the graph defined as $z = \min_{\{\degree{}|\ccdh\fbrac{\degree{}}>0\}}\sqrt{\degree{}\cdot\ccdh\fbrac{\degree{}}}$, and is (almost tightly) lower bounded as $z \geq \sqrt{\hindex}$. Thus their algorithm can also be seen to be using $\bigot{\nicefrac{\fbrac{\vertexcount+\edgecount}}{\hindex}}$ queries. Correspondingly, their algorithm yields a similar bound to that of the non-adaptive query and one-pass streaming algorithm presented in this work. Furthermore, as our lower bounds directly carry forward to the weaker model they considered, their algorithm is optimal for all graphs with $\edgecount = \bigomega{\vertexcount}$. Experimental results of~\cite{Eden/WWW/2017/CCDHinQueryModel} on real datasets indicate that they get accurate approximations for all values in the degree distribution by observing at most $1\%$ of the vertices as compared to previous works that needed $10\%-30\%$ of the total number of vertices\footnote{We have performed experiments on the same datasets as~\cite{Eden/WWW/2017/CCDHinQueryModel} and~\cite{SeshadriMcGregor/ICDM/2015/CCDHStreamingEmpirical}, and we get similar results. As the experiments section is not our main focus, we report some of the results in~\Cref{sec:appendix-experiment}.}. This significant improvement is better explained with our work as our lower bounds imply that the work of~\cite{Eden/WWW/2017/CCDHinQueryModel} was almost tight.

On the streaming (sublinear space) front, the only algorithm was due to~\cite{SeshadriMcGregor/ICDM/2015/CCDHStreamingEmpirical} but the algorithm had no guarantee.  Experimental results of~\cite{SeshadriMcGregor/ICDM/2015/CCDHStreamingEmpirical} on (almost) same datasets indicate that they also get accurate approximations for all values in the degree distribution with space less than $1\%$. Though we could not come up with an analysis of the algorithm of ~\cite{SeshadriMcGregor/ICDM/2015/CCDHStreamingEmpirical}, the consonance of the experimental results of~\cite{Eden/WWW/2017/CCDHinQueryModel} and ~\cite{SeshadriMcGregor/ICDM/2015/CCDHStreamingEmpirical} indicate that there is a high possibility of ~\cite{SeshadriMcGregor/ICDM/2015/CCDHStreamingEmpirical} also being almost tight. The open problems posed in~\cite{sublinearProblemEstimating} were as follows: 
\begin{itemize}
\item[Q1:] Can the upper bound of~\cite{Eden/WWW/2017/CCDHinQueryModel} be improved? Can one establish matching lower bounds? 
\item[Q2:] Can one obtain better upper bounds if only the high-degree (tail) part is to be learned?
\end{itemize}

\ifupd{
    Our results settle both the questions, as our lower bounds also hold for any algorithm that can learn an $\biapprox$-BMA of the CCDH of the top $\hindex$ high-degree vertices.
}
\else{
Our results settle the first question. 
}
\fi


\renewcommand{\arraystretch}{2}


\
\section{Technical Overview}\label{Section: Technical Overview}

In this section, we provide a high level overview of the main technical aspects of this work.

\subsection{Upper Bound}
\label{ssec:techoverview-ub}
Let us think of the ccdh distribution in two parts --- \emph{head} containing the lower values of degree, and \emph{tail} containing the higher values of degree. A few observations on the properties of ccdh are in order:
\begin{description}
\item[Property-1:] Because of the monotonically non-increasing nature of ccdh, the head has high values for ccdh, while the tail has low values. 
\item[Property-2:] The tail of the ccdh consists of vertices with relatively higher degrees.
\end{description}

We divide the ccdh distribution into these two parts via two thresholds. We define the head of the ccdh distribution to be those values of degree $\degree{}$ that have the corresponding $\ccdh\fbrac{\degree{}}$ values larger than a threshold $\ccdhthreshold$. In contrast, we define the tail to be the values of degree $\degree{}$ that are higher than a threshold $\degreethreshold$.

Higher value of ccdh for a degree $\degree{}$ indicates that given a random sample of vertices, the number of vertices with degree greater than or equal to $\degree{}$ is relatively high. Thus, we can obtain a good approximation of the ccdh values for the head parts through an appropriately sized random sample of vertices. If we were measuring the approximation in terms of some $\ell_p$ norm, that would have been sufficient to give us a good approximation. However, it does not satisfy the $\biapprox$-BMA guarantee we are aiming for.

The approximation error we have for the head is a multiplicative approximation. We have not yet leveraged the domain approximation relaxation. To obtain a domain approximation for the ccdh, the goal would be to obtain an approximation on the degrees of the contributing vertices. Because of Property-2 stated above, if we obtain a random sample of edges from the graph, the number of edges with tail vertices as one of the endpoints should be relatively high. We leverage this fact to obtain an approximation of the degrees of the high degree vertices through an appropriately sized random sample of edges. However, this is not sufficient to ensure a domain-wise approximation as the degrees of the vertices with low degrees might be overestimated. To address this issue, we show that the vertices with degrees much lower than the degree threshold $\degreethreshold$ for tail vertices, does not have an approximate degree higher than the same threshold. 

Finally, to establish our result, we combine the range-wise approximation for the head, and the domain-wise approximation for the tail to obtain an $\biapprox$-BMA for the entire ccdh spectrum. We show that setting both of the thresholds, $\ccdhthreshold$, and $\degreethreshold$ to a constant factor of the h-index $\hindex$ covers all values of $\degree{}$. We then devise techniques to implement these ideas in the streaming model as well as the query model to obtain our final algorithms.

In the adaptive and multi-pass streaming scenarios, we can sample active vertices directly through a random sample of edges. We leverage this idea to obtain a better algorithm for sparse graphs in these settings as only active vertices contribute meaningfully to ccdh.


\subsection{Lower Bounds}
\label{ssec:techoverview-lb}
For both streaming and query model lower bounds, we establish a reduction to approximate ccdh from the problem of set disjointness in communication complexity. For the streaming model, we appropriately generate edges from the inputs of Alice and Bob separately. For the query model, we use the framework proposed by~\cite{EdenRosenbaum/Approx/2018/LowerBoundGraphCommunication} to construct an appropriate embedding. 

The graph constructed for the reduction in both models has a broadly similar structure. For the general graph case, we establish the lower bound by constructing simple gadgets for each string element with the overall graph having h-index $1$. For the $\hindex$-index sensitive case, we use a graph construction similar to the gadgets with each vertex in the gadget is replaced by a set of vertices . The details are in Section~\ref{Section: Lower Bounds}. In the query model, we establish a way to simulate \randedgeq{} even when the degree of vertices does not remain same across all graph instances. 
\section{Algorithm}\label{Section: Algorithm}
Our algorithm for estimating ccdh for both the streaming and query model has a unified idea that is model independent. Thus, this algorithmic idea has possibilities of being implemented across other models of computation.
The overall description of this algorithm is given in~\Cref{Subsection: Overall Algorithm}, followed by its implementations in the streaming and the query model in~\Cref{Subsection: Streaming Algorithm} and~\Cref{Subsection: Query Algorithm}, respectively.



\subsection{The Model Agnostic Algorithm}\label{Subsection: Overall Algorithm}
We first present the overall idea of the algorithm that is model independent.
Our algorithm combines two types of information to approximate ccdh. The first one is to calculate the degrees of a random sample of the vertices, and the second one is to obtain a random subset of the edges. Let us denote by $\vertexsamplesize$ and $\edgesamplesize$ the size of the random sample of vertices and edges, respectively. We also denote $\vertexsample$ and $\edgesample$ to be the set consisting of the random sample of vertices and edges, respectively. Recall the discussion in~\Cref{ssec:techoverview-ub} about the \emph{head} and \emph{tail} parts of the ccdh distribution. The algorithm broadly works 
\begin{itemize}
\item by estimating $\approxccdh$ for values of $\degree{}$ where $\ccdh(\degree{})$ is higher than $\ccdhthreshold$, and 
\item estimating $\approxccdh$ for values of $\degree{}$ that are higher than $\degreethreshold$.
\end{itemize}
We initially assume that these two cases do not overlap and the algorithm has an oracle access to distinguish between these two cases. Later, we show how to implement the oracle and combine these two types of approximations to capture the entire ccdh. 


\begin{algorithm}
    \caption{Approx CCDH}\label{Algorithm: Generalized CCDH Approx}
    \begin{algorithmic}[1]
        \Require $\vertexcount,\edgecount,\degreethreshold,\ccdhthreshold,\approxdom,\approxran$, and $\ccdh\fbrac{\degree{}} < \ccdhthreshold, \forall \degree{} > \degreethreshold$
        \State $\vertexsample \gets$ Sample $\vertexsamplesize = \frac{\constant\vertexcount}{\ccdhthreshold\approxran^2}\log\vertexcount$ vertices u.a.r. with replacement and obtain their degrees \label{Line: General Algo Vertex Sample Size}
        \State $\degreesample \gets $ Degree of vertices in $\vertexsample$
        \State $\edgesample \gets$ Sample $\edgesamplesize = \frac{\constant\edgecount}{\degreethreshold\approxdom^2}\log\vertexcount$ edges u.a.r. \label{Line: General Algo Edge Sample Size}
        \State $\approxdegree{\vertex} = \frac{\edgecount\size{\sbrac{\edge \in \edgesample|\vertex \in \edge}}}{\edgesamplesize}, \forall \vertex \in \vertexset$\label{Line: General Algorithm Approx Degree} 
        \If{$\ccdh(\degree{}) \geq \ccdhthreshold$}
            \State $\approxccdh(\degree{}) \gets \frac{\vertexcount\size{\sbrac{\vertex \in \vertexsample|\degree{\vertex} \geq \degree{}}}}{\vertexsamplesize} =\frac{\vertexcount\size{\sbrac{\degree{i} \in \degreesample|\degree{i} \geq \degree{}}}}{\vertexsamplesize}$\label{Line: General Algorithm CCDH Based Approx}
        \EndIf
        \If{$\degree{} > \degreethreshold$}
            \State $\approxccdh(\degree{}) \gets \size{\sbrac{\vertex|\approxdegree{\vertex}\geq \degree{}}}$\label{Line: General Algorithm Degree Based Approx}
        \EndIf
    \end{algorithmic}
\end{algorithm}


The model agnostic algorithm is presented in~\Cref{Algorithm: Generalized CCDH Approx}, where u.a.r.~stands for uniformly at random. The choices of $\ccdhthreshold$ and $\degreethreshold$ would depend on $\hindex$-index in such a way that both~\Cref{Line: General Algorithm CCDH Based Approx} and \Cref{Line: General Algorithm Degree Based Approx} of \Cref{Algorithm: Generalized CCDH Approx} cannot be evaluated simultaneously.
Let us now focus on the approximation guarantees for~\Cref{Algorithm: Generalized CCDH Approx}. First, we show that the vertex sample obtained in Line~\ref{Line: General Algo Vertex Sample Size} allows us to obtain a good estimate $\approxccdh$ of $\ccdh\fbrac{\degree{}}$ when $\ccdh(\degree{}) \geq \ccdhthreshold$. The approximation criteria here is dependent only on $\approxran$.

\begin{lemma}[Approximation Guarantee for $\ccdh(\degree{})\geq\ccdhthreshold$]\label{Lemma: General Algorithm High CCDH degree Guarantee}
    For all degrees $\degree{}$ such that $\ccdh(\degree{}) \geq \ccdhthreshold$, Algorithm~\ref{Algorithm: Generalized CCDH Approx} returns $\approxccdh(\degree{})$ such that $\approxccdh(\degree{}) \in \tbrac{\fbrac{1-\approxran}\ccdh(\degree{}),\fbrac{1+\approxran}\ccdh(\degree{})}$ with high probability.
\end{lemma}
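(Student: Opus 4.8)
The plan is to recognize the estimator in Line~\ref{Line: General Algorithm CCDH Based Approx} as an unbiased sampling estimator of $\ccdh(\degree{})$ and to establish concentration via a multiplicative Chernoff bound, with the sample size $\vertexsamplesize$ chosen precisely so that the bound holds simultaneously for every degree with $\ccdh(\degree{}) \geq \ccdhthreshold$.

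First I would fix a degree $\degree{}$ with $\ccdh(\degree{}) \geq \ccdhthreshold$ and, for the $i$-th sampled vertex $\vertex_i$ (where $i \in \tbrac{\vertexsamplesize}$), define the indicator $X_i = \indicator\tbrac{\degree{\vertex_i} \geq \degree{}}$. Since the vertices are drawn uniformly at random with replacement, the $X_i$ are i.i.d.\ Bernoulli variables with success probability $p = \ccdh(\degree{})/\vertexcount$, as exactly $\ccdh(\degree{})$ of the $\vertexcount$ vertices have degree at least $\degree{}$ by Definition~\ref{Definition: CCDH}. Writing $X = \sum_{i} X_i$, the estimator is $\approxccdh(\degree{}) = \fbrac{\vertexcount/\vertexsamplesize} X$, so $\E\tbrac{\approxccdh(\degree{})} = \ccdh(\degree{})$ and the target event is exactly $\size{X - \E\tbrac{X}} \leq \approxran \E\tbrac{X}$.

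Next I would apply the multiplicative Chernoff bound, $\Prob\tbrac{\size{X - \E\tbrac{X}} \geq \approxran \E\tbrac{X}} \leq 2\exp\fbrac{-\approxran^2 \E\tbrac{X}/3}$. The crucial step is lower bounding $\E\tbrac{X} = \vertexsamplesize \ccdh(\degree{})/\vertexcount$: since $\ccdh(\degree{}) \geq \ccdhthreshold$ and $\vertexsamplesize = \frac{\constant\vertexcount}{\ccdhthreshold\approxran^2}\log\vertexcount$, we get $\E\tbrac{X} \geq \frac{\constant}{\approxran^2}\log\vertexcount$, which makes the Chernoff exponent $\Omega(\constant\log\vertexcount)$ and the failure probability for this single $\degree{}$ at most $\vertexcount^{-\Omega(\constant)}$.

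Finally, since the domain of $\ccdh$ is $\tbrac{\vertexcount}$, there are at most $\vertexcount + 1$ values of $\degree{}$ to consider, so a union bound over all of them, with $\constant$ taken large enough, drives the total failure probability below $\vertexcount^{-\constant'}$ and yields the high-probability guarantee for all qualifying degrees at once. I do not expect a genuine obstacle here, as this is a clean concentration argument; the only point needing care is verifying that the chosen $\vertexsamplesize$ simultaneously absorbs the $1/\approxran^2$ accuracy requirement and the extra $\log\vertexcount$ factor that the union bound demands, which is exactly why the sample size was set as it is in Line~\ref{Line: General Algo Vertex Sample Size}.
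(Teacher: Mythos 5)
Your proposal is correct and follows essentially the same route as the paper's proof: the same indicator variables for sampled vertices, the same unbiased estimator $\approxccdh(\degree{}) = \fbrac{\vertexcount/\vertexsamplesize}\sum_i X_i$, the same multiplicative Chernoff bound with the sample size absorbing the $1/\approxran^2$ and $\log\vertexcount$ factors, and the same union bound over all qualifying degrees. No gaps.
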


\begin{proof}
    For any degree $\degree{}$, for each vertex in $\vertexsample$, we assign a random variable $X_i, i \in [\vertexsamplesize]$ that takes value $1$ if the degree of the $i$-th vertex is $\geq \degree{}$, and $0$, otherwise. Then, we have for all degrees $\degree{}$ such that $\ccdh(\degree{}) \geq \ccdhthreshold$, $\Prob\tbrac{X_i = 1} = \frac{\ccdh(\degree{})}{\vertexcount} \geq \frac{\ccdhthreshold}{\vertexcount}$.
    We denote $X = \frac{1}{\vertexsamplesize} \sum_{i \in \vertexsamplesize} X_i$. Note that by line~\ref{Line: General Algorithm CCDH Based Approx} of Algorithm~\ref{Algorithm: Generalized CCDH Approx}, we have for all degrees $\degree{}$ such that $\ccdh(\degree{}) \geq \ccdhthreshold$, $\approxccdh(\degree{}) = \vertexcount X$. Then, by linearity of expectation and the fact that $X_i$ is a $\sbrac{0,1}$-random variable, we have:
    \begin{align*}
        \E\tbrac{X} = \E\tbrac{\frac{1}{\vertexsamplesize} \sum_{i \in \vertexsamplesize} X_i} = \frac{1}{\vertexsamplesize} \sum_{i\in\tbrac{\vertexsamplesize}} \E\tbrac{X_i} = \frac{1}{\vertexsamplesize} \sum_{i\in\tbrac{\vertexsamplesize}} \Prob\tbrac{X_i = 1} = \frac{1}{\vertexsamplesize} \sum_{i\in\tbrac{\vertexsamplesize}} \frac{\ccdh(\degree{})}{n} = \frac{\ccdh(\degree{})}{n}
    \end{align*}
    Consequently, for all degrees $\degree{}$ such that $\ccdh({\degree{})} \geq \ccdhthreshold$,
    \begin{align*}
        \E\tbrac{\approxccdh(\degree{})} = \E\tbrac{\vertexcount X} = \vertexcount \E\tbrac{X} = \ccdh(\degree{})
    \end{align*}
    Now, we use a multiplicative Chernoff bound to obtain:
    \begin{align*}
        \Prob\tbrac{\abs{\approxccdh(\degree{}) - \ccdh(\degree{})} \geq \approxran\ccdh(\degree{})}
       =\Prob\tbrac{\abs{X - \frac{\ccdh(\degree{})}{\vertexcount}} \geq \frac{\approxran\ccdh(\degree{})}{\vertexcount}}
    \leq2\exp\fbrac{-\frac{\vertexsamplesize\ccdhthreshold\approxran^2}{3\vertexcount}}\leq\frac{1}{\vertexcount^{\constant}}
    \end{align*}
    Using a union bound over all possible such degrees gives us the statement.
\end{proof}

Next, we show that for appropriately large degree values, we can approximate ccdh by calculating the approximate degree of these vertices using the edge samples obtained in Line~\ref{Line: General Algo Edge Sample Size}. For this, we first provide approximation guarantees for the degree of vertices with degree greater than $\degreethreshold$ (Lemma~\ref{Lemma: General Algorithm ApproxDegree High Degree Vertex Guarantee}). Next, we show that any vertex with degree less than $\fbrac{1-\approxdom}\degreethreshold$ cannot have its approximately estimated degree to be larger than $\degreethreshold$ (Lemma~\ref{Lemma: General Algorithm ApproxDegree Low Degree Vertex Guarantee}). Combining these two results, we establish that approximating ccdh using these approximate degrees as in Line~\ref{Line: General Algorithm Degree Based Approx} gives us an $\approxdom$-approximation of ccdh $\ccdh(\degree{})$ when $\degree{} > \degreethreshold$ (Lemma~\ref{Lemma: General Algorithm High Degree Vertex Guarantee}).

\begin{lemma}[Approximation Guarantee for $\degree{} > \degreethreshold$]\label{Lemma: General Algorithm ApproxDegree High Degree Vertex Guarantee}
    For all vertices $\vertex$ such that $\degree{\vertex} > \degreethreshold$, Algorithm~\ref{Algorithm: Generalized CCDH Approx} ensures $\approxdegree{\vertex} \in \tbrac{\fbrac{1-\approxdom}\degree{\vertex},\fbrac{1+\approxdom}\degree{\vertex}}$ with high probability.
\end{lemma}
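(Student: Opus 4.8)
The plan is to mirror the argument of \Cref{Lemma: General Algorithm High CCDH degree Guarantee}, but now applied to the edge sample $\edgesample$ rather than the vertex sample. Fix a vertex $\vertex$ with $\degree{\vertex} > \degreethreshold$. Since $\graph$ is simple, exactly $\degree{\vertex}$ of the $\edgecount$ edges are incident to $\vertex$, so a single edge drawn uniformly at random is incident to $\vertex$ with probability $\degree{\vertex}/\edgecount$. For $j \in \tbrac{\edgesamplesize}$ I would let $Y_j$ be the indicator that the $j$-th sampled edge is incident to $\vertex$, and set $Y = \sum_{j} Y_j = \size{\sbrac{\edge \in \edgesample|\vertex \in \edge}}$, so that $\approxdegree{\vertex} = \frac{\edgecount}{\edgesamplesize} Y$ by \Cref{Line: General Algorithm Approx Degree}.

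Because the edges are drawn u.a.r.\ (with replacement), the $Y_j$ are i.i.d.\ Bernoulli$\fbrac{\degree{\vertex}/\edgecount}$ variables, so by linearity $\E\tbrac{Y} = \edgesamplesize \cdot \degree{\vertex}/\edgecount$ and hence $\E\tbrac{\approxdegree{\vertex}} = \degree{\vertex}$. The event $\abs{\approxdegree{\vertex} - \degree{\vertex}} \geq \approxdom \degree{\vertex}$ is exactly the event $\abs{Y - \E\tbrac{Y}} \geq \approxdom \E\tbrac{Y}$, so the next step is to apply a multiplicative Chernoff bound to $Y$ with mean $\mu = \E\tbrac{Y}$. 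The key quantitative point is to lower bound $\mu$: substituting $\edgesamplesize = \frac{\constant \edgecount}{\degreethreshold \approxdom^2}\log \vertexcount$ from \Cref{Line: General Algo Edge Sample Size} gives $\mu = \frac{\constant \degree{\vertex} \log \vertexcount}{\degreethreshold \approxdom^2}$, and the hypothesis $\degree{\vertex} > \degreethreshold$ yields $\mu > \frac{\constant \log \vertexcount}{\approxdom^2}$. The Chernoff bound then gives
\begin{align*}
\Prob\tbrac{\abs{\approxdegree{\vertex} - \degree{\vertex}} \geq \approxdom \degree{\vertex}} \leq 2\exp\fbrac{-\frac{\mu \approxdom^2}{3}} \leq 2\exp\fbrac{-\frac{\constant \log \vertexcount}{3}} \leq \frac{1}{\vertexcount^{\constant'}},
\end{align*}
for a suitably large constant $\constant$, after which a union bound over all (at most $\vertexcount$) vertices with $\degree{\vertex} > \degreethreshold$ gives the claim with high probability.

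I do not expect a serious obstacle here, as the lemma reduces to a direct Chernoff concentration argument; the content is essentially in the choice of sample size. The only point requiring care is the independence of the sampled edges, which is what makes the multiplicative Chernoff bound applicable verbatim: this holds because the sampling is with replacement. If one instead insists on sampling without replacement, the same tail follows by invoking a Chernoff-type bound for negatively associated (hypergeometric) variables, leaving the rest of the argument unchanged. The crucial design choice is that $\degreethreshold$ sits in the denominator of $\edgesamplesize$, which is precisely what forces $\mu$ to be large enough ($\Omega(\log\vertexcount / \approxdom^2)$) for the tail vertices to concentrate.
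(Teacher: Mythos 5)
Your proposal is correct and follows essentially the same route as the paper's proof: indicator variables for sampled edges incident to the fixed vertex, expectation $\E[\approxdegree{\vertex}] = \degree{\vertex}$ by linearity, a multiplicative Chernoff bound using the lower bound on the mean coming from $\edgesamplesize = \frac{\constant\edgecount}{\degreethreshold\approxdom^2}\log\vertexcount$ together with $\degree{\vertex} > \degreethreshold$, and a union bound over vertices. Your write-up is in fact slightly more explicit than the paper's in spelling out the substitution of $\edgesamplesize$ and the resulting $\Omega(\log\vertexcount/\approxdom^2)$ lower bound on the mean, but the argument is the same.
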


\begin{proof}
    For any vertex $\vertex \in V$, and for each edge in $\edgesample$, we assign a random variable $Y_i$, $i \in \tbrac{\edgesamplesize}$ that takes value $1$ if the $i$-th edge is incident on $\vertex$, and $0$ otherwise. Then, we have for all vertices $\vertex$ such that $\degree{\vertex} \geq \degreethreshold$,$\Prob\tbrac{Y_i = 1} = \frac{\degree{\vertex}}{\edgecount} \geq \frac{\degreethreshold}{\edgecount}$.
    We denote $Y = \nicefrac{1}{\edgesamplesize} \sum_{i \in \tbrac{\edgesamplesize}} Y_i$. Note that by line~\ref{Line: General Algorithm Approx Degree}, for all vertices $\vertex \in \vertexset$, we have $\approxdegree{\vertex} = \edgecount Y$. Then, by linearity of expectation and the fact that $Y_i$ is a $\sbrac{0,1}$-random variable, we have:
    \begin{align*}
        \E\tbrac{Y} = \E\tbrac{\frac{1}{\edgesamplesize} \sum_{i \in \tbrac{\edgesamplesize}} Y_i} = \frac{1}{\edgesamplesize} \sum_{i \in \tbrac{\edgesamplesize}} \E\tbrac{Y_i} = \frac{1}{\edgesamplesize} \sum_{i \in \tbrac{\edgesamplesize}} \Pr\tbrac{Y_i = 1} = \frac{1}{\edgesamplesize} \sum_{i \in \tbrac{\edgesamplesize}} \frac{\degree{\vertex}}{\edgecount} = \frac{\degree{\vertex}}{\edgecount}
    \end{align*}
    
 Consequently, for all vertices $\vertex \in \vertexset$, we have:
    \begin{align*}
        \E\tbrac{\approxdegree{\vertex}} = \E\tbrac{\edgecount Y} = \edgecount \E\tbrac{Y} = \degree{\vertex}
    \end{align*}
    Now, we use a multiplicative Chernoff bound to obtain:
    \begin{align*}
        \Prob\tbrac{\abs{\approxdegree{\vertex}- \degree{\vertex}}\geq \approxdom\degree{\vertex}}
        =\Prob\tbrac{\abs{Y - \frac{\degree{\vertex}}{\edgecount}}\geq \frac{\approxdom\degree{\vertex}}{\edgecount }}
        \leq 2\exp\fbrac{-\frac{\edgesamplesize\degreethreshold\approxdom^2}{\edgecount}} \leq \frac{1}{\vertexcount^{\constant}}
    \end{align*}
Using a union bound over all such possible  vertices gives us the statement.
\end{proof}

\begin{lemma}\label{Lemma: General Algorithm ApproxDegree Low Degree Vertex Guarantee}
    For all vertices $\vertex$ such that $\degree{\vertex} \leq \degreethreshold(1-\approxdom)$, Algorithm~\ref{Algorithm: Generalized CCDH Approx} ensures $\approxdegree{\vertex} \leq \degreethreshold$ with high probability.
\end{lemma}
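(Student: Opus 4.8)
The plan is to reuse the random-variable setup from the proof of Lemma~\ref{Lemma: General Algorithm ApproxDegree High Degree Vertex Guarantee} and establish a one-sided (upper-tail) concentration bound. For a fixed vertex $\vertex$, let $Y_i$ be the indicator that the $i$-th sampled edge of $\edgesample$ is incident on $\vertex$, so that $\approxdegree{\vertex} = \frac{\edgecount}{\edgesamplesize}\sum_{i\in\tbrac{\edgesamplesize}} Y_i$ and $\E\tbrac{\approxdegree{\vertex}} = \degree{\vertex}$ as computed there. If $\degree{\vertex} = 0$ the estimate is identically $0$ and the claim is trivial, so assume $\degree{\vertex} \geq 1$. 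The event $\approxdegree{\vertex} > \degreethreshold$ is exactly $\sum_i Y_i > \frac{\edgesamplesize\degreethreshold}{\edgecount}$, i.e.\ the sum of the $\sbrac{0,1}$ variables exceeds its mean $\mu = \frac{\edgesamplesize\degree{\vertex}}{\edgecount}$ by the multiplicative factor $1+\delta = \frac{\degreethreshold}{\degree{\vertex}}$. The hypothesis $\degree{\vertex}\leq (1-\approxdom)\degreethreshold$ gives $\frac{\degreethreshold}{\degree{\vertex}} \geq \frac{1}{1-\approxdom}$, hence $\delta \geq \frac{\approxdom}{1-\approxdom} \geq \approxdom$.

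The main step is a multiplicative Chernoff bound on the upper tail of $\sum_i Y_i$, which I would split into two regimes according to the size of $\delta$ (equivalently, according to how far below $\degreethreshold$ the true degree sits). When $\degree{\vertex} \geq \degreethreshold/2$ we have $\delta \leq 1$, and the squared-deviation form yields $\Prob\tbrac{\sum_i Y_i \geq (1+\delta)\mu} \leq \exp\fbrac{-\mu\delta^2/3} \leq \exp\fbrac{-\frac{\edgesamplesize\degreethreshold\approxdom^2}{6\edgecount}}$, using $\mu \geq \frac{\edgesamplesize\degreethreshold}{2\edgecount}$ and $\delta\geq\approxdom$. When $\degree{\vertex} < \degreethreshold/2$ we instead have $\delta > 1$, where the squared form is invalid; here I would use the linear-exponent form $\exp\fbrac{-\mu\delta/3}$ and observe that $\mu\delta = \frac{\edgesamplesize}{\edgecount}(\degreethreshold - \degree{\vertex}) \geq \frac{\edgesamplesize\degreethreshold}{2\edgecount}$, giving $\exp\fbrac{-\frac{\edgesamplesize\degreethreshold}{6\edgecount}}$. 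Substituting $\edgesamplesize = \frac{\constant\edgecount}{\degreethreshold\approxdom^2}\log\vertexcount$ makes both bounds at most $\vertexcount^{-\constant/6}$ (in the second case using $\approxdom \leq 1$ to absorb the extra $\approxdom^{-2}$ factor).

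The delicate point---and the reason the two-regime split is needed---is precisely the low-degree vertices: the standard $\exp\fbrac{-\mu\delta^2/3}$ bound is only valid for $\delta\leq 1$, and when $\degree{\vertex}$ is much smaller than $\degreethreshold$ the required multiplicative overshoot $1+\delta$ is large, so one must fall back on the linear-exponent version (or, equivalently, an additive tail bound on $\sum_i Y_i$ around its small mean). A cleaner alternative I would consider is a monotonicity/coupling argument: the probability $\Prob\tbrac{\approxdegree{\vertex} > \degreethreshold}$ is non-decreasing in $\degree{\vertex}$, so it suffices to bound it at the worst case $\degree{\vertex} = (1-\approxdom)\degreethreshold$, where a single Chernoff application in the $\delta\leq 1$ regime suffices. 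Either way, a union bound over all $\vertexcount$ vertices turns the per-vertex failure probability $\vertexcount^{-\constant/6}$ into the stated high-probability guarantee.
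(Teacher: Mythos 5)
Your proof is correct, and its overall skeleton (indicator variables $Y_i$ for sampled edges incident on $\vertex$, a one-sided upper-tail bound on $\sum_i Y_i$ against the fixed threshold $\nicefrac{\edgesamplesize\degreethreshold}{\edgecount}$, then a union bound over vertices) matches the paper's. The difference lies in how the concentration step is executed for vertices whose degree is far below $\degreethreshold$, which you correctly identify as the delicate point: the paper delegates this to a standalone ``one-sided Chernoff-like bound'' (its Lemma~\ref{Lemma: One Sided Threshold Bound}), proved in the appendix by bucketing the success probability $p=\nicefrac{\degree{\vertex}}{\edgecount}$ into dyadic ranges and applying the $\exp\fbrac{-\nicefrac{t\approxerror^2\E\tbrac{X}}{2+\approxerror}}$ form of the Chernoff bound, whereas you bound the tail directly by splitting on the overshoot factor $1+\delta=\nicefrac{\degreethreshold}{\degree{\vertex}}$, using the $\exp\fbrac{-\mu\delta^2/3}$ form when $\delta\leq 1$ and the linear-exponent form $\exp\fbrac{-\mu\delta/3}$ when $\delta\geq 1$ (with the clean observation that $\mu\delta=\nicefrac{\edgesamplesize(\degreethreshold-\degree{\vertex})}{\edgecount}\geq\nicefrac{\edgesamplesize\degreethreshold}{2\edgecount}$ in that regime). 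Your route is arguably tighter in bookkeeping and, unlike the paper's appendix lemma, does not need the restriction $\approxdom\leq\nicefrac{1}{2}$; your alternative via stochastic monotonicity in $\degree{\vertex}$ is also valid, though note that reducing to the worst case $\degree{\vertex}=(1-\approxdom)\degreethreshold$ and staying in the $\delta\leq 1$ regime does implicitly require $\approxdom\leq\nicefrac{1}{2}$, the same restriction the paper's lemma imposes. Both approaches yield per-vertex failure probability polynomial in $\nicefrac{1}{\vertexcount}$ and conclude identically by a union bound.
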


\begin{proof}
    For any vertex $\vertex \in V$ and for each edge in $\edgesample$, we assign a random variable $Y_i$, $i \in \tbrac{\edgesamplesize}$ that takes value $1$ if the $i$-th edge is incident on $\vertex$, and $0$ otherwise. Then, we have for all vertices $\vertex$ such that $\degree{\vertex} \geq \degreethreshold$:
    \begin{align*}
        \Prob\tbrac{Y_i = 1} = \frac{\degree{\vertex}}{\edgecount} \leq \frac{\degreethreshold(1-\approxdom)}{\edgecount}
    \end{align*}
    
    We denote $Y = \frac{1}{\edgesamplesize} \sum_{i \in \tbrac{\edgesamplesize}} Y_i$. Note that by line~\ref{Line: General Algorithm Approx Degree}, for all vertices $\vertex \in \vertexset$, we have $\approxdegree{\vertex} = \edgecount Y$. Then, we have by Lemma~\ref{Lemma: One Sided Threshold Bound}:
    \begin{align*}
        \Pr\tbrac{\approxdegree{\vertex}>\degreethreshold} = \Pr\tbrac{Y \geq \frac{\degreethreshold}{\edgecount}} \leq \exp\fbrac{-\frac{\edgesamplesize\degreethreshold\approxdom^2}{\edgecount}} \leq \vertexcount^{\constant}
    \end{align*}
     
    Using a union bound over all such possible  vertices gives us the statement.
\end{proof}

\begin{lemma}\label{Lemma: General Algorithm High Degree Vertex Guarantee}
    For $\degree{} > \degreethreshold$, Algorithm~\ref{Algorithm: Generalized CCDH Approx} ensures $\approxccdh(\degree{}) \in \tbrac{\ccdh\fbrac{\fbrac{1+\approxdom}\degree{}},\ccdh\fbrac{\fbrac{1-\approxdom}\degree{}}}$ with high probability.
\end{lemma}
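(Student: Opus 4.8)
The plan is to deduce both endpoints of the interval $\tbrac{\ccdh\fbrac{\fbrac{1+\approxdom}\degree{}},\ccdh\fbrac{\fbrac{1-\approxdom}\degree{}}}$ from set-containment relations between the approximate super-level set $S \defeq \sbrac{\vertex : \approxdegree{\vertex} \geq \degree{}}$, whose size is exactly $\approxccdh(\degree{})$ by Line~\ref{Line: General Algorithm Degree Based Approx}, and the true super-level sets $\sbrac{\vertex : \degree{\vertex} \geq t}$, whose sizes are $\ccdh(t)$. Throughout I condition on the two high-probability events furnished by Lemma~\ref{Lemma: General Algorithm ApproxDegree High Degree Vertex Guarantee} and Lemma~\ref{Lemma: General Algorithm ApproxDegree Low Degree Vertex Guarantee}, which each already hold simultaneously over all vertices up to a union bound; a final union bound over the two events keeps the total failure probability at most $\vertexcount^{-\Omega(1)}$.

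For the lower bound I would show $\sbrac{\vertex : \degree{\vertex} \geq (1+\approxdom)\degree{}} \subseteq S$. Since $\degree{} > \degreethreshold$, any vertex with $\degree{\vertex} \geq (1+\approxdom)\degree{}$ already satisfies $\degree{\vertex} > \degreethreshold$, so Lemma~\ref{Lemma: General Algorithm ApproxDegree High Degree Vertex Guarantee} applies and gives $\approxdegree{\vertex} \geq (1-\approxdom)\degree{\vertex} \geq (1-\approxdom^2)\degree{} \geq \degree{}$, the last inequality being valid up to the constant rescaling discussed below. Hence $\vertex \in S$, and taking cardinalities yields $\approxccdh(\degree{}) = \size{S} \geq \ccdh((1+\approxdom)\degree{})$.

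For the upper bound I would show $S \subseteq \sbrac{\vertex : \degree{\vertex} \geq (1-\approxdom)\degree{}}$, i.e.\ every vertex with $\approxdegree{\vertex} \geq \degree{}$ has true degree at least $(1-\approxdom)\degree{}$, splitting according to $\degree{\vertex}$. If $\degree{\vertex} > \degreethreshold$, Lemma~\ref{Lemma: General Algorithm ApproxDegree High Degree Vertex Guarantee} gives $\degree{\vertex} \geq \approxdegree{\vertex}/(1+\approxdom) \geq \degree{}/(1+\approxdom) \geq (1-\approxdom)\degree{}$, where the final step is the exact inequality $1/(1+\approxdom) \geq 1-\approxdom$ and needs no rescaling. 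If instead $\degree{\vertex} \leq (1-\approxdom)\degreethreshold$, Lemma~\ref{Lemma: General Algorithm ApproxDegree Low Degree Vertex Guarantee} gives $\approxdegree{\vertex} \leq \degreethreshold < \degree{}$, contradicting $\vertex \in S$, so no such vertex lies in $S$. Cardinalities then give $\approxccdh(\degree{}) \leq \ccdh((1-\approxdom)\degree{})$.

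The main obstacle is the intermediate gap band $(1-\approxdom)\degreethreshold < \degree{\vertex} \leq \degreethreshold$, which is covered by neither Lemma~\ref{Lemma: General Algorithm ApproxDegree High Degree Vertex Guarantee} nor Lemma~\ref{Lemma: General Algorithm ApproxDegree Low Degree Vertex Guarantee}: I must argue that such a vertex cannot have $\approxdegree{\vertex} \geq \degree{} > \degreethreshold$ unless it anyway lies in $\sbrac{\vertex : \degree{\vertex} \geq (1-\approxdom)\degree{}}$, which amounts to re-running the one-sided Chernoff estimate of Lemma~\ref{Lemma: General Algorithm ApproxDegree Low Degree Vertex Guarantee} at the threshold $\degree{}$ rather than $\degreethreshold$. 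The second, routine-but-delicate, point is the book-keeping of multiplicative constants: the chain $(1-\approxdom)(1+\approxdom) = 1-\approxdom^2$ loses an $O(\approxdom^2)$ factor against the claimed endpoint in the lower bound; this is absorbed by running Lemmas~\ref{Lemma: General Algorithm ApproxDegree High Degree Vertex Guarantee}--\ref{Lemma: General Algorithm ApproxDegree Low Degree Vertex Guarantee} with $\approxdom$ replaced by a constant fraction of itself, so that the final interval is exactly $\tbrac{\ccdh((1+\approxdom)\degree{}),\ccdh((1-\approxdom)\degree{})}$.
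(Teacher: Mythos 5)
Your proposal is correct and follows essentially the same route as the paper's proof: it sandwiches the set $\sbrac{\vertex \,:\, \approxdegree{\vertex}\geq\degree{}}$ between the true super-level sets at $(1+\approxdom)\degree{}$ and $(1-\approxdom)\degree{}$ using Lemma~\ref{Lemma: General Algorithm ApproxDegree High Degree Vertex Guarantee} and Lemma~\ref{Lemma: General Algorithm ApproxDegree Low Degree Vertex Guarantee} together with a union bound. You are in fact somewhat more careful than the paper on two points it glosses over --- the band $(1-\approxdom)\degreethreshold < \degree{\vertex} \leq \degreethreshold$ covered by neither lemma, and the $(1-\approxdom)(1+\approxdom)=1-\approxdom^2$ slack in the lower-bound containment --- and your proposed fixes (re-running the one-sided Chernoff bound at threshold $\degree{}$ and rescaling $\approxdom$ by a constant factor) are the right ones and affect only constants in the sample sizes.
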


\begin{proof}
     By Lemma~\ref{Lemma: General Algorithm ApproxDegree High Degree Vertex Guarantee}, we have that for any degree $\degree{} \geq \degreethreshold$, for all vertices $\vertex$ such thate $\degree{\vertex} \geq \degree{}$, 
     \begin{align*}
         \approxdegree{\vertex} \in \tbrac{\fbrac{1-\approxdom}\degree{\vertex},\fbrac{1+\approxdom}\degree{\vertex}}
     \end{align*}
     We have by Line~\ref{Line: General Algorithm Degree Based Approx} of Algorithm~\ref{Algorithm: Generalized CCDH Approx}, for all degrees $\degree{} \geq \degreethreshold$:
     \begin{align*}
         \approxccdh(\degree{}) = \size{\sbrac{\vertex|\approxdegree{\vertex}\geq \degree{}}} 
     \end{align*}
     Now, by Lemmas~\ref{Lemma: General Algorithm ApproxDegree High Degree Vertex Guarantee}, ~\ref{Lemma: General Algorithm ApproxDegree Low Degree Vertex Guarantee}, and~\Cref{Definition: CCDH}, we have with high probability:
     \begin{align*}
         \size{\sbrac{\vertex|\approxdegree{\vertex}\geq \degree{}}} \leq \size{\sbrac{\vertex|\degree{\vertex}\geq \frac{\degree{}}{1+\approxdom}}} \leq \size{\sbrac{\vertex|\degree{\vertex}\geq \degree{}\fbrac{1-\approxdom}}} = \ccdh\fbrac{\degree{}(1-\approxdom)}
     \end{align*}
     Again, from Lemmas~\ref{Lemma: General Algorithm ApproxDegree High Degree Vertex Guarantee}, ~\ref{Lemma: General Algorithm ApproxDegree Low Degree Vertex Guarantee}, and~\Cref{Definition: CCDH}, we have with high probability:
     \begin{align*}
         \size{\sbrac{\vertex|\approxdegree{\vertex}\geq \degree{}}} \geq \size{\sbrac{\vertex|\degree{\vertex}\geq \frac{\degree{}}{1-\approxdom}}} \geq \size{\sbrac{\vertex|\degree{\vertex}\geq \degree{}\fbrac{1+\approxdom}}} = \ccdh\fbrac{\degree{}(1+\approxdom)}
     \end{align*}
     Combining these two bounds and a union bound over all such possible degrees, we have the desired result.
\end{proof}

Now, we combine the approximation guarantees for the cases when $\ccdh(\degree{}) \geq \ccdhthreshold$ (Lemma~\ref{Lemma: General Algorithm High CCDH degree Guarantee}) and the case when $\degree{} > \degreethreshold$ (Lemma~\ref{Lemma: General Algorithm High Degree Vertex Guarantee}) to obtain an approximation guarantee of ccdh for all values of degree $\degree{}$. 

\begin{theorem}\label{Theorem: General CCDH Algorithm Guarantee}
    For a graph $\graph = \fbrac{\vertexset,\edgeset}$ with h-index $\hindex$, and given access to $\approxhindex \in \tbrac{\nicefrac{\hindex}{2},\hindex}$, Algorithm~\ref{Algorithm: Generalized CCDH Approx} uses $\bigo{\nicefrac{\vertexcount}{\hindex\approxran^2}+(\nicefrac{\edgecount}{\hindex\approxdom^2})\log\vertexcount}$ samples and its output $\approxccdh$ is an $\biapprox$-BMA of the original $\ccdh$ with high probability.
\end{theorem}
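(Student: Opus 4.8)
The plan is to treat the two regime-lemmas, \Cref{Lemma: General Algorithm High CCDH degree Guarantee} (head) and \Cref{Lemma: General Algorithm High Degree Vertex Guarantee} (tail), as black boxes and reduce the theorem to two tasks: (i) fixing $\ccdhthreshold$ and $\degreethreshold$ so that the head and tail regimes jointly cover every degree while staying disjoint, and (ii) checking that each regime's guarantee is already strong enough to yield the $\biapprox$-BMA bound. First I would set $\ccdhthreshold = \approxhindex$ and take $\degreethreshold$ to be the crossover degree $\degreethreshold = \max\sbrac{\degree{} \mid \ccdh(\degree{}) \geq \ccdhthreshold}$. Since $\ccdh$ is monotonically non-increasing (\Cref{Definition: CCDH}), this choice makes the two branches partition $\tbrac{\vertexcount}$: for $\degree{} \leq \degreethreshold$ we have $\ccdh(\degree{}) \geq \ccdhthreshold$, so the head branch (\Cref{Line: General Algorithm CCDH Based Approx}) fires, and for $\degree{} > \degreethreshold$ we have $\ccdh(\degree{}) < \ccdhthreshold$, so only the tail branch (\Cref{Line: General Algorithm Degree Based Approx}) fires — which is exactly the precondition required by \Cref{Algorithm: Generalized CCDH Approx}. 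The test $\ccdh(\degree{})\geq\ccdhthreshold$ itself is supplied by the assumed oracle, whose model-specific realizations come later.

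The crucial quantitative point is to pin both thresholds to $\bigtheta{\hindex}$. Using the defining property of the h-index (\Cref{Definition: h-index}), namely $\ccdh(\hindex) \geq \hindex$, together with $\approxhindex \leq \hindex$, the degree $\hindex$ satisfies $\ccdh(\hindex) \geq \hindex \geq \approxhindex = \ccdhthreshold$; hence $\hindex$ lies in the defining set and $\degreethreshold \geq \hindex$. Combined with $\approxhindex \geq \hindex/2$, this gives $\ccdhthreshold = \bigtheta{\hindex}$ and $\degreethreshold \geq \hindex$, which is all the analysis needs from the thresholds.

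For task (ii), the monotonicity of $\ccdh$ does the work in both regimes. In the head regime, \Cref{Lemma: General Algorithm High CCDH degree Guarantee} gives $\approxccdh(\degree{}) \in \tbrac{(1-\approxran)\ccdh(\degree{}), (1+\approxran)\ccdh(\degree{})}$; since $\ccdh((1+\approxdom)\degree{}) \leq \ccdh(\degree{}) \leq \ccdh((1-\approxdom)\degree{})$, I can sandwich $\approxccdh(\degree{})$ between $(1-\approxran)\ccdh((1+\approxdom)\degree{})$ and $(1+\approxran)\ccdh((1-\approxdom)\degree{})$, which is precisely the inequality of \Cref{Definition: BiCriteria Multiplicative Approximation}. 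In the tail regime, \Cref{Lemma: General Algorithm High Degree Vertex Guarantee} already delivers $\approxccdh(\degree{}) \in \tbrac{\ccdh((1+\approxdom)\degree{}), \ccdh((1-\approxdom)\degree{})}$, and relaxing the endpoints by the harmless factors $(1-\approxran) \leq 1 \leq (1+\approxran)$ yields the same BMA bound. As each lemma holds with high probability after its own union bound over degrees, a final union bound over the two regimes keeps the overall failure probability at most $\bigo{1/\vertexcount^{\constant}}$.

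Finally, I would substitute the thresholds into the sample sizes of \Cref{Line: General Algo Vertex Sample Size} and \Cref{Line: General Algo Edge Sample Size}: the vertex sample has size $\vertexsamplesize = \frac{\constant\vertexcount}{\ccdhthreshold\approxran^2}\log\vertexcount = \bigo{\frac{\vertexcount}{\hindex\approxran^2}\log\vertexcount}$ using $\ccdhthreshold = \approxhindex \geq \hindex/2$, and the edge sample has size $\edgesamplesize = \frac{\constant\edgecount}{\degreethreshold\approxdom^2}\log\vertexcount = \bigo{\frac{\edgecount}{\hindex\approxdom^2}\log\vertexcount}$ using $\degreethreshold \geq \hindex$, giving the claimed total (the statement records the $\log\vertexcount$ factors). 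I expect the main obstacle to be task (i): correctly tying the two abstract thresholds to the only approximately known h-index so that the regimes partition all degrees while, simultaneously, $\degreethreshold$ is forced to be at least $\hindex$ — which is what collapses the edge term to $\edgecount/\hindex$ rather than something weaker — and reconciling this choice with what the oracle must decide. The BMA verification in task (ii) is then routine monotonicity bookkeeping.
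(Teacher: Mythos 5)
Your BMA bookkeeping (monotonicity turning the purely multiplicative head bound into the sandwich of \Cref{Definition: BiCriteria Multiplicative Approximation}), the pinning of the thresholds to $\Theta(\hindex)$ via $\ccdh(\hindex)\geq\hindex$, and the sample-size accounting all match the paper. The genuine gap is in how you fix $\degreethreshold$ and the branch decision. You set $\degreethreshold = \max\sbrac{\degree{} \mid \ccdh(\degree{}) \geq \approxhindex}$, the crossover degree, and leave the test ``$\ccdh(\degree{}) \geq \ccdhthreshold$?'' to an abstract oracle whose realization ``comes later.'' But the crossover degree is not computable from the only input the theorem grants, namely $\approxhindex \in \tbrac{\nicefrac{\hindex}{2},\hindex}$: the edge-sample size in \Cref{Line: General Algo Edge Sample Size} is $\frac{\constant\edgecount}{\degreethreshold\approxdom^2}\log\vertexcount$ and so cannot even be set (you can patch this by over-sampling with $\approxhindex$, but the branch decision cannot be patched the same way), and this theorem is precisely the place in the paper where the oracle assumption is supposed to be discharged. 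The later implementations (\Cref{Algorithm: Streaming CCDH Approx} and \Cref{Algorithm: Query CCDH Approx}) decide the branch by the computable rule ``$\degree{} \leq \approxhindex$?'', which does \emph{not} coincide with your oracle ``$\ccdh(\degree{}) \geq \approxhindex$?'' --- any degree $\degree{}$ with $\approxhindex < \degree{} \leq \degreethreshold$ satisfies the latter but not the former --- so if the theorem is proved your way, nothing in the paper (or in your proposal) ever implements the oracle you assume.

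The paper closes exactly this hole by collapsing both thresholds to $\approxhindex$ and deciding head vs.\ tail by ``$\degree{} \leq \approxhindex$'' vs.\ ``$\degree{} > \approxhindex$''. The head regime is then justified not by the crossover but by the chain $\ccdh(\degree{}) \geq \ccdh(\approxhindex) \geq \ccdh(\hindex) \geq \hindex \geq \approxhindex = \ccdhthreshold$ for every $\degree{} \leq \approxhindex$ (monotonicity plus \Cref{Definition: h-index}), so \Cref{Lemma: General Algorithm High CCDH degree Guarantee} applies with the computable vertex-sample size $\frac{\constant\vertexcount}{\approxhindex\approxran^2}\log\vertexcount$; and \Cref{Lemma: General Algorithm High Degree Vertex Guarantee} is applied to \emph{all} $\degree{} > \approxhindex$ with $\degreethreshold = \approxhindex \geq \nicefrac{\hindex}{2}$, giving the computable edge-sample size $\frac{\constant\edgecount}{\approxhindex\approxdom^2}\log\vertexcount$, within a factor $2$ of the claimed bounds. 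Your final worry --- ``reconciling this choice with what the oracle must decide'' --- is exactly the step you needed to carry out, and the resolution is to abandon the crossover threshold rather than to defer the oracle.
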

\begin{proof}
    Given $\approxhindex \in \tbrac{\nicefrac{\hindex}{2},\hindex}$, let ~\Cref{Algorithm: Generalized CCDH Approx} decides $\ccdh\fbrac{\degree{}} \geq \ccdhthreshold$ if $\degree{} \leq \approxhindex$, and $\degree{} > \degreethreshold$ if $\degree{} > \approxhindex$. Also, let the algorithm obtain $(\nicefrac{\constant \vertexcount}{\approxhindex\approxran^2})\log\vertexcount$ vertex samples, and $(\nicefrac{\constant \edgecount}{\approxhindex\approxdom^2})\log\vertexcount$ edge samples. 
    By the fact that $\approxhindex \in \tbrac{\nicefrac{\hindex}{2},\hindex}$, monotonicity of ccdh, and the definition of $\hindex$-index, for all degrees $\degree{} \leq \approxhindex$, we have:
    \begin{align*}
    \ccdh\fbrac{\degree{}} \geq \ccdh\fbrac{\approxhindex} \geq \ccdh\fbrac{\hindex} \geq \hindex
    \end{align*}
    Then, by~\Cref{Lemma: General Algorithm High CCDH degree Guarantee}, given $\fbrac{\nicefrac{\vertexcount}{\approxhindex\approxran^2}}\log\vertexcount \geq \fbrac{\nicefrac{\vertexcount}{\hindex\approxran^2}}\log\vertexcount$ vertex samples, we have with high probability for all degree $\degree{} \leq \approxhindex$:
    \begin{align}
        \approxccdh(\degree{}) \in \tbrac{\fbrac{1-\approxran}\ccdh(\degree{}),\fbrac{1+\approxran}\ccdh(\degree{})}\label{Equation: General Algorithm Head Capture}
    \end{align}
    Also, for all degrees $\degree{} > \approxhindex$, given $(\nicefrac{\constant \edgecount}{\approxhindex\approxdom^2})\log\vertexcount$ samples, by Lemma~\ref{Lemma: General Algorithm High Degree Vertex Guarantee}, we have with high probability:
    \begin{align}
        \approxccdh(\degree{}) \in \tbrac{\ccdh\fbrac{\fbrac{1+\approxdom}\degree{}},\ccdh\fbrac{\fbrac{1-\approxdom}\degree{}}}\label{Equation: General Algorithm Tail Capture}
    \end{align}
    Combining Equations~\ref{Equation: General Algorithm Head Capture} and~\ref{Equation: General Algorithm Tail Capture}, we have that \Cref{Algorithm: Generalized CCDH Approx} ensures $\approxccdh$ is a $\biapprox$-BMA of $\ccdh$ for all values of $\degree{}$ with high probability.
\end{proof}



\subsection{Implementation in the Streaming Model}\label{Subsection: Streaming Algorithm}

We now describe how to realize this algorithm in the streaming model. The key aspect is to obtain the requisite samples of vertices and edges obtained u.a.r.~ with replacement. For the degree samples, we apriori choose $\vertexsamplesize$ vertices uniformly at random with replacement from $\vertexset$, and count their degrees exactly throughout the stream. For the edge samples, we initialize $\edgesamplesize$ reservoir samplers~\cite{Vitter/TOMS/1985/ReservoirSampling} of size $1$ at the start of the stream, and update them throughout the stream.
\begin{minipage}{0.48\textwidth}
    \begin{algorithm}[H]
    \caption{CCDH - Streaming}\label{Algorithm: Streaming CCDH Approx}
    \begin{algorithmic}[1]
        \Require $\vertexcount,\edgecount,\approxhindex\in\tbrac{\frac{\hindex}{2},\hindex},\approxdom,\approxran$
        \Function{Initialize}{}
            \State $\vertexsamplesize \gets \frac{\vertexcount}{\approxhindex\approxran^2}\log\vertexcount$
            \State $\edgesamplesize \gets \frac{\edgecount}{\approxhindex\approxdom^2}\log\vertexcount$
            \State $\vertexsample \gets \vertexsamplesize$ i.i.d. samples from $\uniform\fbrac{[\vertexcount]}$
            \State $\degreesample \gets$ Zero-Array of length $\vertexsamplesize$ 
            \State $\edgesample \gets \edgesamplesize$ reservoir samplers of size $1$
        \EndFunction
        \Function{Update}{$\edge = \fbrac{\altvertex,\vertex}$}
            \For{$i \in \vertexsamplesize$}:
                \If{$\vertexsample[i] = \altvertex$ or $\vertexsample[i] = \vertex$}
                    \State $\degreesample[i] \gets \degreesample[i]+1$
                \EndIf
            \EndFor
            \State Update reservoir samplers $\edgesample$
        \EndFunction
        \Function{Estimate}{}
            \State $\approxdegree{\vertex} = \frac{2\edgecount\size{\sbrac{\edge \in \edgesample|\vertex \in \edge}}}{\edgesamplesize}, \forall \vertex \in \vertexset$
            \If{$\degree{} \leq \approxhindex$}
                \State $\approxccdh(\degree{}) \gets \frac{\vertexcount\size{\sbrac{\vertex \in \vertexsample|\degree{\vertex} \geq \degree{}}}}{\vertexsamplesize}$
            \EndIf
            \If{$\degree{} > \approxhindex$}
                \State $\approxccdh(\degree{}) \gets \size{\sbrac{\vertex|\approxdegree{\vertex}\geq \degree{}}}$
            \EndIf
        \EndFunction
    \end{algorithmic}
\end{algorithm}
\end{minipage}\hfill\vline\hfill
\begin{minipage}{0.48\textwidth}
    \begin{algorithm}[H]
    \caption{CCDH - Query}\label{Algorithm: Query CCDH Approx}
    \begin{algorithmic}[1]
        \Require $\vertexcount,\edgecount,\approxhindex \in \tbrac{\frac{\hindex}{2},\hindex},\approxdom,\approxran$
        \State $\vertexsamplesize \gets \frac{\vertexcount}{\approxhindex\approxran^2}\log\vertexcount$
        \State $\edgesamplesize \gets \frac{\edgecount}{\approxhindex\approxdom^2}\log\vertexcount$
        \State $\vertexsampleindex \gets \vertexsamplesize$ i.i.d. samples from $\uniform\fbrac{[\vertexcount]}$
        \State $\degreesample \gets \emptyset$
        \State $\edgesample \gets \emptyset$
        \State \textbackslash\textbackslash Making Queries
        \For{$i \in \vertexsampleindex$}
            \State $\degree{i} \gets \degreeq(i)$
            \State $\degreesample \gets \degreesample \cup \sbrac{\degree{i}}$
        \EndFor
        \For{$i \in \tbrac{\edgesamplesize}$}
            \State $\edgesample \gets \edgesample \cup \sbrac{\randedgeq{()}}$
        \EndFor
        \State \textbackslash\textbackslash Outputting the estimate
        \State $\approxdegree{\vertex} = \frac{2\edgecount\size{\sbrac{\edge \in \edgesample|\vertex \in \edge}}}{\edgesamplesize}, \forall \vertex \in \vertexset$
        \If{$\degree{} \leq \approxhindex$}
            \State $\approxccdh(\degree{}) \gets \frac{\vertexcount\size{\sbrac{\degree{i} \in \degreesample|\degree{i} \geq \degree{}}}}{\vertexsamplesize}$
        \EndIf
        \If{$\degree{} > \approxhindex$}
            \State $\approxccdh(\degree{}) \gets \size{\sbrac{\vertex|\approxdegree{\vertex}\geq \degree{}}}$
        \EndIf
    \end{algorithmic}
\end{algorithm}
\end{minipage}

\begin{theorem}\label{Theorem: Streaming CCDH Algorithm Guarantee}
    For a graph $\graph = \fbrac{\vertexset,\edgeset}$ with h-index $\hindex$, in the one-pass streaming model, ~\Cref{Algorithm: Streaming CCDH Approx} uses $\bigot{\nicefrac{\vertexcount}{\hindex\approxran^2}+\nicefrac{\edgecount}{\hindex\approxdom^2}}$ space, and its output $\approxccdh$ is an $\biapprox$-BMA of the original $\ccdh$ with high probability.
\end{theorem}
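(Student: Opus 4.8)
The plan is to reduce this theorem to the model-agnostic guarantee of \Cref{Theorem: General CCDH Algorithm Guarantee}, since \Cref{Algorithm: Streaming CCDH Approx} is simply a streaming implementation of \Cref{Algorithm: Generalized CCDH Approx}. That theorem already tells us that, given access to $\approxhindex \in \tbrac{\nicefrac{\hindex}{2},\hindex}$, as long as we can obtain $\vertexsamplesize = \bigot{\nicefrac{\vertexcount}{\approxhindex\approxran^2}}$ vertices drawn uniformly at random with replacement together with their \emph{exact} degrees, and $\edgesamplesize = \bigot{\nicefrac{\edgecount}{\approxhindex\approxdom^2}}$ edges drawn uniformly at random, the resulting $\approxccdh$ is an $\biapprox$-BMA of $\ccdh$ with high probability. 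Hence the whole burden is to verify that the one-pass procedure produces exactly these two samples and to account for its space; correctness of the estimator itself is inherited from \Cref{Theorem: General CCDH Algorithm Guarantee}.

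First I would argue correctness of the sampling. At initialization the algorithm fixes $\vertexsample$ as $\vertexsamplesize$ i.i.d.\ draws from $\uniform\fbrac{[\vertexcount]}$, independent of the stream; this is precisely the ``$\vertexsamplesize$ vertices chosen uniformly at random with replacement'' required above. Because \textsc{Update} increments $\degreesample[i]$ on every arriving edge incident to $\vertexsample[i]$, and a one-pass edge-arrival stream contains each edge of $\graph$ exactly once, at the end of the stream $\degreesample[i] = \degree{\vertexsample[i]}$ is exact, so the degree information fed to the estimator is identical to the model-agnostic setting. For the edges, each of the $\edgesamplesize$ size-$1$ reservoir samplers outputs, by the guarantee of~\cite{Vitter/TOMS/1985/ReservoirSampling}, an edge distributed uniformly over the $\edgecount$ stream elements; since the $\edgesamplesize$ samplers use independent internal randomness, they yield $\edgesamplesize$ mutually independent uniform edges, i.e.\ exactly $\edgesamplesize$ edges sampled uniformly at random with replacement, matching the second requirement. (The constant in $\approxdegree{\vertex}$ is fixed so that $\E\tbrac{\approxdegree{\vertex}} = \degree{\vertex}$ exactly as in \Cref{Lemma: General Algorithm ApproxDegree High Degree Vertex Guarantee}.) Invoking \Cref{Theorem: General CCDH Algorithm Guarantee} then gives that $\approxccdh$ is an $\biapprox$-BMA of $\ccdh$ with high probability.

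For the space bound, storing $\vertexsample$ and $\degreesample$ costs $\bigo{\vertexsamplesize\log\vertexcount}$ bits (each identifier and each degree counter needs $\bigo{\log\vertexcount}$ bits), and the $\edgesamplesize$ reservoir samplers cost $\bigo{\edgesamplesize\log\vertexcount}$ bits (one stored edge plus a stream-length counter each). Using $\approxhindex \geq \nicefrac{\hindex}{2}$ to replace $\approxhindex$ by $\hindex$ up to constant factors, the total is $\bigot{\nicefrac{\vertexcount}{\hindex\approxran^2} + \nicefrac{\edgecount}{\hindex\approxdom^2}}$, as claimed.

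The hard part will be the edge-sample step: one must ensure the $\edgesamplesize$ reservoir samplers deliver genuinely independent uniform samples, so that the ``with replacement'' assumption underlying the Chernoff arguments in \Cref{Lemma: General Algorithm ApproxDegree High Degree Vertex Guarantee} and \Cref{Lemma: General Algorithm ApproxDegree Low Degree Vertex Guarantee} is actually met. Everything else --- exactness of the pre-sampled degree counters (immediate, since every edge is seen), the space accounting, and the appeal to \Cref{Theorem: General CCDH Algorithm Guarantee} --- is routine, and notably no adaptivity is needed because the sampled vertices are chosen a priori, which is exactly why this implementation works in a single pass.
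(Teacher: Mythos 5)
Your proposal is correct and follows essentially the same route as the paper: verify that the pre-sampled vertices with exactly-counted degrees and the size-$1$ reservoir samplers supply the i.i.d.\ vertex and edge samples required by \Cref{Theorem: General CCDH Algorithm Guarantee}, account for the $\bigot{1}$ space per counter and sampler, and then invoke that theorem for the $\biapprox$-BMA guarantee. You simply spell out the sampling-correctness and independence details more explicitly than the paper's terse proof does, which is fine.
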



\begin{proof}
    As $\hindex = \bigo{\approxhindex}$, the algorithm uses $\bigo{(\nicefrac{\edgecount}{\hindex\approxdom^2})\log\vertexcount}$ reservoir samplers each of which uses $\bigot{1}$ space. Additionally, it uses $\bigo{(\nicefrac{\vertexcount}{\hindex\approxran^2})\log\vertexcount}$ counters to store the degrees of the selected vertices. 

    The samplers ensure that $\vertexsample$ and $\edgesample$ contain $(\nicefrac{\vertexcount}{\approxhindex\approxran^2})\log\vertexcount$ and $(\nicefrac{\edgecount}{\approxhindex\approxdom^2})\log\vertexcount$ i.i.d. samples of degrees and edges, respectively. Hence, the guarantee on the output $\approxccdh$ follows directly from Theorem~\ref{Theorem: General CCDH Algorithm Guarantee}
\end{proof}

Observe that this algorithm can be easily extended to the turnstile model. For the degree counts, in case of deletion of an edge, we subtract the degree of its associated vertices by $1$, if present. For obtaining the edge samples, as each edge occurs exactly once, it suffices to use an $\ell_0$ sampler to obtain u.a.r. sample from the edge set. Exact $\ell_0$ samplers succeeding with high probability can be implemented using $\bigo{\log^3\vertexcount}$ space~\cite{JowhariSauglamTardos/PODS/2011/TightBoundsLpSamplers}. Hence, the algorithm can be implemented using $\bigot{\nicefrac{\vertexcount}{\hindex\approxran^2}+\nicefrac{\edgecount}{\hindex\approxdom^2}}$ space in the turnstile model.


\subsection{Implementation in the Query Model}\label{Subsection: Query Algorithm}

In this section, we describe how to realize the algorithm in the query model. Here, we can directly use $\degreeq{}$ and $\randedgeq{}$ to construct $\vertexsample$ and $\edgesample$.

\begin{theorem}\label{Theorem: Query CCDH Algorithm Guarantee}
    For a graph $\graph = \fbrac{\vertexset,\edgeset}$ with $h$-index $\hindex$, ~\Cref{Algorithm: Query CCDH Approx} uses $\bigot{\nicefrac{\vertexcount}{\hindex\approxran^2}+\nicefrac{\edgecount}{\hindex\approxdom^2}}$ non-adaptive queries, and its output $\approxccdh$ is an $\biapprox$-BMA of the original $\ccdh$ with high probability.
\end{theorem}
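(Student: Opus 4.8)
The plan is to mirror the proof of Theorem~\ref{Theorem: Streaming CCDH Algorithm Guarantee}: I would show that the queries issued by Algorithm~\ref{Algorithm: Query CCDH Approx} are (i) $\bigot{\nicefrac{\vertexcount}{\hindex\approxran^2}+\nicefrac{\edgecount}{\hindex\approxdom^2}}$ in number, (ii) non-adaptive, and (iii) faithful realizations of exactly the vertex and edge samples that the model-agnostic Algorithm~\ref{Algorithm: Generalized CCDH Approx} consumes, so that the approximation guarantee is inherited wholesale from Theorem~\ref{Theorem: General CCDH Algorithm Guarantee}.

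First I would count the queries. The algorithm fixes $\vertexsamplesize = \fbrac{\nicefrac{\vertexcount}{\approxhindex\approxran^2}}\log\vertexcount$ vertex indices and issues one $\degreeq$ query per index, and it issues $\edgesamplesize = \fbrac{\nicefrac{\edgecount}{\approxhindex\approxdom^2}}\log\vertexcount$ $\randedgeq$ queries. Since $\approxhindex \in \tbrac{\nicefrac{\hindex}{2},\hindex}$, both $\vertexsamplesize$ and $\edgesamplesize$ are within a constant factor of their counterparts with $\hindex$ in the denominator, yielding the claimed total query bound. For non-adaptivity, I would observe that the $\vertexsamplesize$ indices $\vertexsampleindex$ are drawn i.i.d.\ from $\uniform\fbrac{\tbrac{\vertexcount}}$ at the outset, before any answer is seen, and that the $\edgesamplesize$ edge queries are plain oracle calls whose arguments carry no dependence on earlier answers. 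Hence the entire query set is determined by the input parameters and the algorithm's own coins, not by the outcomes of previous queries, which is precisely non-adaptivity.

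For correctness, the key observation is that the queries reproduce the abstract samples of Algorithm~\ref{Algorithm: Generalized CCDH Approx} exactly. Each $\degreeq(i)$ returns $\degree{i}$ with no error, so $\degreesample$ is the degree multiset of $\vertexsamplesize$ vertices drawn uniformly at random with replacement; by the definition of $\randedgeq$, the set $\edgesample$ consists of $\edgesamplesize$ i.i.d.\ uniform edges. The two estimator branches, the range-wise estimate for $\degree{} \leq \approxhindex$ and the degree-based estimate $\approxccdh(\degree{}) = \size{\sbrac{\vertex \mid \approxdegree{\vertex} \geq \degree{}}}$ for $\degree{} > \approxhindex$, then coincide with those of Algorithm~\ref{Algorithm: Generalized CCDH Approx} once one checks that $\approxdegree{\vertex}$ is normalized so that $\E\tbrac{\approxdegree{\vertex}} = \degree{\vertex}$. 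Invoking Theorem~\ref{Theorem: General CCDH Algorithm Guarantee} on these samples then delivers that $\approxccdh$ is an $\biapprox$-BMA of $\ccdh$ with high probability.

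The main obstacle is conceptual rather than computational: one must confirm that $\randedgeq$ genuinely delivers independent uniform edges, so that the Chernoff-based concentration underlying Lemmas~\ref{Lemma: General Algorithm ApproxDegree High Degree Vertex Guarantee}--\ref{Lemma: General Algorithm High Degree Vertex Guarantee} applies verbatim, and that the normalization constant in the definition of $\approxdegree{\vertex}$ keeps the estimator unbiased for $\degree{\vertex}$ (so that $\Prob\tbrac{Y_i = 1} = \nicefrac{\degree{\vertex}}{\edgecount}$ as in the generic analysis). Everything past that point reduces mechanically to the already-established Theorem~\ref{Theorem: General CCDH Algorithm Guarantee}, exactly as in the streaming case.
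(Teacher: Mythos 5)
Your proposal is correct and follows essentially the same route as the paper's proof: count the $\degreeq$ and $\randedgeq$ queries (using $\approxhindex = \Theta(\hindex)$), observe that all queries are fixed before any answers are seen, hence non-adaptive, and inherit the $\biapprox$-BMA guarantee directly from Theorem~\ref{Theorem: General CCDH Algorithm Guarantee}. The paper's proof is terser but identical in substance.
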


\begin{proof}
    As $\hindex = \bigo{\approxhindex}$, ~\Cref{Algorithm: Query CCDH Approx} uses $\bigot{\nicefrac{\vertexcount}{\hindex\approxran^2}}$ \degreeq{} queries, and uses $\bigot{\nicefrac{\edgecount}{\hindex\approxdom^2}}$ \randedgeq{} queries. Combined, the algorithm uses $\bigot{\nicefrac{\vertexcount}{\hindex\approxran^2}+\nicefrac{\edgecount}{\hindex\approxdom^2}}$ queries.
    The approximation guarantee again follows directly from Theorem~\ref{Theorem: General CCDH Algorithm Guarantee}.
\end{proof}

\section{Improved upper bounds}
\label{sec:sparsegraph}
In this section, we show that both streaming and query model algorithms can achieve complexity $\bigot{\nicefrac{\edgecount}{\hindex} \left(\nicefrac{1}{\approxdom^2}+\nicefrac{1}{\approxran^2}\right)}$ in terms of space and queries, respectively. However, the streaming algorithm presented here is two-pass and the query algorithm is adaptive with two rounds of adaptivity. In contrast, the algorithms in \Cref{Section: Algorithm} are one-pass and non-adaptive. If we ignore the number of passes and adaptivity, note that the algorithms in this section  are more efficient (compared to that in \Cref{Section: Algorithm}) in the case of sparse graphs, that is, when $m = o(n)$.

The first observation is that if $\edgecount = \Omega(\vertexcount)$, particularly when all vertices in the graph are active vertices, then the one-pass streaming and non-adaptive query algorithms in \Cref{Section: Algorithm} have complexities $\bigot{\nicefrac{\edgecount}{\hindex} \left(\nicefrac{1}{\approxdom^2}+\nicefrac{1}{\approxran^2}\right)}$ in terms of space and queries, respectively. Recall that, $\actverset$ denote the set of active vertices in the graph, and $\actvercount = |\actverset|$. Note that $\edgecount \geq \actvercount$. Moreover, only the active vertices contribute meaningfully to the ccdh distribution. Therefore, if we can efficiently sample active vertices instead of sampling uniformly from the entire vertex set, as was done in \Cref{Section: Algorithm}, we may be able to improve the bounds. Since the endpoints of edges returned by random edge queries are necessarily active, we exploit this fact to sample active vertices efficiently. 

Now, we  describe a procedure called \emph{Active Vertex Sampling}, which either samples an active vertex or reports $\perp$. In particular, in Active Vertex Sampling, we select an edge $\edge \in \edgeset$ uniformly at random, choose one of its endpoints (say $\vertex$) uniformly at random, and then report $\vertex$ with probability $1/\degree{\vertex}$, and $\perp$ with probability $1 - 1/\degree{\vertex}$. The procedure is formally described in~\Cref{Algorithm: Active Vertex Sampling} and its performance guarantees are stated in~\Cref{Lemma: Active Vertex Sampling Probability}.

\begin{algorithm}
    \caption{Active Vertex Sampling}
    \label{Algorithm: Active Vertex Sampling}
    \begin{algorithmic}[1]
        \State Select an edge $\edge \in \edgeset$ uniformly at random. \label{line:randedge}
        \State Let $\vertex$ be a uniformly random endpoint of $\edge$. \label{line:randvertex}
        \State With probability $\frac{1}{\degree{\vertex}}$, return $\vertex$; otherwise, return $\perp$. \label{line:output}
    \end{algorithmic}
\end{algorithm}

\begin{lemma}\label{Lemma: Active Vertex Sampling Probability}
    Algorithm~\ref{Algorithm: Active Vertex Sampling} reports either a vertex $\vertex \in \actverset$ with probability $\nicefrac{1}{2\edgecount}$ or $\perp$ is reported with probability $1-\nicefrac{\actvercount}{2\edgecount}$. Moreover, if a vertex $\vertex \in \actverset$ is reported, then it is selected uniformly at random from the set $\actverset$.
\end{lemma}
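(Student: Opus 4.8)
The plan is to compute, for an arbitrary fixed active vertex $\vertex \in \actverset$, the exact probability that Algorithm~\ref{Algorithm: Active Vertex Sampling} returns $\vertex$, and to observe that the degree bias introduced by edge sampling is precisely cancelled by the $\nicefrac{1}{\degree{\vertex}}$ acceptance probability in Line~\ref{line:output}.

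First I would decompose the event ``$\vertex$ is returned'' along the three stages of the algorithm. In Line~\ref{line:randedge} each of the $\edgecount$ edges is drawn with probability $\nicefrac{1}{\edgecount}$, and exactly $\degree{\vertex}$ of them are incident on $\vertex$. In Line~\ref{line:randvertex}, conditioned on having drawn an edge incident on $\vertex$, the endpoint $\vertex$ is selected with probability $\nicefrac{1}{2}$. Hence the probability that $\vertex$ is the vertex reaching Line~\ref{line:output} is $\degree{\vertex}\cdot\nicefrac{1}{\edgecount}\cdot\nicefrac{1}{2} = \nicefrac{\degree{\vertex}}{2\edgecount}$. Multiplying by the acceptance probability $\nicefrac{1}{\degree{\vertex}}$ (well defined since $\vertex \in \actverset$ forces $\degree{\vertex} \geq 1$) gives
\begin{align*}
    \Prob\tbrac{\text{return } \vertex} = \frac{\degree{\vertex}}{2\edgecount}\cdot\frac{1}{\degree{\vertex}} = \frac{1}{2\edgecount},
\end{align*}
which is independent of $\degree{\vertex}$; this cancellation is the crux of the argument.

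Next, summing over all $\actvercount$ active vertices (these events are disjoint, as at most one vertex is ever returned) yields a total success probability of $\nicefrac{\actvercount}{2\edgecount}$, so $\perp$ is returned with the complementary probability $1 - \nicefrac{\actvercount}{2\edgecount}$. For the uniformity claim, I would simply note that since every active vertex is returned with the identical probability $\nicefrac{1}{2\edgecount}$, conditioning on the event that some vertex is returned makes each active vertex equally likely, so the reported vertex is distributed uniformly over $\actverset$.

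As every step is an elementary probability computation, there is no genuine obstacle here; the only point deserving care is recognizing that rejecting with probability $1 - \nicefrac{1}{\degree{\vertex}}$ exactly debiases the degree-proportional sampling induced by choosing a uniformly random edge endpoint. It is also worth noting that inactive (degree-zero) vertices are never reached by Line~\ref{line:randvertex}, so the procedure correctly samples only from $\actverset$.
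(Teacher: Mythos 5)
Your proof is correct and follows essentially the same route as the paper's: fix an active vertex, multiply the probabilities $\nicefrac{\degree{\vertex}}{\edgecount}\cdot\nicefrac{1}{2}\cdot\nicefrac{1}{\degree{\vertex}}=\nicefrac{1}{2\edgecount}$, sum over disjoint events to get the $\perp$ probability, and conclude uniformity by conditioning. No issues.
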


\begin{proof}
From the description of \Cref{Algorithm: Active Vertex Sampling}, it is clear that the algorithm outputs either a vertex $\vertex \in \actverset$ or $\perp$. Fix a vertex $\vertex \in \actverset$. The probability that an edge incident to $\vertex$ is selected in \Cref{line:randedge} is $\degree{\vertex}/\edgecount$. Given that such an edge is chosen, the probability that $\vertex$ is selected in \Cref{line:randvertex} is $1/2$, and then it is kept in ~\Cref{line:output} with probability $1/\degree{\vertex}$. Hence, the probability that $\vertex$ is reported in \Cref{line:output} is $\degree{\vertex}/\edgecount \cdot 1/2 \cdot 1/\degree{\vertex} = 1/2\edgecount$. Since these events are mutually exclusive over all $\vertex \in \actverset$, the probability that the algorithm outputs a vertex in $\actverset$ is $ \actvercount/2\edgecount$. Therefore, the probability that the output is $\perp$ is $1 - \actvercount/2\edgecount$.

Since the algorithm reports some vertex in $\actverset$ with probability $\actvercount / 2\edgecount$ and reports a particular vertex $\vertex \in \actverset$ with probability $1 / 2\edgecount$, we can use conditional probability to claim that the probability that the output is $\vertex$ given that it is some vertex in $\actverset$ is $\frac{1 / 2\edgecount}{\actvercount / 2\edgecount} = \frac{1}{\actvercount}$.

\end{proof}

Similar to \Cref{Section: Algorithm}, here also we give a generic algorithm analogous to \Cref{Algorithm: Generalized CCDH Approx} with modifications only to \Cref{Line: General Algo Vertex Sample Size} and \Cref{Line: General Algorithm CCDH Based Approx} with the help of the procedure Active Vertex Sampling. Then we discuss the simulation of the 
modified version of \Cref{Algorithm: Generalized CCDH Approx}  in the streaming and query models. Recall that in \Cref{Line: General Algo Vertex Sample Size} of \Cref{Algorithm: Generalized CCDH Approx}, we generate $V_q$ which is $(\nicefrac{{\constant}\vertexcount}{\ccdhthreshold\approxran^2})\log\vertexcount$  number of vertices taken uniformly at random with replacement. In \Cref{Line: General Algorithm CCDH Based Approx}, we set $\approxccdh(\degree{})$ as $\frac{\vertexcount \cdot \left|\left\{ \vertex \in \vertexsample \mid \degree{\vertex} \geq \degree{} \right\}\right|}{\vertexsamplesize}$ when $\ccdh(\degree{}) \geq \ccdhthreshold$.

\begin{description}
    \item[Modification to \Cref{Line: General Algo Vertex Sample Size} of Algorithm~\ref{Algorithm: Generalized CCDH Approx}] 
    We invoke \emph{Active Vertex Sampling} for $q' = (\nicefrac{\constant\edgecount}{\ccdhthreshold\approxran^2}) \log \vertexcount$ times and let $V_{\vertexsamplesize'}$ be the multi-set of vertices obtained from these calls (by ignoring the $\perp$'s).

    \item[Modification to \Cref{Line: General Algorithm CCDH Based Approx} of Algorithm~\ref{Algorithm: Generalized CCDH Approx}] 
    We set $\approxccdh(\degree{})$ as $\frac{2\edgecount \cdot \left|\left\{ \vertex \in \vertexsample \mid \degree{\vertex} \geq \degree{} \right\}\right|}{\vertexsamplesize'}$.
\end{description}

The following lemma is \remove{about the gurantee we get due to the modification to \Cref{Algorithm: Generalized CCDH Approx} and it is} analogous to \Cref{Lemma: General Algorithm High CCDH degree Guarantee}.

\begin{lemma}[Approximation Guarantee for $\ccdh(\degree{})\geq\ccdhthreshold$]\label{Lemma: General Algorithm High CCDH degree Guarantee-modified}
    For all degrees $\degree{}$ such that $\ccdh(\degree{}) \geq \ccdhthreshold$, the above modified version of Algorithm~\ref{Algorithm: Generalized CCDH Approx} returns $\approxccdh(\degree{})$ such that $\approxccdh(\degree{}) \in [\fbrac{1-\approxran}\ccdh(\degree{}),$ $\fbrac{1+\approxran}\ccdh(\degree{})]$ with high probability.
\end{lemma}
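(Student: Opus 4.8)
The plan is to replay the proof of \Cref{Lemma: General Algorithm High CCDH degree Guarantee} almost verbatim, substituting the outputs of \emph{Active Vertex Sampling} for the uniform vertex samples and carefully tracking how the new success probability and the new rescaling factor $2\edgecount$ interact. Fix a degree $\degree{}$ with $\ccdh(\degree{}) \geq \ccdhthreshold$; since the domain of $\ccdh$ is $\tbrac{\vertexcount}$, every vertex counted in $\ccdh(\degree{})$ satisfies $\degree{\vertex} \geq \degree{} \geq 1$ and is therefore active, so $\ccdh(\degree{}) = \size{\sbrac{\vertex \in \actverset \mid \degree{\vertex} \geq \degree{}}}$. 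For each of the $\vertexsamplesize'$ \emph{independent} calls to \Cref{Algorithm: Active Vertex Sampling}, I would introduce a Bernoulli variable $X_i$, $i \in \tbrac{\vertexsamplesize'}$, equal to $1$ exactly when the $i$-th call returns a vertex $\vertex$ with $\degree{\vertex} \geq \degree{}$ (which forces the output to be an active vertex rather than $\perp$), and $0$ otherwise.

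The key computation is $\Prob\tbrac{X_i = 1}$ via \Cref{Lemma: Active Vertex Sampling Probability}: a single call reports each active vertex with probability exactly $\nicefrac{1}{2\edgecount}$, and these events are mutually exclusive across vertices, so
\begin{align*}
    \Prob\tbrac{X_i = 1} = \frac{\size{\sbrac{\vertex \in \actverset \mid \degree{\vertex} \geq \degree{}}}}{2\edgecount} = \frac{\ccdh(\degree{})}{2\edgecount} \geq \frac{\ccdhthreshold}{2\edgecount}.
\end{align*}
Setting $X = \sum_{i \in \tbrac{\vertexsamplesize'}} X_i$, the modified estimator is $\approxccdh(\degree{}) = \nicefrac{2\edgecount\, X}{\vertexsamplesize'}$, because $X$ is precisely the number of sampled vertices of degree at least $\degree{}$. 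Linearity of expectation then yields $\E\tbrac{\approxccdh(\degree{})} = \nicefrac{2\edgecount}{\vertexsamplesize'} \cdot \vertexsamplesize' \cdot \nicefrac{\ccdh(\degree{})}{2\edgecount} = \ccdh(\degree{})$, so the estimator is unbiased.

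For the concentration step, the choice $\vertexsamplesize' = \nicefrac{\constant\edgecount}{\ccdhthreshold\approxran^2}\log\vertexcount$ is exactly what is needed. Since the $X_i$ are independent $\sbrac{0,1}$ variables with $\mu \defeq \E\tbrac{X} = \vertexsamplesize' \cdot \nicefrac{\ccdh(\degree{})}{2\edgecount} \geq \vertexsamplesize' \cdot \nicefrac{\ccdhthreshold}{2\edgecount} = \nicefrac{\constant\log\vertexcount}{2\approxran^2}$, a multiplicative Chernoff bound gives
\begin{align*}
    \Prob\tbrac{\abs{\approxccdh(\degree{}) - \ccdh(\degree{})} \geq \approxran \ccdh(\degree{})} &= \Prob\tbrac{\abs{X - \mu} \geq \approxran \mu} \\
    &\leq 2\exp\fbrac{-\frac{\approxran^2 \mu}{3}} \leq 2\exp\fbrac{-\frac{\constant\log\vertexcount}{6}} \leq \frac{1}{\vertexcount^{\constant}},
\end{align*}
for $\constant$ large enough. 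A union bound over the at most $\vertexcount$ relevant degrees then finishes the argument.

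The two points that need care — and the main (if modest) obstacles — are the following. First, the doubled edge-count factor $2\edgecount$ in the estimator must cancel exactly against the $\nicefrac{1}{2\edgecount}$ reporting probability so that unbiasedness holds and the exponent $\approxran^2\mu/3$ collapses to $\Theta(\constant\log\vertexcount)$, just as in the unmodified lemma. Second, and more subtly, one must normalize by the \emph{number of calls} $\vertexsamplesize'$ rather than by the random number of non-$\perp$ vertices actually returned; this is what keeps $X$ a clean sum of $\vertexsamplesize'$ independent indicators and therefore amenable to the standard Chernoff bound, sidestepping the correlations that dividing by the realized sample size would introduce. Everything else is a faithful transcription of the proof of \Cref{Lemma: General Algorithm High CCDH degree Guarantee}.
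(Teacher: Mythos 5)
Your proof matches the paper's argument essentially verbatim: both define indicator variables over the $\vertexsamplesize'$ calls to Active Vertex Sampling, use \Cref{Lemma: Active Vertex Sampling Probability} to get $\Prob\tbrac{X_i=1}=\nicefrac{\ccdh(\degree{})}{2\edgecount}\geq\nicefrac{\ccdhthreshold}{2\edgecount}$, observe that the rescaling by $2\edgecount$ makes the estimator unbiased, and conclude with a multiplicative Chernoff bound plus a union bound over degrees. Your explicit handling of $\degree{}\geq 1$ (so only active vertices are counted) and normalizing by the number of calls rather than the realized sample size are exactly the points the paper's proof sketch relies on, so this is correct and the same approach.
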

\begin{proof}[Proof sketch]
     Let us focus the discussion on a particular degree $\degree{}$ such that $\degree{}\neq 0$. Note that the modified algorithm invokes Active Vertex Sampling $q'=\frac{\constant\edgecount}{\ccdhthreshold\approxran^2} \log \vertexcount$ times. For each $i \in [q']$, let  $X_i$ denotes a random variable that takes value $1$ if the $i$-th invocation to  Active Vertex Sampling returns a vertex with degree at least $d$ and $0$ otherwise.  Due to \Cref{Lemma: Active Vertex Sampling Probability}, any active vertex can be the output of any call to Active Vertex Sampling  with probability $1/2m$. Now, using the fact that there are $C(d)$ vertices of degree at least $d$, $\Prob\tbrac{X_i = 1} = \frac{\ccdh(\degree{})}{2\edgecount} \geq \frac{\ccdhthreshold}{2\edgecount}$. We denote $X = \frac{1}{\vertexsamplesize '} \sum_{i \in \vertexsamplesize '} X_i$. By linearity of expectation, $\E\tbrac{X} = \frac{C(d)}{2m}$.

 Since we set $\approxccdh(\degree{})=\frac{2\edgecount \cdot \left|\left\{ \vertex \in \vertexsample \mid \degree{\vertex} \geq \degree{} \right\}\right|}{\vertexsamplesize'}$  where $V_{\vertexsamplesize '}$ is the multiset of vertices we get from invocations fron Active Vertex Sample, we have $\approxccdh(\degree{})=2m\cdot X$.
    \begin{align*}
        \E\tbrac{\approxccdh(\degree{})} = \E\tbrac{2\edgecount \cdot X} = 2\edgecount \cdot  \E\tbrac{X} = \ccdh(\degree{}).
    \end{align*}
    Now, as in~\Cref{Lemma: General Algorithm High CCDH degree Guarantee}, we use the multiplicative Chernoff bound and apply the union bound over all possible degrees to get the desired statement.
\end{proof}

\subsection*{Implementation in the streaming and query models:}


Note that it suffices to show that Active Vertex Sampling (\Cref{Algorithm: Active Vertex Sampling}) can be implemented using $\widetilde{O}(1)$ space in the streaming model and $O(1)$ queries in the query model. This is because the $q' = \fbrac{\nicefrac{\constant \edgecount}{\ccdhthreshold \approxran^2}} \log \vertexcount$ invocations of Active Vertex Sampling can be run independently and in parallel. The implementation of \Cref{Algorithm: Active Vertex Sampling} essentially depends on obtaining an edge uniformly at random (\Cref{line:randedge}) and sampling one of the endpoints (\Cref{line:randvertex} and \Cref{line:output}). 

For the streaming setup, we use a two-pass algorithm. In the first pass, we use reservoir sampling, as in \Cref{Algorithm: Streaming CCDH Approx}, to obtain a uniform random edge. In the second pass, we apriori choose one endpoint of the sampled edge uniformly at random (\Cref{line:randvertex}). We then determine its degree $\degree{}$, which requires $O(\log \vertexcount)$ space. Finally, we output the chosen vertex with probability $\nicefrac{1}{\degree{}}$ (\Cref{line:output}). 

For the query setup, we obtain a uniform random edge (\Cref{line:randedge}) using one \randedgeq{} query. Then we pick one of its endpoints uniformly at random. We find the degree $\degree{}$ of the chosen vertex using a \degreeq{} query. Finally, we keep the vertex with probability $\nicefrac{1}{\degree{}}$. Note that two steps of adaptivity is sufficient to simulate Active Vertex Sampling in the query model.Hence, the discussion in this section can be summarized into the following theorems:

\begin{theorem}
There exists a two-pass streaming algorithm that takes $\approxdom, \approxran \in (0,1)$ as input, and outputs $\approxccdh$ that is a $\biapprox$-BMA of the original $\ccdh$ with high probability. Moreover, the algorithm uses $\widetilde{O}\left(\nicefrac{\edgecount}{\hindex} \left( \nicefrac{1}{\approxran^2} + \nicefrac{1}{\approxdom^2} \right)\right)$ space.
\end{theorem}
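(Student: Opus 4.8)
The plan is to assemble the two-pass algorithm exactly like the generic scheme behind \Cref{Theorem: General CCDH Algorithm Guarantee}, but with the \emph{head} estimate produced by the Active-Vertex-Sampling variant rather than by uniform vertex sampling, while keeping the \emph{tail} estimate unchanged. As in \Cref{Theorem: General CCDH Algorithm Guarantee}, I would assume access to an estimate $\approxhindex \in \tbrac{\nicefrac{\hindex}{2},\hindex}$ and set both thresholds $\ccdhthreshold = \degreethreshold = \approxhindex$, so that the cases $\degree{} \leq \approxhindex$ and $\degree{} > \approxhindex$ partition the degree axis. Monotonicity of ccdh together with the definition of the $\hindex$-index then gives $\ccdh(\degree{}) \geq \ccdh(\approxhindex) \geq \ccdh(\hindex) \geq \hindex \geq \approxhindex = \ccdhthreshold$ for every $\degree{} \leq \approxhindex$, so the head region is exactly where the range-wise guarantee applies and $\degree{} > \degreethreshold$ is exactly the tail region.

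For correctness I would invoke the two guarantees already in hand. For $\degree{} \leq \approxhindex$, \Cref{Lemma: General Algorithm High CCDH degree Guarantee-modified} shows that running Active Vertex Sampling $\vertexsamplesize' = \bigo{(\nicefrac{\edgecount}{\approxhindex \approxran^2})\log\vertexcount}$ times and reporting $\approxccdh(\degree{}) = \frac{2\edgecount\,\size{\sbrac{\vertex \in V_{\vertexsamplesize'} \mid \degree{\vertex}\geq\degree{}}}}{\vertexsamplesize'}$ yields $\approxccdh(\degree{}) \in \tbrac{(1-\approxran)\ccdh(\degree{}),(1+\approxran)\ccdh(\degree{})}$ with high probability. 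For $\degree{} > \approxhindex$, I would reuse the tail estimate verbatim from \Cref{Section: Algorithm}: sample $\edgesamplesize = \bigo{(\nicefrac{\edgecount}{\approxhindex\approxdom^2})\log\vertexcount}$ edges uniformly at random, form $\approxdegree{\vertex}$ exactly as in \Cref{Algorithm: Streaming CCDH Approx}, and apply \Cref{Lemma: General Algorithm High Degree Vertex Guarantee} to get $\approxccdh(\degree{}) \in \tbrac{\ccdh((1+\approxdom)\degree{}),\ccdh((1-\approxdom)\degree{})}$ with high probability. A union bound over these two families of events and over all degrees, precisely as in \Cref{Theorem: General CCDH Algorithm Guarantee}, certifies that $\approxccdh$ is a $\biapprox$-BMA of $\ccdh$.

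The remaining content, and the reason two passes are needed, is the space-efficient streaming realization. I would run all $\vertexsamplesize'$ copies of Active Vertex Sampling and all $\edgesamplesize$ tail edge samplers in parallel. In the first pass every copy maintains a single reservoir sampler capturing one uniformly random edge; for the tail samplers this alone suffices, since $\approxdegree{\vertex}$ is computed directly from the stored edge set without a second pass. Between the passes, each Active-Vertex-Sampling copy fixes a uniformly random endpoint $\vertex$ of its sampled edge; in the second pass it counts $\degree{\vertex}$ exactly and keeps $\vertex$ with probability $\nicefrac{1}{\degree{\vertex}}$ (dropping it otherwise), which is exactly \Cref{Algorithm: Active Vertex Sampling} whose correctness is \Cref{Lemma: Active Vertex Sampling Probability}. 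Each reservoir sampler and each degree counter uses $\bigot{1}$ space, so the total space is $\bigot{\vertexsamplesize' + \edgesamplesize} = \bigot{\nicefrac{\edgecount}{\hindex}(\nicefrac{1}{\approxran^2}+\nicefrac{1}{\approxdom^2})}$, using $\approxhindex = \Theta(\hindex)$.

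The main obstacle I anticipate is not the probabilistic analysis---that is already packaged in \Cref{Lemma: General Algorithm High CCDH degree Guarantee-modified} and \Cref{Lemma: General Algorithm High Degree Vertex Guarantee}---but verifying that the head estimator behaves correctly despite the variable-size multiset $V_{\vertexsamplesize'}$ produced by the $\perp$-rejections. The clean way around this is to count over all $\vertexsamplesize'$ invocations rather than over the surviving vertices: defining $X_i$ to be the indicator that the $i$-th invocation returns a vertex of degree at least $\degree{}$, the estimator equals $\frac{2\edgecount}{\vertexsamplesize'}\sum_{i} X_i$ with $\Prob\tbrac{X_i = 1} = \ccdh(\degree{})/2\edgecount$, so the $\perp$ outcomes are absorbed into the (correct) mean and a standard Chernoff bound goes through. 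The factor $2\edgecount$ replaces the factor $\vertexcount$ of the uniform-sampling estimator precisely because the per-invocation success probability is scaled by $\nicefrac{1}{2\edgecount}$ rather than $\nicefrac{1}{\vertexcount}$.
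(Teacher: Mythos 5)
Your proposal is correct and follows essentially the same route as the paper: the head estimate via Active Vertex Sampling with the $2\edgecount$ rescaling (the paper's \Cref{Lemma: General Algorithm High CCDH degree Guarantee-modified}), the unchanged tail estimate via uniform edge samples and \Cref{Lemma: General Algorithm High Degree Vertex Guarantee}, the threshold choice inherited from \Cref{Theorem: General CCDH Algorithm Guarantee}, and the two-pass realization (reservoir sampling in pass one, exact degree counting and the $\nicefrac{1}{\degree{\vertex}}$ acceptance in pass two) with all copies run in parallel at $\bigot{1}$ space each. Your remark on handling the $\perp$-rejections by indexing indicators over invocations rather than surviving vertices is exactly how the paper's proof sketch absorbs them into the mean.
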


\begin{theorem}
There exists an algorithm, having access to both \degreeq{} and \randedgeq{} query access to an unknown graph, that takes $\approxdom, \approxran \in (0,1)$ as input, and outputs $\approxccdh$ that is a $\biapprox$-BMA of the original $\ccdh$ with high probability. Moreover, the algorithm has two steps of adaptivity and makes $\widetilde{O}\left(\nicefrac{\edgecount}{\hindex} \left( \nicefrac{1}{\approxran^2} + \nicefrac{1}{\approxdom^2} \right)\right)$ queries to the graph.
\end{theorem}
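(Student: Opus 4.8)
The plan is to run the modified version of \Cref{Algorithm: Generalized CCDH Approx} (the one that replaces uniform vertex sampling by Active Vertex Sampling) and to realize it entirely through \degreeq{} and \randedgeq{} queries, arguing that everything fits into two rounds of adaptivity. Exactly as in \Cref{Theorem: General CCDH Algorithm Guarantee}, I would set both thresholds to a constant multiple of (an estimate of) the $\hindex$-index: declare $\ccdh(\degree{}) \geq \ccdhthreshold$ for $\degree{} \leq \approxhindex$ and $\degree{} > \degreethreshold$ for $\degree{} > \approxhindex$, with $\approxhindex \in \tbrac{\nicefrac{\hindex}{2},\hindex}$, so that $\ccdhthreshold, \degreethreshold = \Theta(\hindex)$. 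The head of the distribution (degrees with $\ccdh(\degree{}) \geq \ccdhthreshold$) is then estimated from $q' = \bigot{\nicefrac{\edgecount}{\hindex\approxran^2}}$ invocations of Active Vertex Sampling (\Cref{Algorithm: Active Vertex Sampling}), while the tail (degrees $\degree{} > \degreethreshold$) is estimated from $r = \bigot{\nicefrac{\edgecount}{\hindex\approxdom^2}}$ uniformly random edges, just as in the model-agnostic algorithm.

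For correctness, both regimes are covered by lemmas already established. The head estimate satisfies the range guarantee $\approxccdh(\degree{}) \in \tbrac{\fbrac{1-\approxran}\ccdh(\degree{}),\fbrac{1+\approxran}\ccdh(\degree{})}$ by \Cref{Lemma: General Algorithm High CCDH degree Guarantee-modified}, whose proof invokes \Cref{Lemma: Active Vertex Sampling Probability} to show each retained sample is a uniformly random active vertex and then applies a Chernoff bound. Crucially, the $\perp$-outputs are already absorbed into that analysis: each call contributes an indicator $X_i$ with $\E\tbrac{X_i} = \nicefrac{\ccdh(\degree{})}{2\edgecount} \geq \nicefrac{\ccdhthreshold}{2\edgecount}$, so choosing $q'$ proportional to $\nicefrac{\edgecount}{\ccdhthreshold\approxran^2}$ (rather than $\nicefrac{\vertexcount}{\ccdhthreshold\approxran^2}$ as for uniform sampling) keeps the effective number of positive samples, hence the Chernoff exponent, at $\Theta(\log\vertexcount)$. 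The tail estimate satisfies the domain guarantee $\approxccdh(\degree{}) \in \tbrac{\ccdh\fbrac{\fbrac{1+\approxdom}\degree{}},\ccdh\fbrac{\fbrac{1-\approxdom}\degree{}}}$ by \Cref{Lemma: General Algorithm High Degree Vertex Guarantee}, which carries over unchanged from \Cref{Section: Algorithm} since the tail still uses the random-edge-based degree estimates $\approxdegree{\vertex}$. Combining the two regimes over all $\degree{}$ exactly as in the proof of \Cref{Theorem: General CCDH Algorithm Guarantee} then gives that $\approxccdh$ is an $\biapprox$-BMA with high probability.

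For the implementation and the query count, a single call to Active Vertex Sampling is realized with one \randedgeq{} query (\Cref{line:randedge}), an unbiased coin to pick an endpoint $\vertex$ (\Cref{line:randvertex}), and one \degreeq{} query to learn $\degree{\vertex}$ so that $\vertex$ is kept with probability $\nicefrac{1}{\degree{\vertex}}$ (\Cref{line:output}); this is $O(1)$ queries and two steps of adaptivity. I would batch all $q'$ head edge queries together with the $r$ tail edge queries into the first round, and issue all $q'$ head degree queries in the second round, so the whole algorithm uses exactly two steps of adaptivity. The total query cost is $q' \cdot O(1) + r = \bigot{\nicefrac{\edgecount}{\hindex\approxran^2} + \nicefrac{\edgecount}{\hindex\approxdom^2}} = \bigot{\nicefrac{\edgecount}{\hindex}\fbrac{\nicefrac{1}{\approxran^2} + \nicefrac{1}{\approxdom^2}}}$, as claimed.

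The main point to check carefully is the interaction of the two-round restriction with the independence needed for the concentration bounds: the $q' + r$ edges drawn in the first round must be independent uniform edges, and each second-round \degreeq{} query must depend only on its own first-round edge, so that the retained active vertices remain i.i.d.\ and \Cref{Lemma: General Algorithm High CCDH degree Guarantee-modified} applies verbatim. This adaptivity bookkeeping is the only place where the round structure touches the probabilistic analysis; the remaining steps are a direct reassembly of the lemmas above.
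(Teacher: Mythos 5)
Your proposal is correct and follows essentially the same route as the paper: implement Active Vertex Sampling with one \randedgeq{} plus one \degreeq{} query per invocation (two rounds of adaptivity), estimate the head via \Cref{Lemma: General Algorithm High CCDH degree Guarantee-modified} with $q' = \bigot{\nicefrac{\edgecount}{\hindex\approxran^2}}$ invocations, keep the edge-sample-based tail estimate from \Cref{Lemma: General Algorithm High Degree Vertex Guarantee}, and combine as in \Cref{Theorem: General CCDH Algorithm Guarantee}. Your explicit batching of all first-round edge queries and second-round degree queries is a slightly more careful statement of the adaptivity bookkeeping than the paper gives, but the argument is the same.
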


\section{Lower Bounds}\label{Section: Lower Bounds}

In this section, we establish lower bounds for the problem of computing an $\biapprox$-BMA of ccdh. For the property testing model, we broadly use the framework proposed by~\cite{EdenRosenbaum/Approx/2018/LowerBoundGraphCommunication} and for streaming lower bounds, we use reductions from communication complexity.
\remove{
for establishing lower bounds for graph problems in the query model through reductions from communication complexity problems with known lower bounds. Our communication problems also naturally establish lower bounds for the problem in the streaming model.} 
While our algorithm in the query model uses $\degreeq{}$ and $\randedgeq{}$ queries only, our lower bounds hold for the adaptive setting as well with $\neighbourq{}$ and $\edgeexistsq{}$ queries.

In~\Cref{Subsection: General Graph Lower Bound}, we establish a $\bigomega{\vertexcount}$ lower bound for general graphs, which is tight given there exists a trivial $\bigo{\vertexcount}$ algorithm through querying the degree of each vertex in the query model, or storing the degree of each vertex in the streaming model. In~\Cref{Subsection: h-index Lower Bound}, we establish a $\bigomega{\nicefrac{\edgecount}{\hindex}}$ lower bound, which is almost optimal with respect to the algorithms we propose in the both the streaming and query model.


\subsection{Communication Problem}\label{Subsection: Communication General LB}

In this section, we introduce communication complexity briefly, with a particular focus on the set-disjointness problem. We show our lower bounds in both streaming and query model using reductions from set-disjointness .

In the two-party communication complexity setting, there are two parties: Alice and Bob. They wish to compute a function $\func{f}{\sbrac{0,1}^\stringlength\times\sbrac{0,1}^\stringlength}{\sbrac{0,1}}$. Alice is given an input string $\alicestring \in \sbrac{0,1}^\stringlength$. Bob is given $\bobstring \in \sbrac{0,1}^\stringlength$. While the parties know the function $f$, Alice has no knowledge of $\bobstring$, and Bob has no knowledge of $\alicestring$. They communicate bits according to a pre-specified protocol. The goal is to compute the value $f(\alicestring, \bobstring)$. In the randomized setting, the parties have access to a shared source of randomness. This is a common random string of arbitrary length which is often referred to as public randomness. A randomized protocol is said to compute $f$ if, for every input pair $(\alicestring, \bobstring)$, the correct value $f(\alicestring, \bobstring)$ is the output with probability at least $2/3$. The (randomized) communication complexity of the function $f$ is the minimum number of bits exchanged in the worst case, over all inputs, by any such (randomized) protocol that computes $f$. This is over all randomized protocols with access to public randomness that computes $f$. For more detailed exposition on communication complexity, the reader may refer to \cite{Kushilevitz_Nisan_1996, rao2020communication}. Now, we formally define the function of set-disjointness (denoted $\disjointness{}$):
\begin{definition}[Set-Disjointness($\disjointness{}$)]\label{Definition: Disjointedness Problem}
    Given two strings $\alicestring, \bobstring \in \sbrac{0,1}^\stringlength$, we define the $\disjointness{\stringlength}$ function as taking the value $\disjointness{\stringlength}(\alicestring,\bobstring) = 1$ if $\sum_{i \in [\stringlength]} \alicestring_i \bobstring_i = 0$, and taking the value $\disjointness{\stringlength}(\alicestring,\bobstring) = 0$ if $\sum_{i \in [\stringlength]} \alicestring_i \bobstring_i \ne 0$.
\end{definition}
In simpler terms, the strings denote membership of elements in a set. Given $\stringlength$ elements, $\alicestring_i = 1$ denotes the presence of the $i$-th element in the set. $\disjointness{}$ evaluates to $1$ if the corresponding sets of $\alicestring$ and $\bobstring$ are disjoint, and evaluates to $0$ otherwise.


We specifically consider the promise version of disjointness, $\promisedisjointness{\stringlength}$ where we are promised $\sum_i \alicestring_i\bobstring_i \in \sbrac{0,1}$. The following lemma quantifies the communication complexity of $\promisedisjointness{\stringlength}$.


\begin{lemma}[Hardness of $\promisedisjointness{\stringlength}$~\cite{Kushilevitz_Nisan_1996,rao2020communication}]\label{Lemma: Hardness of Promise Disjointness}
    The communication complexity of $\promisedisjointness{\stringlength}$ is $\bigomega{\stringlength}$.
\end{lemma}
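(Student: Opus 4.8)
The plan is to lower-bound the \emph{information complexity} of $\promisedisjointness{\stringlength}$ under a carefully chosen input distribution and then invoke the standard fact that information cost never exceeds communication cost, so that an $\bigomega{\stringlength}$ information bound immediately yields the claimed $\bigomega{\stringlength}$ communication bound. The central structural observation is that the intersection indicator decomposes coordinatewise: $\sum_{i\in[\stringlength]}\alicestring_i\bobstring_i$ is a sum of $\stringlength$ single-coordinate $\textsc{and}$'s. This decomposition is exactly what a \emph{direct-sum} argument exploits, reducing the $\stringlength$-coordinate problem to $\stringlength$ essentially independent copies of a one-bit primitive. Crucially, I would choose the hard distribution to be supported entirely on promise-respecting inputs, so the bound holds for the promise version directly (which is the stronger statement, since the protocol is only required to be correct on promise inputs).

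First I would fix the collapsing distribution. For each coordinate $i$ independently, draw an auxiliary variable $D_i\in\sbrac{\text{A},\text{B}}$ uniformly; if $D_i=\text{A}$ set $\bobstring_i=0$ and $\alicestring_i$ uniform in $\sbrac{0,1}$, otherwise set $\alicestring_i=0$ and $\bobstring_i$ uniform in $\sbrac{0,1}$. Under this distribution every coordinate satisfies $\alicestring_i\bobstring_i=0$, so the sampled pairs are disjoint and respect the promise; the role of the variables $D=(D_1,\dots,D_\stringlength)$ is to let us quantify how much a transcript reveals about a single coordinate. Writing $\Pi$ for the protocol transcript, I would take the information cost to be the conditional mutual information $I(\alicestring,\bobstring;\Pi\mid D)$ and show, via the chain rule together with the product structure of the distribution across coordinates, that it is at least $\stringlength$ times the information the transcript carries about the input pair of a single coordinate.

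The main step, and the hard part, is the single-coordinate lower bound: any protocol that correctly decides $\promisedisjointness{\stringlength}$ must reveal $\bigomega{1}$ bits about one coordinate's bits even under this all-disjoint distribution. Here I would invoke the \emph{cut-and-paste} property of communication protocols, which gives the Hellinger-distance identity $h(\Pi_{00},\Pi_{11}) = h(\Pi_{01},\Pi_{10})$ among the transcript distributions on the four single-bit inputs. Correctness forces the protocol to separate the intersecting input $(1,1)$ from the disjoint input $(0,0)$, so $h(\Pi_{00},\Pi_{11})=\bigomega{1}$; the identity then pushes $h(\Pi_{01},\Pi_{10})$ away from zero, and since the collapsing distribution places its mass on exactly the inputs $(1,0)$ and $(0,1)$ (as $D$ ranges over its two values), a triangle inequality together with the standard bound of conditional information below by squared Hellinger distance shows $I(\alicestring,\bobstring;\Pi\mid D)=\bigomega{1}$ per coordinate. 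Assembling the direct sum yields $I(\alicestring,\bobstring;\Pi\mid D)=\bigomega{\stringlength}$, and hence communication complexity $\bigomega{\stringlength}$.

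As an alternative route I would note Razborov's \emph{corruption} argument: under a distribution that mixes disjoint pairs with uniquely-intersecting pairs, every combinatorial rectangle of non-negligible measure that is biased toward ``disjoint'' must still contain an $\bigomega{1}$ fraction of intersecting inputs; since a $c$-bit protocol partitions the input space into $2^c$ rectangles that must be nearly monochromatic for a correct protocol, this forces $c=\bigomega{\stringlength}$. Both approaches place all their mass on promise-satisfying inputs, so either one establishes the stated $\bigomega{\stringlength}$ bound for $\promisedisjointness{\stringlength}$, which is all that the downstream reductions to approximate ccdh require.
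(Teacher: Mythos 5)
The paper does not actually prove this lemma: it is imported as a black box, with the $\bigomega{\stringlength}$ bound for the promise (unique-intersection) version of disjointness cited from the standard references. Your proposal, by contrast, reconstructs the proof itself, and what you sketch is essentially the Bar-Yossef--Jayram--Kumar--Sivakumar information-complexity argument (collapsing product distribution conditioned on auxiliary variables $D_i$, chain-rule direct sum over coordinates, per-coordinate bound via the cut-and-paste/Hellinger identity), with Razborov's corruption bound offered as an alternative. Both routes are correct and both genuinely apply to the promise version, which is exactly what the downstream reductions need, so your argument is sound at the level of a sketch. Two points deserve tightening if you were to write it out in full. First, your collapsing distribution on a single coordinate places mass on $(0,0)$ as well as on $(1,0)$ and $(0,1)$ (when $D_i=\text{A}$ the pair is $(0,0)$ or $(1,0)$; when $D_i=\text{B}$ it is $(0,0)$ or $(0,1)$), so the per-coordinate information is lower-bounded by the squared Hellinger distances $h^2(\Pi_{00},\Pi_{10})$ and $h^2(\Pi_{00},\Pi_{01})$, and the triangle inequality plus cut-and-paste then contradicts correctness on $(0,0)$ versus $(1,1)$; the phrase ``places its mass on exactly $(1,0)$ and $(0,1)$'' is not quite right. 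Second, the step from the full protocol to the single-coordinate primitive is not automatic from the chain rule alone: you need the embedding argument in which the players simulate the $\stringlength$-coordinate protocol on one live coordinate by sampling the remaining coordinates from the collapsing distribution (using $D$ and private randomness), and it is precisely because that distribution is supported on disjoint inputs that the embedded instance respects the promise. Neither issue is a conceptual gap, but both are where the actual work lives in the cited proofs.
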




\subsection{General Lower Bound}\label{Subsection: General Graph Lower Bound}

In this section, we establish the $\bigomega{\vertexcount}$ lower bounds for the general graph case across both models using a similar graph construction. We start with a brief overview of the construction:

    
    
    
    


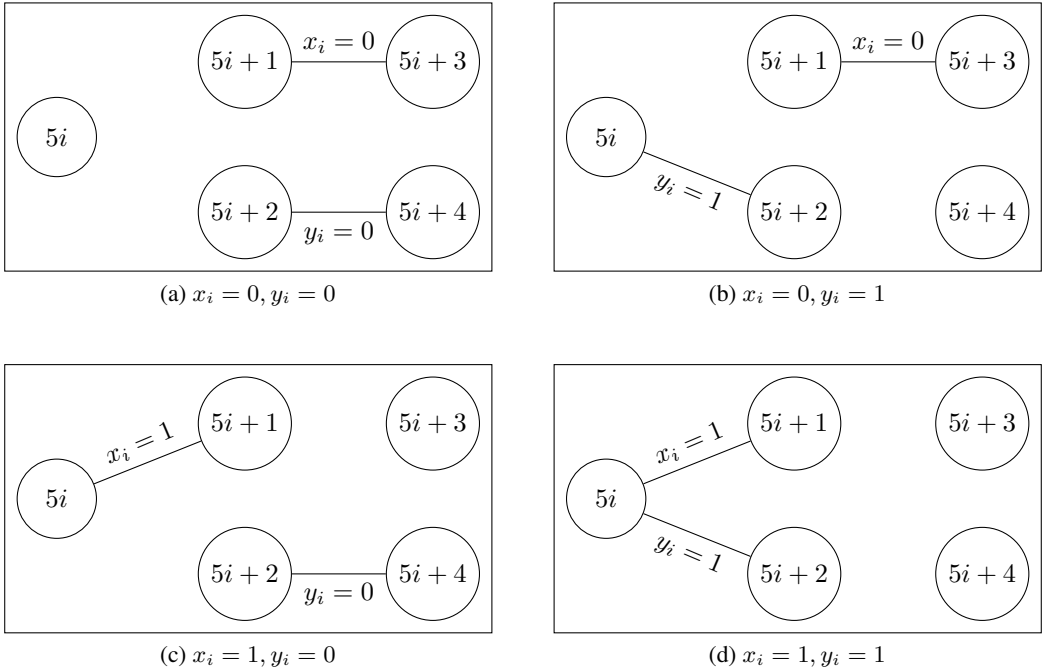
\begin{figure}[htbp]
\centering
\subfloat[$x_i = 0, y_i = 0$\label{fig:1a}]{
    \begin{tikzpicture}[framed]
        \node[shape=circle,draw=black,minimum height=3em] (A) at (0,0) {$5i$};
        \node[shape=circle,draw=black] (B) at (2.5,1) {$5i+1$};
        \node[shape=circle,draw=black] (C) at (2.5,-1) {$5i+2$};
        \node[shape=circle,draw=black] (D) at (5,1) {$5i+3$};
        \node[shape=circle,draw=black] (E) at (5,-1) {$5i+4$};
        \path [-](B) edge node[midway,above] {$x_i = 0$} (D);
        \path [-](C) edge node[midway,below] {$y_i = 0$} (E);
    \end{tikzpicture}
}\hfill
\subfloat[$x_i = 0, y_i = 1$\label{fig:1b}] {
    \begin{tikzpicture}[framed]
        \node[shape=circle,draw=black,minimum height=3em] (A) at (0,0) {$5i$};
        \node[shape=circle,draw=black] (B) at (2.5,1) {$5i+1$};
        \node[shape=circle,draw=black] (C) at (2.5,-1) {$5i+2$};
        \node[shape=circle,draw=black] (D) at (5,1) {$5i+3$};
        \node[shape=circle,draw=black] (E) at (5,-1) {$5i+4$};
        \path [-](A) edge node[midway,below,rotate=-23] {$y_i = 1$} (C);
        \path [-](B) edge node[midway,above] {$x_i = 0$} (D);
    \end{tikzpicture}
}\\\vspace{10pt}
\subfloat[$x_i = 1, y_i = 0$\label{fig:1c}]{
    \begin{tikzpicture}[framed]
        \node[shape=circle,draw=black,minimum height=3em] (A) at (0,0) {$5i$};
        \node[shape=circle,draw=black] (B) at (2.5,1) {$5i+1$};
        \node[shape=circle,draw=black] (C) at (2.5,-1) {$5i+2$};
        \node[shape=circle,draw=black] (D) at (5,1) {$5i+3$};
        \node[shape=circle,draw=black] (E) at (5,-1) {$5i+4$};
        \path [-] (A) edge node[midway,above,rotate=23] {$x_i = 1$} (B);
        \path [-](C) edge node[midway,below] {$y_i = 0$} (E);
    \end{tikzpicture}
}\hfill
\subfloat[$x_i = 1, y_i = 1$\label{fig:1d}]{
    \begin{tikzpicture}[framed]
        \node[shape=circle,draw=black,minimum height=3em] (A) at (0,0) {$5i$};
        \node[shape=circle,draw=black] (B) at (2.5,1) {$5i+1$};
        \node[shape=circle,draw=black] (C) at (2.5,-1) {$5i+2$};
        \node[shape=circle,draw=black] (D) at (5,1) {$5i+3$};
        \node[shape=circle,draw=black] (E) at (5,-1) {$5i+4$};
        \path [-] (A) edge node[midway,above,rotate=23] {$x_i = 1$} (B);
        \path [-](A) edge node[midway,below,rotate=-23] {$y_i = 1$} (C);
    \end{tikzpicture}
}
\caption{Gadget for General Lower Bound}\label{Figure: Gadget for Lower Bound General Graph}
\ifarxiv{

}
\else{
\Description[]{}
}
\fi
\end{figure}


Broadly, we associate five vertices denoted $5i, \, 5i+1, \,\ldots, \, 5i+4$, with each string index $i \in \tbrac{\stringlength}$. If $\alicestring_i = 1$, we assign an edge between $5i$ and $5i+1$. If $\alicestring_i = 0$, we assign an edge between $5i+1$ and $5i+3$. Correspondingly, if $\bobstring_i = 1$, we assign an edge between $5i$ and $5i+2$, and if $\bobstring_i = 0$, we assign an edge between $5i+2$ and $5i+4$. Refer to Figure~\ref{Figure: Gadget for Lower Bound General Graph} for visualizing such a gadget. Observe that if $\promisedisjointness{\stringlength}\fbrac{\alicestring,\bobstring} = 1$, then all vertices of the graph have degree either $1$ or $0$, otherwise, there exists exactly $1$ vertex with degree $2$, while all other vertices have degree either $1$ or $0$. The following results formalize the lower bounds under this settings for the query and streaming models. The proofs are provided in Appendix~\ref{Appendix: General LB}.

\begin{restatable}[]{theorem}{StrmGenLB}\label{Theorem: Streaming General LB}
    For any $\vertexcount \in \fN$, there exists graphs $\graph$ with $\vertexcount$ vertices such that any algorithm in the one-pass streaming model computing $\biapprox$-BMA of ccdh of $\graph$ for any $\approxdom < \frac{1}{2}, \approxran < 1$ must use $\bigomega{\vertexcount}$ space.
\end{restatable}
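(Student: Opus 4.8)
The plan is to reduce from the promise set-disjointness problem $\promisedisjointness{\stringlength}$, whose randomized communication complexity is $\bigomega{\stringlength}$ by \Cref{Lemma: Hardness of Promise Disjointness}, using the gadget construction depicted in \Cref{Figure: Gadget for Lower Bound General Graph}. Given inputs $\alicestring,\bobstring \in \{0,1\}^{\stringlength}$, the graph $\graph$ is built on the $5\stringlength$ vertices $\{5i,\ldots,5i+4 : i \in [\stringlength]\}$, where Alice owns the edges determined by $\alicestring$ (namely $(5i,5i+1)$ when $\alicestring_i=1$ and $(5i+1,5i+3)$ otherwise) and Bob owns the edges determined by $\bobstring$. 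Taking $\stringlength = \lfloor \vertexcount/5 \rfloor$ and padding with isolated vertices yields a graph on exactly $\vertexcount$ vertices, so $\stringlength = \Theta(\vertexcount)$; note also that this graph has $\edgecount = 2\stringlength = \Theta(\vertexcount)$ edges and $\hindex$-index $1$, placing it squarely in the general-graph regime.

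The crux is to turn the BMA guarantee into a decision procedure for disjointness. The structural fact recorded above is that, under the promise, the two cases are separated at a single degree value: if $\promisedisjointness{\stringlength}(\alicestring,\bobstring)=1$ every vertex has degree in $\{0,1\}$, hence $\ccdh(2)=0$, whereas if $\promisedisjointness{\stringlength}(\alicestring,\bobstring)=0$ there is exactly one vertex of degree $2$, hence $\ccdh(2)=1$. I would therefore show that any $\biapprox$-BMA $\approxccdh$ lets the players decide which case holds by inspecting $\approxccdh$ near the threshold degree $2$: monotonicity forces the upper envelope $(1+\approxran)\ccdh((1-\approxdom)x)$ to vanish in the disjoint case, while the lower envelope $(1-\approxran)\ccdh((1+\approxdom)x)$ is strictly positive in the intersecting case, for an appropriately chosen evaluation point $x$ between degree $1$ and degree $2$. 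Making these two admissible intervals disjoint is exactly where the hypotheses $\approxdom<\tfrac12$ and $\approxran<1$ enter, since the domain relaxation $(1\pm\approxdom)$ must not conflate the degree-$1$ and degree-$2$ levels. This comparison of the two envelopes, and the precise range of $\approxdom$ it permits, is the main obstacle and the part demanding the most care.

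Finally I would run the standard one-pass streaming-to-communication simulation. Alice runs the purported streaming algorithm on her edges and transmits the $s$-bit memory contents to Bob, who resumes the stream on his edges and reads off $\approxccdh$; by the previous paragraph they correctly output $\promisedisjointness{\stringlength}(\alicestring,\bobstring)$ with probability at least $2/3$, with the algorithm's coins supplied by public randomness and its high-probability correctness comfortably exceeding $2/3$. This is a one-way protocol of cost $s$, so $\Cref{Lemma: Hardness of Promise Disjointness}$ gives $s = \bigomega{\stringlength} = \bigomega{\vertexcount}$, which is the claimed space lower bound. The construction is model-agnostic enough that the same gadget drives the query-model lower bound (via the embedding framework referenced in the overview), and concatenating the per-pass messages will later extend it to the multi-pass setting; for the present one-pass statement a single Alice-to-Bob message suffices.
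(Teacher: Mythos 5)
Your proposal follows essentially the same route as the paper's proof: the identical per-index gadget from Figure~\ref{Figure: Gadget for Lower Bound General Graph}, the reduction from $\promisedisjointness{\stringlength}$ with Alice and Bob emitting their respective halves of the edge stream, the observation that the two promise cases are separated by whether any vertex has degree $2$ (so $\approxccdh$ near degree $2$ decides disjointness, which is where $\approxdom<\nicefrac{1}{2}$ and $\approxran<1$ enter), and the standard one-pass memory-forwarding simulation yielding $\streamspace=\bigomega{\stringlength}=\bigomega{\vertexcount}$. The only cosmetic differences are your explicit padding with isolated vertices and your slightly more cautious phrasing about the evaluation point, where the paper simply inspects $\approxccdh(2)$.
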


\begin{restatable}[]{theorem}{QueryGenLB}\label{Theorem: Query General LB}
    For any $\vertexcount$, there exists graphs $\graph$ with $\vertexcount$ vertices such that any algorithm using $\degreeq{}$, $\neighbourq{}$, $\edgeexistsq{}$, and $\randedgeq{}$ queries that computes $\biapprox$-BMA of the ccdh of the graph for any $\approxdom < \frac{1}{2}$, and $\approxran < 1$ must use $\bigomega{\vertexcount}$ queries.
\end{restatable}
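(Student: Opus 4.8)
The plan is to reduce from the promise set-disjointness problem $\promisedisjointness{\stringlength}$ with $\stringlength = \Theta(\vertexcount)$, whose randomized communication complexity is $\bigomega{\stringlength}$ by \Cref{Lemma: Hardness of Promise Disjointness}, and to route this hardness into the query model using the framework of~\cite{EdenRosenbaum/Approx/2018/LowerBoundGraphCommunication}. Given inputs $\alicestring,\bobstring$, Alice and Bob (implicitly) build the graph $\graph_{\alicestring,\bobstring}$ from the per-index gadgets of~\Cref{Figure: Gadget for Lower Bound General Graph}, where Alice owns the edge determined by $\alicestring_i$ and Bob the edge determined by $\bobstring_i$ in the $i$-th gadget. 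The crucial structural fact, already noted above, is that vertex $5i$ has degree exactly $\alicestring_i + \bobstring_i$, so under the promise $\sum_i \alicestring_i \bobstring_i \in \{0,1\}$ we obtain the dichotomy $\ccdh(2)=0$ when the sets are disjoint and $\ccdh(2)=1$ when they intersect, while every other vertex has degree in $\{0,1\}$ in both cases. Note that Alice and Bob each contribute exactly $\stringlength$ edges regardless of their inputs, so $\size{\edgeset}=2\stringlength$ is fixed.

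First I would show that any $\biapprox$-BMA of $\ccdh$ lets the parties decide $\promisedisjointness{\stringlength}$ by inspecting $\approxccdh$ around degree $2$. In the disjoint instance the maximum degree is $1$, so $\ccdh(d)=0$ for every $d \geq 2$; the $\biapprox$-BMA upper inequality at $x=2$ then pins $\approxccdh(2)$ to a value close to $0$, where $\approxdom < \tfrac12$ is used to keep the relaxed domain point $(1-\approxdom)\cdot 2$ strictly above $1$ so that it can only witness degree-$\geq 2$ vertices, of which there are none. In the intersecting instance the unique degree-$2$ vertex gives $\ccdh(2)=1$, and the same hypothesis $\approxdom < \tfrac12$ is meant to guarantee that the domain relaxation cannot push the detection threshold past this lone vertex, so the lower BMA inequality keeps $\approxccdh(2)$ bounded away from $0$. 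A fixed threshold on $\approxccdh(2)$ then separates the two cases with the algorithm's success probability. Making this step quantitative—fixing the threshold and verifying the separation over the whole range $\approxdom<\tfrac12$, $\approxran<1$, under the precise convention for evaluating the integer-valued $\ccdh$ at the relaxed (non-integer) domain points—is the conceptual heart of the argument, since a careless convention would let the single degree-$2$ vertex be ``relaxed away.''

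Next I would simulate a $\querycount$-query algorithm by a communication protocol. Each vertex lies in a single gadget indexed by $i = \lfloor \vertex/5 \rfloor$, and all four query types on it are determined by the two bits $\alicestring_i,\bobstring_i$: a $\degreeq{}$, $\neighbourq{}$, or $\edgeexistsq{}$ answer is computed after Alice and Bob exchange the $O(1)$ relevant input bits for the gadgets touched by the query. The main obstacle is the $\randedgeq{}$ query, since a uniformly random edge may belong to either party and its endpoints depend on the inputs. I would simulate it by a public fair coin selecting an owner (legitimate because the edge count is exactly $\stringlength$ per party, independent of the input), after which that party draws a uniform index and announces the corresponding edge's endpoints. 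This is where the phenomenon that ``the degree of vertices need not remain the same across instances'' must be handled so that the simulated edge is genuinely uniform; the fixed edge count in the general construction is what keeps this clean.

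Putting it together, within the framework of~\cite{EdenRosenbaum/Approx/2018/LowerBoundGraphCommunication} a $\querycount$-query algorithm is converted into a public-coin protocol for $\promisedisjointness{\stringlength}$, and combined with \Cref{Lemma: Hardness of Promise Disjointness} this forces $\querycount = \bigomega{\stringlength} = \bigomega{\vertexcount}$. I expect the two genuinely delicate steps to be (i) the BMA-reading argument at $x=2$, whose correctness hinges on the $\approxdom < \tfrac12$ window together with the chosen convention for the relaxed domain, and (ii) the faithful, low-cost simulation of $\randedgeq{}$; the local-query simulations and the final counting are routine.
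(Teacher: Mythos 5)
Your proposal follows essentially the same route as the paper's proof: the identical per-index gadget reduction from promise set-disjointness, distinguishing the two cases via $\approxccdh(2)$ with $\approxdom < \tfrac12$ (the paper likewise asserts this dichotomy without dwelling on the convention at non-integer relaxed points), and simulating each of the four query types with $O(1)$ bits of communication, exploiting that each string index contributes exactly one edge per party regardless of its value. The only caveat is that, to keep each \randedgeq{} at $O(1)$ bits, the uniform index over the $2\stringlength$ edge slots should come from the public random string and the owner should reveal only its input bit at that index (rather than drawing the index privately and announcing endpoints, which would cost $O(\log\vertexcount)$ bits and weaken the bound); with that reading your argument coincides with the paper's.
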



\subsection{h-index Based Lower Bound}\label{Subsection: h-index Lower Bound}

Observe that all the graphs constructed in our lower bounds in~\ref{Subsection: General Graph Lower Bound} has h-index $1$. Thus, it does not rule out better algorithms for graphs with high values of h-index. In this section, we establish our h-index sensitive lower bounds for graphs with $\vertexcount$ vetices, and $\edgecount$ edges, and h-index $\hindex$ for any $\vertexcount,\edgecount,\hindex \in \Nat$ such that $\vertexcount \geq \frac{\edgecount}{\hindex}$
.

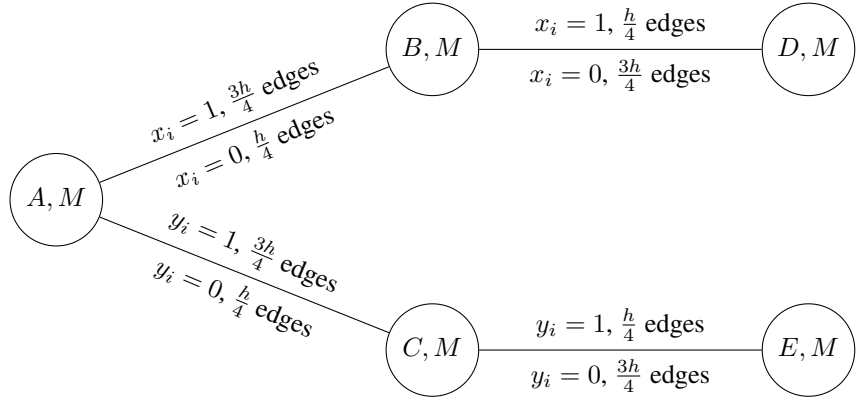
\begin{figure}[ht!]
\centering
    \begin{tikzpicture}
        \node[shape=circle,draw=black,minimum height=3.5em] (A) at (0,0) {$A, \stringlength$};
        \node[shape=circle,draw=black,minimum height=3.5em] (B) at (5,2) {$B, \stringlength$};
        \node[shape=circle,draw=black,minimum height=3.5em] (C) at (5,-2) {$C, \stringlength$};
        \node[shape=circle,draw=black,minimum height=3.5em] (D) at (10,2) {$D, \stringlength$};
        \node[shape=circle,draw=black,minimum height=3.5em] (E) at (10,-2) {$E, \stringlength$};
        \path [-] (A) edge node[midway,above, rotate=23] {$x_i = 1$, $\frac{3\hindex}{4}$ edges} node[midway,below, rotate=23] {$x_i = 0$, $\frac{\hindex}{4}$ edges} (B);
        \path [-](A) edge node[midway,above, rotate=-23] {$y_i = 1$, $\frac{3\hindex}{4}$ edges} node[midway,below,, rotate=-23] {$y_i = 0$, $\frac{\hindex}{4}$ edges} (C);
        \path [-](B) edge node[midway,below] {$x_i = 0$, $\frac{3\hindex}{4}$ edges} node[midway,above] {$x_i = 1$, $\frac{\hindex}{4}$ edges}(D);
        \path [-](C) edge node[midway,below] {$y_i = 0$, $\frac{3\hindex}{4}$ edges} node[midway,above] {$y_i = 1$, $\frac{\hindex}{4}$ edges} (E);
    \end{tikzpicture}
    \ifarxiv{  } \else{ \Description[]{} } \fi
    \caption{Gadget for h-index sensitive Lower Bound}\label{Figure: Gadget for Lower Bound h index}
\end{figure}

We start with a brief overview of the graph construction that we will use throughout this section. We consider the problem of $\promisedisjointness{}$ on a string length of $\stringlength$. Instead of constructing a separate gadget for each $i \in \tbrac{\stringlength}$ as in the last bound, here we construct a graph (Figure~\ref{Figure: Gadget for Lower Bound h index}) using $5$ independent sets of vertices $\vertexsetA,\vertexsetB,\vertexsetC,\vertexsetD,\vertexsetE$, each containing $\stringlength$ vertices. We also have a set of isolated vertex set $\isovertexset$ of size $\vertexcount-5\stringlength$, which will not be relevant to our construction. We assume $\vertexcount \geq \frac{\edgecount}{\hindex}$, and set $\stringlength = \frac{\edgecount}{\hindex}$. Due to the fact that $\edgecount \geq \hindex^2$, this also ensures $\stringlength \geq \hindex$

Consider an arbitrary enumeration of the vertices in $\vertexsetA,\vertexsetB,\vertexsetC,\vertexsetD,\vertexsetE$ and denote them as $\vertexA_i, \vertexB_i, \vertexC_i, \vertexD_i,$ and $\vertexE_i$, $i\in\tbrac{\stringlength}$ respectively. If $\alicestring_i = 1$, then $\vertexA_i$ has edges to $\vertexB_j, j \in [i,i+\nicefrac{3\hindex}{4}-1]$, and $\vertexD_i$ has edges to $\vertexB_j, j \in [i+\nicefrac{3\hindex}{4},i+h-1]$. Alternatively, if $\alicestring_i = 0$, then $\vertexA_i$ has edges to $\vertexB_j, j \in [i,i+\nicefrac{\hindex}{4}-1]$, and $\vertexD_i$ has edges to $\vertexB_j, j \in [i+\nicefrac{\hindex}{4},i+h-1]$. Here, all additions and subtractions are modulo $\stringlength$. We also add edges similarly for $\bobstring$ by replacing $\alicestring$, $\vertexsetB$, and $\vertexsetD$ with $\bobstring$, $\vertexsetC$, and $\vertexsetE$, respectively. 

Each vertex $\vertexB_i$ (resp. $\vertexC_i$) has one edge incident on them for each $\alicestring_j$ (resp. $\bobstring_j$) for each $j \in \sbrac{i-\hindex+1,i-\hindex+2,\ldots,i}$. This ensures that all vertices in $\vertexsetB$ and $\vertexsetC$ has degre exactly $\hindex$ due to the fact that $\stringlength \geq \hindex$. Also, all vertices in $\vertexsetD$, and $\vertexsetE$ have degree at most $\frac{3\hindex}{4}$. Now, for $\vertexsetA$, if $\promisedisjointness{\stringlength}\fbrac{\alicestring,\bobstring} = 1$, then all vertices in $\vertexsetA$ also has degree at most $\hindex$. However, if $\promisedisjointness{\stringlength}\fbrac{\alicestring,\bobstring} = 0$, then one vertex in $\vertexsetA$, corresponding to the index , say $j$, where $\alicestring_j = \bobstring_j = 1$, has degree $\frac{6\hindex}{4}$. Hence there are at least $2\hindex$ vertices with degree exactly $\hindex$ and at most $1$ vertex with degree strictly greater than $\hindex$. Hence, all graphs thus constructed has h-index exactly $\hindex$.
The following results formalize the lower bounds under this settings for the query and streaming models. The proofs are provided in Appendix~\ref{Appendix: General LB}.

\begin{restatable}[]{theorem}{StrmhindLB}\label{Theorem: Streaming h-index Lower Bound}
    For any $\vertexcount,\edgecount,\hindex \in \Nat$ such that $\vertexcount \geq \frac{5\edgecount}{\hindex}$, there exists graphs $\graph$ with $\vertexcount$ vertices, $\edgecount$ edges and h-index $\hindex$ such that any algorithm in the one-pass streaming model that computes $\biapprox$-BMA of the ccdh of the graph for any $\approxdom < \frac{1}{3}$ must use $\bigomega{\frac{\edgecount}{\hindex}}$ space.
\end{restatable}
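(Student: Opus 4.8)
The plan is to reduce from the promise set-disjointness problem $\promisedisjointness{\stringlength}$ with $\stringlength = \Theta(\edgecount/\hindex)$, whose randomized communication complexity is $\bigomega{\stringlength}$ by \Cref{Lemma: Hardness of Promise Disjointness}. Given any one-pass streaming algorithm that outputs an $\biapprox$-BMA of ccdh using $s$ bits of space, I would turn it into a one-way two-party protocol using the gadget of \Cref{Figure: Gadget for Lower Bound h index}. Alice owns exactly the edges incident to $\vertexsetB$ (the $\vertexsetA$--$\vertexsetB$ and $\vertexsetD$--$\vertexsetB$ edges, determined solely by $\alicestring$), and Bob owns exactly the edges incident to $\vertexsetC$ (determined solely by $\bobstring$). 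Alice feeds her edges into the stream, sends the $s$-bit memory state to Bob, who feeds in his edges and reads off $\approxccdh$. The communication cost is $s$, so any decision rule that lets Bob recover $\promisedisjointness{\stringlength}\fbrac{\alicestring,\bobstring}$ from $\approxccdh$ yields the bound $s=\bigomega{\stringlength}=\bigomega{\edgecount/\hindex}$.

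Next I would nail down the structural invariants of the construction, independent of $\alicestring,\bobstring$: every vertex of $\vertexsetB\cup\vertexsetC$ has degree exactly $\hindex$, every vertex of $\vertexsetD\cup\vertexsetE$ has degree at most $\tfrac{3\hindex}{4}$, the graph has exactly $\vertexcount$ vertices (padding with the isolated set $\isovertexset$), exactly $\edgecount$ edges (each index contributes a bit-independent number of edges), and h-index exactly $\hindex$. The only input-dependent feature of the degree sequence lives above $\hindex$: if $\promisedisjointness{\stringlength}\fbrac{\alicestring,\bobstring}=1$, no vertex has degree exceeding $\hindex$, so $\ccdh(\degree{})=0$ for all $\degree{}>\hindex$; whereas if $\promisedisjointness{\stringlength}\fbrac{\alicestring,\bobstring}=0$, the unique intersecting coordinate produces a vertex of $\vertexsetA$ of degree $\tfrac{3\hindex}{2}$, so $\ccdh(\degree{})\ge 1$ for every $\hindex<\degree{}\le \tfrac{3\hindex}{2}$. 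Because $\vertexcount,\edgecount,\hindex$ are identical across both cases, the algorithm cannot shortcut the decision from the public parameters.

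The distinguishing step then reads $\approxccdh$ at a single probe $\degree{}^\ast$ chosen so that $\hindex<\fbrac{1-\approxdom}\degree{}^\ast$ and $\fbrac{1+\approxdom}\degree{}^\ast\le \tfrac{3\hindex}{2}$. Applying the two sides of \Cref{Definition: BiCriteria Multiplicative Approximation}: in the disjoint case the upper bound gives $\approxccdh(\degree{}^\ast)\le\fbrac{1+\approxran}\,\ccdh\fbrac{\fbrac{1-\approxdom}\degree{}^\ast}=0$, while in the intersecting case the lower bound gives $\approxccdh(\degree{}^\ast)\ge\fbrac{1-\approxran}\,\ccdh\fbrac{\fbrac{1+\approxdom}\degree{}^\ast}\ge 1-\approxran>0$ (using $\approxran<1$). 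Hence thresholding $\approxccdh(\degree{}^\ast)$ at $\tfrac12$ recovers $\promisedisjointness{\stringlength}\fbrac{\alicestring,\bobstring}$ with the algorithm's success probability, completing the reduction.

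The hard part will be twofold. First, a valid probe $\degree{}^\ast$ exists exactly when $\fbrac{1+\approxdom}/\fbrac{1-\approxdom}$ lies below the ratio of the planted degree $\tfrac{3\hindex}{2}$ to the ceiling $\hindex$; this is precisely where the restriction on $\approxdom$ enters, and I would make the inequality explicit, pick $\degree{}^\ast$ near $\tfrac{3\hindex}{2\fbrac{1+\approxdom}}$, and handle the integrality of degrees, verifying that the admissible range of $\approxdom$ covers the claimed threshold (this constant-level bookkeeping is the delicate point, and it is tightly coupled to the exact edge multiplicities in the gadget). Second, I must confirm that Alice's and Bob's edge sets are genuinely functions of their own strings alone (so the one-way protocol is well defined), that the per-index edge counts are bit-independent so that $\edgecount$ and $\hindex$ stay fixed, and that $\stringlength=\Theta(\edgecount/\hindex)$ is consistent with $\vertexcount\ge 5\edgecount/\hindex$ together with $\edgecount\ge\hindex^2$; these are routine but are where the counting must be done with care.
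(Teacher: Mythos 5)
Your reduction is, in substance, the paper's own: the gadget of \Cref{Figure: Gadget for Lower Bound h index}, the same partition of the stream (Alice's edges are exactly those incident to $\vertexsetB$, determined by $\alicestring$; Bob's exactly those incident to $\vertexsetC$, determined by $\bobstring$), the same input-independent invariants (every vertex of $\vertexsetB\cup\vertexsetC$ has degree exactly $\hindex$, the edge count is $\Theta(\hindex\stringlength)$ regardless of the bits, the h-index is exactly $\hindex$), and the same one-way simulation of a space-$s$ streaming algorithm to solve $\promisedisjointness{\stringlength}$ with $\stringlength=\Theta(\nicefrac{\edgecount}{\hindex})$, yielding $s=\bigomega{\nicefrac{\edgecount}{\hindex}}$ via \Cref{Lemma: Hardness of Promise Disjointness}.

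The one point where your write-up does not yet establish the theorem as stated is precisely the constant-level bookkeeping you deferred, and it would not check out for the claimed range of $\approxdom$. Your probe must satisfy $(1-\approxdom)\degree{}^{*}>\hindex$ and $(1+\approxdom)\degree{}^{*}\le\frac{3\hindex}{2}$, and such a $\degree{}^{*}$ exists if and only if $\frac{1+\approxdom}{1-\approxdom}<\frac{3}{2}$, i.e.\ $\approxdom<\frac{1}{5}$. So, with this gadget, your two-sided probe argument covers $\approxdom<\frac{1}{5}$ but not the full claimed range $\approxdom<\frac{1}{3}$ (the $\bigomega{\nicefrac{\edgecount}{\hindex}}$ asymptotics are unaffected, but the theorem quantifies over all $\approxdom<\frac{1}{3}$). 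For comparison, the paper probes at the planted degree $\frac{3\hindex}{2}$ itself, and the condition $\approxdom<\frac{1}{3}$ comes from the disjoint-side requirement $(1-\approxdom)\cdot\frac{3\hindex}{2}>\hindex$ alone; forcing $\approxccdh\fbrac{\frac{3\hindex}{2}}>0$ in the intersecting case then needs $(1+\approxdom)\cdot\frac{3\hindex}{2}$ to remain within the support of the planted vertex, which is exactly the tension you flagged. If you want the clean two-sided probe to work for every $\approxdom<\frac{1}{3}$, modify the gadget so the planted degree is $2\hindex$ rather than $\frac{3\hindex}{2}$: when $\alicestring_i=1$ route all $\hindex$ of index $i$'s edges through $\vertexA_i$ (and all through $\vertexD_i$ when $\alicestring_i=0$), and symmetrically for Bob. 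The degrees in $\vertexsetB\cup\vertexsetC$, the edge count, and the h-index are unchanged, the unique intersecting coordinate now produces a vertex of degree $2\hindex$, and the probe-window condition becomes $\frac{1+\approxdom}{1-\approxdom}<2$, i.e.\ exactly $\approxdom<\frac{1}{3}$ (with only routine integrality checks on $\degree{}^{*}$ remaining).
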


\begin{restatable}[]{theorem}{QueryhindLB}\label{Theorem: Query h-index Lower Bound}
    For any $\vertexcount,\edgecount,\hindex \in \Nat$ such that $\vertexcount \geq \frac{5\edgecount}{\hindex}$, there exists graphs $\graph$ with $\vertexcount$ vertices, $\edgecount$ edges, and h-index $\hindex$ such that any algorithm using $\degreeq{}$, $\neighbourq{}$, $\edgeexistsq{}$, and $\randedgeq{}$ queries that computes $\biapprox$-BMA of the ccdh of the graph for any $\approxdom < \frac{1}{2}$ must use $\bigomega{\frac{\edgecount}{\hindex}}$ queries.
\end{restatable}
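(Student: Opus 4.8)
The plan is to reduce from promise set-disjointness $\promisedisjointness{\stringlength}$ on strings of length $\stringlength = \frac{\edgecount}{2\hindex}$, reusing the graph $\ccgraph$ described above (\Cref{Figure: Gadget for Lower Bound h index}), where Alice's string $\alicestring$ determines every edge incident to $\vertexsetB$ (hence also to $\vertexsetD$) and Bob's string $\bobstring$ determines every edge incident to $\vertexsetC$ (hence also to $\vertexsetE$). With this choice the graph has exactly $\size{\edgeset} = 2\stringlength\hindex = \edgecount$ edges and uses $5\stringlength = \frac{5\edgecount}{2\hindex} \leq \vertexcount$ non-isolated vertices (the rest being the isolated set $\isovertexset$), and h-index exactly $\hindex$ as argued in the construction. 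The single structural fact I will exploit throughout is that the edge set splits as a disjoint union $\edgeset = \edgeset_\alicestring \sqcup \edgeset_\bobstring$ with $\size{\edgeset_\alicestring} = \size{\edgeset_\bobstring} = \stringlength\hindex$ \emph{regardless of the inputs}, since each index $i$ contributes exactly $\hindex$ edges from each side ($\frac{3\hindex}{4}+\frac{\hindex}{4}$) whether the corresponding bit is $0$ or $1$. Thus both the total and the per-party edge counts are fixed and known in advance.

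Next I would isolate the single degree at which the two instances separate. If $\promisedisjointness{\stringlength}(\alicestring,\bobstring)=1$, then every vertex has degree at most $\hindex$, so $\ccdh(\degree{})=0$ for all $\degree{}>\hindex$; whereas if $\promisedisjointness{\stringlength}(\alicestring,\bobstring)=0$, there is exactly one vertex of $\vertexsetA$ (at the intersection index) of degree $\frac{3\hindex}{2}$, so $\ccdh(\degree{})\geq 1$ for $\hindex<\degree{}\leq \frac{3\hindex}{2}$. I would then fix a query degree $\degree{}^\ast$ with $\frac{\hindex}{1-\approxdom}<\degree{}^\ast\leq \frac{3\hindex}{2(1+\approxdom)}$, which exists precisely when $\approxdom$ is below the stated constant. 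By \Cref{Definition: BiCriteria Multiplicative Approximation}, any valid $\biapprox$-BMA output satisfies $\approxccdh(\degree{}^\ast)\leq (1+\approxran)\ccdh((1-\approxdom)\degree{}^\ast)=0$ in the disjoint case, while $\approxccdh(\degree{}^\ast)\geq (1-\approxran)\ccdh((1+\approxdom)\degree{}^\ast)\geq 1-\approxran>0$ in the non-disjoint case. Hence thresholding $\approxccdh(\degree{}^\ast)$ recovers $\promisedisjointness{\stringlength}(\alicestring,\bobstring)$.

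I would then convert a hypothetical $\querycount$-query algorithm into a communication protocol via the framework of~\cite{EdenRosenbaum/Approx/2018/LowerBoundGraphCommunication}: Alice and Bob jointly simulate the algorithm on $\ccgraph$, answering each query with $O(\log\vertexcount)$ bits because each party holds exactly the half of the graph determined by its own string. A $\degreeq{}$ on a vertex of $\vertexsetA$ is the sum of the two parties' locally known contributions, while $\degreeq{}$ on $\vertexsetB,\vertexsetC,\vertexsetD,\vertexsetE$ and any $\neighbourq{}$ or $\edgeexistsq{}$ touch edges owned by a single party. The delicate case is $\randedgeq{}$: using the fixed balanced split $\size{\edgeset_\alicestring}=\size{\edgeset_\bobstring}$, a shared fair coin (public randomness) selects which party's half to sample from, after which that party draws a uniform edge from its own edge set and sends its $O(\log\vertexcount)$-bit identifier. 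Because the split probability is $\frac12$ independent of the inputs, the returned edge is exactly uniform over $\edgeset$, and no input-dependent preprocessing is needed.

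Finally, combining the $\bigomega{\stringlength}$ communication lower bound for $\promisedisjointness{\stringlength}$ (\Cref{Lemma: Hardness of Promise Disjointness}) with the $O(\querycount\log\vertexcount)$ communication of the simulated protocol yields $\querycount=\bigomega{\frac{\edgecount}{\hindex}}$, the $O(\log\vertexcount)$ per-query encoding cost being absorbed into the framework's accounting. I expect the main obstacle to be exactly the $\randedgeq{}$ simulation in the adaptive setting: one must argue that the coin-then-sample procedure reproduces the uniform distribution on $\edgeset$ with bounded communication even though individual vertex degrees vary across the two instance families, and it is precisely the input-independent balanced partition $\edgeset=\edgeset_\alicestring\sqcup\edgeset_\bobstring$ that makes this possible. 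Verifying that this simulation is both correct in distribution and cheap in communication is the crux of the proof; the $\degreeq{}$, $\neighbourq{}$, and $\edgeexistsq{}$ simulations are routine by comparison.
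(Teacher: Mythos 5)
Your overall route is the same as the paper's: the same five-block gadget, the same reduction from $\promisedisjointness{\stringlength}$ with $\stringlength=\Theta(\nicefrac{\edgecount}{\hindex})$, the same separation of the two instance families at a degree just above $\hindex$, and the same simulation framework of Eden--Rosenbaum. The one place where you genuinely diverge is the $\randedgeq{}$ simulation, and that is exactly where your accounting fails. You split $\edgeset$ only into Alice's half and Bob's half, let a public coin pick the half, and then have the owning party send an $O(\log \vertexcount)$-bit identifier of a privately sampled edge, claiming this cost is ``absorbed into the framework's accounting.'' It is not: in this framework a $\querycount$-query algorithm yields a protocol of cost $\querycount$ times the per-query communication, so with $O(\log\vertexcount)$ bits per query you only conclude $\querycount = \bigomega{\nicefrac{\edgecount}{\hindex\log\vertexcount}}$, a $\log\vertexcount$ factor short of the stated $\bigomega{\nicefrac{\edgecount}{\hindex}}$. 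The paper avoids this by using a finer input-independent partition: every string index owns exactly $\hindex$ edges regardless of its bit value, so public randomness can select the index and the slot within it, and the owning party only needs to reveal the single relevant input bit for both parties to know the sampled edge --- $O(1)$ bits per $\randedgeq{}$ (and similarly $O(1)$ bits for \degreeq{}, \neighbourq{}, \edgeexistsq{}, since both parties already know the query being asked and need only exchange the relevant input bits, not identifiers). Your own observation that each index contributes exactly $\hindex$ edges on each side already contains everything needed for this fix, so the gap is repairable, but as written the proof does not give the claimed bound.

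A smaller point: your distinguishing degree $\degree{}^\ast$ must satisfy $\nicefrac{\hindex}{1-\approxdom} < \degree{}^\ast \leq \nicefrac{3\hindex}{2(1+\approxdom)}$, and this interval is nonempty only for $\approxdom < \nicefrac{1}{5}$, not for all $\approxdom$ below the stated $\nicefrac{1}{2}$ (the paper's own evaluation at exactly $\nicefrac{3\hindex}{2}$ has the same issue, since the BMA lower bound there references $\ccdh\fbrac{(1+\approxdom)\cdot\nicefrac{3\hindex}{2}} = 0$). This only affects the constant in the approximation-parameter range, not the asymptotic lower bound, but you should not assert that the interval exists ``precisely when $\approxdom$ is below the stated constant.''
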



\newpage

\bibliographystyle{ACM-Reference-Format}
\bibliography{refs}


\begin{thebibliography}{51}


\ifx \showCODEN    \undefined \def \showCODEN     #1{\unskip}     \fi
\ifx \showDOI      \undefined \def \showDOI       #1{#1}\fi
\ifx \showISBNx    \undefined \def \showISBNx     #1{\unskip}     \fi
\ifx \showISBNxiii \undefined \def \showISBNxiii  #1{\unskip}     \fi
\ifx \showISSN     \undefined \def \showISSN      #1{\unskip}     \fi
\ifx \showLCCN     \undefined \def \showLCCN      #1{\unskip}     \fi
\ifx \shownote     \undefined \def \shownote      #1{#1}          \fi
\ifx \showarticletitle \undefined \def \showarticletitle #1{#1}   \fi
\ifx \showURL      \undefined \def \showURL       {\relax}        \fi
\providecommand\bibfield[2]{#2}
\providecommand\bibinfo[2]{#2}
\providecommand\natexlab[1]{#1}
\providecommand\showeprint[2][]{arXiv:#2}

\bibitem[Aksoy et~al\mbox{.}(2017)]%
        {AksoyKoldaPinar/JCompNet/2017/MeasuringandModelingBipartiteGraphs}
\bibfield{author}{\bibinfo{person}{Sinan~G Aksoy}, \bibinfo{person}{Tamara~G Kolda}, {and} \bibinfo{person}{Ali Pinar}.} \bibinfo{year}{2017}\natexlab{}.
\newblock \showarticletitle{Measuring and modeling bipartite graphs with community structure}.
\newblock \bibinfo{journal}{\emph{Journal of Complex Networks}} \bibinfo{volume}{5}, \bibinfo{number}{4} (\bibinfo{year}{2017}), \bibinfo{pages}{581--603}.
\newblock


\bibitem[Aliakbarpour et~al\mbox{.}(2018)]%
        {Aliakbarpour/Algorithmica/2018/CountingStarSUbgraphsEdgeSampling}
\bibfield{author}{\bibinfo{person}{Maryam Aliakbarpour}, \bibinfo{person}{Amartya~Shankha Biswas}, \bibinfo{person}{Themis Gouleakis}, \bibinfo{person}{John Peebles}, \bibinfo{person}{Ronitt Rubinfeld}, {and} \bibinfo{person}{Anak Yodpinyanee}.} \bibinfo{year}{2018}\natexlab{}.
\newblock \showarticletitle{Sublinear-Time Algorithms for Counting Star Subgraphs via Edge Sampling}.
\newblock \bibinfo{journal}{\emph{Algorithmica}} \bibinfo{volume}{80}, \bibinfo{number}{2} (\bibinfo{date}{Feb.} \bibinfo{year}{2018}), \bibinfo{pages}{668–697}.
\newblock
\showISSN{0178-4617}
\urldef\tempurl%
\url{https://doi.org/10.1007/s00453-017-0287-3}
\showDOI{\tempurl}


\bibitem[Alon et~al\mbox{.}(1999)]%
        {jcss/AlonMS99}
\bibfield{author}{\bibinfo{person}{Noga Alon}, \bibinfo{person}{Yossi Matias}, {and} \bibinfo{person}{Mario Szegedy}.} \bibinfo{year}{1999}\natexlab{}.
\newblock \showarticletitle{The Space Complexity of Approximating the Frequency Moments}.
\newblock \bibinfo{journal}{\emph{J. Comput. Syst. Sci.}} \bibinfo{volume}{58}, \bibinfo{number}{1} (\bibinfo{year}{1999}), \bibinfo{pages}{137--147}.
\newblock
\urldef\tempurl%
\url{https://doi.org/10.1006/JCSS.1997.1545}
\showDOI{\tempurl}


\bibitem[Antunes et~al\mbox{.}(2021)]%
        {Antunes02102021}
\bibfield{author}{\bibinfo{person}{Nelson Antunes}, \bibinfo{person}{Shankar Bhamidi}, \bibinfo{person}{Tianjian Guo}, \bibinfo{person}{Vladas Pipiras}, {and} \bibinfo{person}{Bang~Wang and}.} \bibinfo{year}{2021}\natexlab{}.
\newblock \showarticletitle{Sampling Based Estimation of In-Degree Distribution for Directed Complex Networks}.
\newblock \bibinfo{journal}{\emph{Journal of Computational and Graphical Statistics}} \bibinfo{volume}{30}, \bibinfo{number}{4} (\bibinfo{year}{2021}), \bibinfo{pages}{863--876}.
\newblock
\urldef\tempurl%
\url{https://doi.org/10.1080/10618600.2021.1873143}
\showDOI{\tempurl}
\showeprint{https://doi.org/10.1080/10618600.2021.1873143}


\bibitem[Assadi et~al\mbox{.}(2019)]%
        {AssadiKapralovKhanna/ITCS/2019/SimpleSUblinearSubgraph}
\bibfield{author}{\bibinfo{person}{Sepehr Assadi}, \bibinfo{person}{Michael Kapralov}, {and} \bibinfo{person}{Sanjeev Khanna}.} \bibinfo{year}{2019}\natexlab{}.
\newblock \showarticletitle{{A Simple Sublinear-Time Algorithm for Counting Arbitrary Subgraphs via Edge Sampling}}. In \bibinfo{booktitle}{\emph{10th Innovations in Theoretical Computer Science Conference (ITCS 2019)}} \emph{(\bibinfo{series}{Leibniz International Proceedings in Informatics (LIPIcs)}, Vol.~\bibinfo{volume}{124})}, \bibfield{editor}{\bibinfo{person}{Avrim Blum}} (Ed.). \bibinfo{publisher}{Schloss Dagstuhl -- Leibniz-Zentrum f{\"u}r Informatik}, \bibinfo{address}{Dagstuhl, Germany}, \bibinfo{pages}{6:1--6:20}.
\newblock
\showISBNx{978-3-95977-095-8}
\showISSN{1868-8969}
\urldef\tempurl%
\url{https://doi.org/10.4230/LIPIcs.ITCS.2019.6}
\showDOI{\tempurl}


\bibitem[Assadi and Nguyen(2022)]%
        {AssadiNguyen/Approx/2022/OptimalH-IndexAndTriangle}
\bibfield{author}{\bibinfo{person}{Sepehr Assadi} {and} \bibinfo{person}{Hoai{-}An Nguyen}.} \bibinfo{year}{2022}\natexlab{}.
\newblock \showarticletitle{Asymptotically Optimal Bounds for Estimating H-Index in Sublinear Time with Applications to Subgraph Counting}. In \bibinfo{booktitle}{\emph{Approximation, Randomization, and Combinatorial Optimization. Algorithms and Techniques, {APPROX/RANDOM} 2022, September 19-21, 2022, University of Illinois, Urbana-Champaign, {USA} (Virtual Conference)}} \emph{(\bibinfo{series}{LIPIcs}, Vol.~\bibinfo{volume}{245})}, \bibfield{editor}{\bibinfo{person}{Amit Chakrabarti} {and} \bibinfo{person}{Chaitanya Swamy}} (Eds.). \bibinfo{publisher}{Schloss Dagstuhl - Leibniz-Zentrum f{\"{u}}r Informatik}, \bibinfo{address}{Illinois, Urbana-Champaign, {USA} (Virtual Conference)}, \bibinfo{pages}{48:1--48:20}.
\newblock
\urldef\tempurl%
\url{https://doi.org/10.4230/LIPICS.APPROX/RANDOM.2022.48}
\showDOI{\tempurl}


\bibitem[Barab{\'a}si(2013)]%
        {book/network-science}
\bibfield{author}{\bibinfo{person}{Albert-L{\'a}szl{\'o} Barab{\'a}si}.} \bibinfo{year}{2013}\natexlab{}.
\newblock \showarticletitle{Network science}.
\newblock \bibinfo{journal}{\emph{Philosophical Transactions of the Royal Society A: Mathematical, Physical and Engineering Sciences}} \bibinfo{volume}{371}, \bibinfo{number}{1987} (\bibinfo{year}{2013}), \bibinfo{pages}{20120375}.
\newblock


\bibitem[Barabási and Albert(1999)]%
        {BarbasiAlbert/Science/1999/EmergenceofScalinginRandomNetworks}
\bibfield{author}{\bibinfo{person}{Albert-László Barabási} {and} \bibinfo{person}{Réka Albert}.} \bibinfo{year}{1999}\natexlab{}.
\newblock \showarticletitle{Emergence of Scaling in Random Networks}.
\newblock \bibinfo{journal}{\emph{Science}} \bibinfo{volume}{286}, \bibinfo{number}{5439} (\bibinfo{year}{1999}), \bibinfo{pages}{509--512}.
\newblock
\urldef\tempurl%
\url{https://doi.org/10.1126/science.286.5439.509}
\showDOI{\tempurl}
\showeprint{https://www.science.org/doi/pdf/10.1126/science.286.5439.509}


\bibitem[Bollob\'as(2001)]%
        {Bollobas_2001}
\bibfield{author}{\bibinfo{person}{B\'ela Bollob\'as}.} \bibinfo{year}{2001}\natexlab{}.
\newblock \bibinfo{booktitle}{\emph{Random Graphs} (\bibinfo{edition}{2} ed.)}.
\newblock \bibinfo{publisher}{Cambridge University Press}, \bibinfo{address}{Cambridge, UK}.
\newblock


\bibitem[Broder et~al\mbox{.}(2000)]%
        {BroderKMRRSTW/CompNet/2000/GraphStructureinTheWeb}
\bibfield{author}{\bibinfo{person}{Andrei Broder}, \bibinfo{person}{Ravi Kumar}, \bibinfo{person}{Farzin Maghoul}, \bibinfo{person}{Prabhakar Raghavan}, \bibinfo{person}{Sridhar Rajagopalan}, \bibinfo{person}{Raymie Stata}, \bibinfo{person}{Andrew Tomkins}, {and} \bibinfo{person}{Janet Wiener}.} \bibinfo{year}{2000}\natexlab{}.
\newblock \showarticletitle{Graph structure in the Web}.
\newblock \bibinfo{journal}{\emph{Computer Networks}} \bibinfo{volume}{33}, \bibinfo{number}{1} (\bibinfo{year}{2000}), \bibinfo{pages}{309--320}.
\newblock
\showISSN{1389-1286}
\urldef\tempurl%
\url{https://doi.org/10.1016/S1389-1286(00)00083-9}
\showDOI{\tempurl}


\bibitem[Buragohain et~al\mbox{.}(2020)]%
        {ICMD2020/BuragohainRisvik}
\bibfield{author}{\bibinfo{person}{Chiranjeeb Buragohain}, \bibinfo{person}{Knut~Magne Risvik}, \bibinfo{person}{Paul Brett}, \bibinfo{person}{Miguel Castro}, \bibinfo{person}{Wonhee Cho}, \bibinfo{person}{Joshua Cowhig}, \bibinfo{person}{Nikolas Gloy}, \bibinfo{person}{Karthik Kalyanaraman}, \bibinfo{person}{Richendra Khanna}, \bibinfo{person}{John Pao}, \bibinfo{person}{Matthew Renzelmann}, \bibinfo{person}{Alex Shamis}, \bibinfo{person}{Timothy Tan}, {and} \bibinfo{person}{Shuheng Zheng}.} \bibinfo{year}{2020}\natexlab{}.
\newblock \showarticletitle{A1: A Distributed In-Memory Graph Database}. In \bibinfo{booktitle}{\emph{Proceedings of the 2020 ACM SIGMOD International Conference on Management of Data}} (Portland, OR, USA) \emph{(\bibinfo{series}{SIGMOD '20})}. \bibinfo{publisher}{Association for Computing Machinery}, \bibinfo{address}{New York, NY, USA}, \bibinfo{pages}{329–344}.
\newblock
\showISBNx{9781450367356}
\urldef\tempurl%
\url{https://doi.org/10.1145/3318464.3386135}
\showDOI{\tempurl}


\bibitem[Chakrabarti and Faloutsos(2006)]%
        {ChakrabartiFaloutsos/ACMCSurvey/2006/GraphMiningSurvey}
\bibfield{author}{\bibinfo{person}{Deepayan Chakrabarti} {and} \bibinfo{person}{Christos Faloutsos}.} \bibinfo{year}{2006}\natexlab{}.
\newblock \showarticletitle{Graph mining: Laws, generators, and algorithms}.
\newblock \bibinfo{journal}{\emph{ACM Comput. Surv.}} \bibinfo{volume}{38}, \bibinfo{number}{1} (\bibinfo{date}{June} \bibinfo{year}{2006}), \bibinfo{pages}{2–es}.
\newblock
\showISSN{0360-0300}
\urldef\tempurl%
\url{https://doi.org/10.1145/1132952.1132954}
\showDOI{\tempurl}


\bibitem[Clauset et~al\mbox{.}(2009)]%
        {siamrev/ClausetSN09}
\bibfield{author}{\bibinfo{person}{Aaron Clauset}, \bibinfo{person}{Cosma~Rohilla Shalizi}, {and} \bibinfo{person}{Mark E.~J. Newman}.} \bibinfo{year}{2009}\natexlab{}.
\newblock \showarticletitle{Power-Law Distributions in Empirical Data}.
\newblock \bibinfo{journal}{\emph{{SIAM} Rev.}} \bibinfo{volume}{51}, \bibinfo{number}{4} (\bibinfo{year}{2009}), \bibinfo{pages}{661--703}.
\newblock
\urldef\tempurl%
\url{https://doi.org/10.1137/070710111}
\showDOI{\tempurl}


\bibitem[Cohen et~al\mbox{.}(2000)]%
        {CohenErezBenAvrahamHavlin/PhysRevLett/2000/WebGraphDegreeDist}
\bibfield{author}{\bibinfo{person}{Reuven Cohen}, \bibinfo{person}{Keren Erez}, \bibinfo{person}{Daniel ben Avraham}, {and} \bibinfo{person}{Shlomo Havlin}.} \bibinfo{year}{2000}\natexlab{}.
\newblock \showarticletitle{Resilience of the Internet to Random Breakdowns}.
\newblock \bibinfo{journal}{\emph{Phys. Rev. Lett.}}  \bibinfo{volume}{85} (\bibinfo{date}{Nov} \bibinfo{year}{2000}), \bibinfo{pages}{4626--4628}.
\newblock
Issue 21.
\urldef\tempurl%
\url{https://doi.org/10.1103/PhysRevLett.85.4626}
\showDOI{\tempurl}


\bibitem[Dasgupta et~al\mbox{.}(2014)]%
        {www/dasguptaKumarSarlos}
\bibfield{author}{\bibinfo{person}{Anirban Dasgupta}, \bibinfo{person}{Ravi Kumar}, {and} \bibinfo{person}{Tamas Sarlos}.} \bibinfo{year}{2014}\natexlab{}.
\newblock \showarticletitle{On estimating the average degree}. In \bibinfo{booktitle}{\emph{Proceedings of the 23rd International Conference on World Wide Web}} (Seoul, Korea) \emph{(\bibinfo{series}{WWW '14})}. \bibinfo{publisher}{Association for Computing Machinery}, \bibinfo{address}{New York, NY, USA}, \bibinfo{pages}{795–806}.
\newblock
\showISBNx{9781450327442}
\urldef\tempurl%
\url{https://doi.org/10.1145/2566486.2568019}
\showDOI{\tempurl}


\bibitem[Durak et~al\mbox{.}(2013)]%
        {DuarkKoldaPinarSeshadri/NSW/2013/NullModelforAllDegreeDistribution}
\bibfield{author}{\bibinfo{person}{Nurcan Durak}, \bibinfo{person}{Tamara~G. Kolda}, \bibinfo{person}{Ali Pinar}, {and} \bibinfo{person}{C. Seshadhri}.} \bibinfo{year}{2013}\natexlab{}.
\newblock \showarticletitle{A scalable null model for directed graphs matching all degree distributions: In, out, and reciprocal}. In \bibinfo{booktitle}{\emph{2013 IEEE 2nd Network Science Workshop (NSW)}}. \bibinfo{publisher}{IEEE}, \bibinfo{address}{West Point, USA}, \bibinfo{pages}{23--30}.
\newblock
\urldef\tempurl%
\url{https://doi.org/10.1109/NSW.2013.6609190}
\showDOI{\tempurl}


\bibitem[Ebbes et~al\mbox{.}(2008)]%
        {Ebbes/WITS/2008/SamplingLargeScaleSocialNetworks}
\bibfield{author}{\bibinfo{person}{Peter Ebbes}, \bibinfo{person}{Zan Huang}, \bibinfo{person}{Arvind Rangaswamy}, \bibinfo{person}{Hari~P Thadakamalla}, {and} \bibinfo{person}{ORGB Unit}.} \bibinfo{year}{2008}\natexlab{}.
\newblock \showarticletitle{Sampling large-scale social networks: Insights from simulated networks}. In \bibinfo{booktitle}{\emph{18th Annual Workshop on Information Technologies and Systems, Paris, France}}, Vol.~\bibinfo{volume}{100}. Citeseer, \bibinfo{publisher}{WITS}, \bibinfo{address}{Paris}, \bibinfo{pages}{102--104}.
\newblock


\bibitem[Eden et~al\mbox{.}(2018)]%
        {Eden/WWW/2017/CCDHinQueryModel}
\bibfield{author}{\bibinfo{person}{Talya Eden}, \bibinfo{person}{Shweta Jain}, \bibinfo{person}{Ali Pinar}, \bibinfo{person}{Dana Ron}, {and} \bibinfo{person}{C. Seshadhri}.} \bibinfo{year}{2018}\natexlab{}.
\newblock \showarticletitle{Provable and Practical Approximations for the Degree Distribution using Sublinear Graph Samples}. In \bibinfo{booktitle}{\emph{Proceedings of the 2018 World Wide Web Conference}} (Lyon, France) \emph{(\bibinfo{series}{WWW '18})}. \bibinfo{publisher}{International World Wide Web Conferences Steering Committee}, \bibinfo{address}{Republic and Canton of Geneva, CHE}, \bibinfo{pages}{449–458}.
\newblock
\showISBNx{9781450356398}
\urldef\tempurl%
\url{https://doi.org/10.1145/3178876.3186111}
\showDOI{\tempurl}


\bibitem[Eden and Rosenbaum(2018)]%
        {EdenRosenbaum/Approx/2018/LowerBoundGraphCommunication}
\bibfield{author}{\bibinfo{person}{Talya Eden} {and} \bibinfo{person}{Will Rosenbaum}.} \bibinfo{year}{2018}\natexlab{}.
\newblock \showarticletitle{{Lower Bounds for Approximating Graph Parameters via Communication Complexity}}. In \bibinfo{booktitle}{\emph{Approximation, Randomization, and Combinatorial Optimization. Algorithms and Techniques (APPROX/RANDOM 2018)}} \emph{(\bibinfo{series}{Leibniz International Proceedings in Informatics (LIPIcs)}, Vol.~\bibinfo{volume}{116})}, \bibfield{editor}{\bibinfo{person}{Eric Blais}, \bibinfo{person}{Klaus Jansen}, \bibinfo{person}{Jos\'{e} D.~P.~Rolim}, {and} \bibinfo{person}{David Steurer}} (Eds.). \bibinfo{publisher}{Schloss Dagstuhl -- Leibniz-Zentrum f{\"u}r Informatik}, \bibinfo{address}{Dagstuhl, Germany}, \bibinfo{pages}{11:1--11:18}.
\newblock
\showISBNx{978-3-95977-085-9}
\showISSN{1868-8969}
\urldef\tempurl%
\url{https://doi.org/10.4230/LIPIcs.APPROX-RANDOM.2018.11}
\showDOI{\tempurl}


\bibitem[Faloutsos et~al\mbox{.}(1999)]%
        {FaloutsosCubed/SIGCOMMCCR/1999/OnPowerLawRelationShipsoftheInternetTopology}
\bibfield{author}{\bibinfo{person}{Michalis Faloutsos}, \bibinfo{person}{Petros Faloutsos}, {and} \bibinfo{person}{Christos Faloutsos}.} \bibinfo{year}{1999}\natexlab{}.
\newblock \showarticletitle{On power-law relationships of the Internet topology}. In \bibinfo{booktitle}{\emph{Proceedings of the Conference on Applications, Technologies, Architectures, and Protocols for Computer Communication}} (Cambridge, Massachusetts, USA) \emph{(\bibinfo{series}{SIGCOMM '99})}. \bibinfo{publisher}{Association for Computing Machinery}, \bibinfo{address}{New York, NY, USA}, \bibinfo{pages}{251–262}.
\newblock
\showISBNx{1581131356}
\urldef\tempurl%
\url{https://doi.org/10.1145/316188.316229}
\showDOI{\tempurl}


\bibitem[Fu et~al\mbox{.}(2017)]%
        {GeaBase/FuWuLiChenYeYuHu}
\bibfield{author}{\bibinfo{person}{Zhisong Fu}, \bibinfo{person}{Zhengwei Wu}, \bibinfo{person}{Houyi Li}, \bibinfo{person}{Yize Li}, \bibinfo{person}{Min Wu}, \bibinfo{person}{Xiaojie Chen}, \bibinfo{person}{Xiaomeng Ye}, \bibinfo{person}{Benquan Yu}, {and} \bibinfo{person}{Xi Hu}.} \bibinfo{year}{2017}\natexlab{}.
\newblock \showarticletitle{GeaBase: A High-Performance Distributed Graph Database for Industry-Scale Applications}. In \bibinfo{booktitle}{\emph{2017 Fifth International Conference on Advanced Cloud and Big Data (CBD)}}. \bibinfo{publisher}{IEEE}, \bibinfo{address}{Shanghai}, \bibinfo{pages}{170--175}.
\newblock
\urldef\tempurl%
\url{https://doi.org/10.1109/CBD.2017.37}
\showDOI{\tempurl}


\bibitem[Golbeck(2013)]%
        {Golbeck/Book/2013/SocialWebAnalyzing/DegDist}
\bibfield{author}{\bibinfo{person}{Jennifer Golbeck}.} \bibinfo{year}{2013}\natexlab{}.
\newblock \bibinfo{booktitle}{\emph{Analyzing the Social Web}}.
\newblock \bibinfo{publisher}{Morgan Kaufmann}, \bibinfo{address}{Oxford, England}.
\newblock


\bibitem[Goldreich(1999)]%
        {Goldreich/RandomMethods/1999/SurveyOnCombinatorialPropTest}
\bibfield{author}{\bibinfo{person}{Oded Goldreich}.} \bibinfo{year}{1999}\natexlab{}.
\newblock \showarticletitle{Combinatorial property testing (a survey)}.
\newblock \bibinfo{journal}{\emph{Randomization Methods in Algorithm Design}}  \bibinfo{volume}{43} (\bibinfo{year}{1999}), \bibinfo{pages}{45--59}.
\newblock


\bibitem[Goldreich(2010)]%
        {Goldreich/Book/2010/GraphPropertyTestingSurvey}
\bibfield{author}{\bibinfo{person}{Oded Goldreich}.} \bibinfo{year}{2010}\natexlab{}.
\newblock \bibinfo{booktitle}{\emph{Introduction to Testing Graph Properties}}.
\newblock \bibinfo{publisher}{Springer Berlin Heidelberg}, \bibinfo{address}{Berlin, Heidelberg}, \bibinfo{pages}{105--141}.
\newblock
\showISBNx{978-3-642-16367-8}
\urldef\tempurl%
\url{https://doi.org/10.1007/978-3-642-16367-8_7}
\showDOI{\tempurl}


\bibitem[Goldreich et~al\mbox{.}(1998)]%
        {GoldreichGoldwasserRon/JACM/98/PropertyTestingAndConnections}
\bibfield{author}{\bibinfo{person}{Oded Goldreich}, \bibinfo{person}{Shari Goldwasser}, {and} \bibinfo{person}{Dana Ron}.} \bibinfo{year}{1998}\natexlab{}.
\newblock \showarticletitle{Property testing and its connection to learning and approximation}.
\newblock \bibinfo{journal}{\emph{J. ACM}} \bibinfo{volume}{45}, \bibinfo{number}{4} (\bibinfo{date}{July} \bibinfo{year}{1998}), \bibinfo{pages}{653–750}.
\newblock
\showISSN{0004-5411}
\urldef\tempurl%
\url{https://doi.org/10.1145/285055.285060}
\showDOI{\tempurl}


\bibitem[Goldreich and Ron(1997)]%
        {GoldreichRon/STOC/97/PropertyTestinginBoundedDegreeGraphs}
\bibfield{author}{\bibinfo{person}{Oded Goldreich} {and} \bibinfo{person}{Dana Ron}.} \bibinfo{year}{1997}\natexlab{}.
\newblock \showarticletitle{Property testing in bounded degree graphs}. In \bibinfo{booktitle}{\emph{Proceedings of the Twenty-Ninth Annual ACM Symposium on Theory of Computing}} (El Paso, Texas, USA) \emph{(\bibinfo{series}{STOC '97})}. \bibinfo{publisher}{Association for Computing Machinery}, \bibinfo{address}{New York, NY, USA}, \bibinfo{pages}{406–415}.
\newblock
\showISBNx{0897918886}
\urldef\tempurl%
\url{https://doi.org/10.1145/258533.258627}
\showDOI{\tempurl}


\bibitem[Goldreich and Wigderson(2022)]%
        {GoldreichWigderson/FOCS/2022/NonAdaptivevsAdaptiveDenseGraphTesting}
\bibfield{author}{\bibinfo{person}{Oded Goldreich} {and} \bibinfo{person}{Avi Wigderson}.} \bibinfo{year}{2022}\natexlab{}.
\newblock \showarticletitle{{ Non-adaptive vs Adaptive Queries in the Dense Graph Testing Model }}. In \bibinfo{booktitle}{\emph{2021 IEEE 62nd Annual Symposium on Foundations of Computer Science (FOCS)}}. \bibinfo{publisher}{IEEE Computer Society}, \bibinfo{address}{Los Alamitos, CA, USA}, \bibinfo{pages}{269--275}.
\newblock
\urldef\tempurl%
\url{https://doi.org/10.1109/FOCS52979.2021.00035}
\showDOI{\tempurl}


\bibitem[Gou et~al\mbox{.}(2023)]%
        {IEEEKDE/GouZouZhaoYang}
\bibfield{author}{\bibinfo{person}{Xiangyang Gou}, \bibinfo{person}{Lei Zou}, \bibinfo{person}{Chenxingyu Zhao}, {and} \bibinfo{person}{Tong Yang}.} \bibinfo{year}{2023}\natexlab{}.
\newblock \showarticletitle{Graph Stream Sketch: Summarizing Graph Streams With High Speed and Accuracy}.
\newblock \bibinfo{journal}{\emph{IEEE Transactions on Knowledge and Data Engineering}} \bibinfo{volume}{35}, \bibinfo{number}{6} (\bibinfo{year}{2023}), \bibinfo{pages}{5901--5914}.
\newblock
\urldef\tempurl%
\url{https://doi.org/10.1109/TKDE.2022.3174570}
\showDOI{\tempurl}


\bibitem[Hagberg et~al\mbox{.}(2008)]%
        {SciPyProceedings_11}
\bibfield{author}{\bibinfo{person}{Aric Hagberg}, \bibinfo{person}{Pieter~J Swart}, {and} \bibinfo{person}{Daniel~A Schult}.} \bibinfo{year}{2008}\natexlab{}.
\newblock \bibinfo{booktitle}{\emph{Exploring network structure, dynamics, and function using NetworkX}}.
\newblock \bibinfo{type}{{T}echnical {R}eport}. \bibinfo{institution}{Los Alamos National Laboratory (LANL), Los Alamos, NM (United States)}.
\newblock


\bibitem[Jowhari et~al\mbox{.}(2011)]%
        {JowhariSauglamTardos/PODS/2011/TightBoundsLpSamplers}
\bibfield{author}{\bibinfo{person}{Hossein Jowhari}, \bibinfo{person}{Mert Sa\u{g}lam}, {and} \bibinfo{person}{G\'{a}bor Tardos}.} \bibinfo{year}{2011}\natexlab{}.
\newblock \showarticletitle{Tight bounds for Lp samplers, finding duplicates in streams, and related problems}. In \bibinfo{booktitle}{\emph{Proceedings of the Thirtieth ACM SIGMOD-SIGACT-SIGART Symposium on Principles of Database Systems}} (Athens, Greece) \emph{(\bibinfo{series}{PODS '11})}. \bibinfo{publisher}{Association for Computing Machinery}, \bibinfo{address}{New York, NY, USA}, \bibinfo{pages}{49–58}.
\newblock
\showISBNx{9781450306607}
\urldef\tempurl%
\url{https://doi.org/10.1145/1989284.1989289}
\showDOI{\tempurl}


\bibitem[Kushilevitz and Nisan(1996)]%
        {Kushilevitz_Nisan_1996}
\bibfield{author}{\bibinfo{person}{Eyal Kushilevitz} {and} \bibinfo{person}{Noam Nisan}.} \bibinfo{year}{1996}\natexlab{}.
\newblock \bibinfo{booktitle}{\emph{Communication complexity}}.
\newblock \bibinfo{publisher}{Cambridge University Press}, \bibinfo{address}{USA}.
\newblock
\showISBNx{0521560675}


\bibitem[Leskovec and Faloutsos(2006)]%
        {LeskovecFaloutsos/KDD/2006/SamplingFromLargeGraphs}
\bibfield{author}{\bibinfo{person}{Jure Leskovec} {and} \bibinfo{person}{Christos Faloutsos}.} \bibinfo{year}{2006}\natexlab{}.
\newblock \showarticletitle{Sampling from large graphs}. In \bibinfo{booktitle}{\emph{Proceedings of the 12th ACM SIGKDD International Conference on Knowledge Discovery and Data Mining}} (Philadelphia, PA, USA) \emph{(\bibinfo{series}{KDD '06})}. \bibinfo{publisher}{Association for Computing Machinery}, \bibinfo{address}{New York, NY, USA}, \bibinfo{pages}{631–636}.
\newblock
\showISBNx{1595933395}
\urldef\tempurl%
\url{https://doi.org/10.1145/1150402.1150479}
\showDOI{\tempurl}


\bibitem[Leskovec and Krevl(2014)]%
        {snapnets}
\bibfield{author}{\bibinfo{person}{Jure Leskovec} {and} \bibinfo{person}{Andrej Krevl}.} \bibinfo{year}{2014}\natexlab{}.
\newblock \bibinfo{title}{{SNAP Datasets}: {Stanford} Large Network Dataset Collection}.
\newblock \bibinfo{howpublished}{\url{http://snap.stanford.edu/data}}.
\newblock


\bibitem[McGregor(2014)]%
        {Mcgregor/ACMSigmod/GraphStreamAlgorithms}
\bibfield{author}{\bibinfo{person}{Andrew McGregor}.} \bibinfo{year}{2014}\natexlab{}.
\newblock \showarticletitle{Graph stream algorithms: a survey}.
\newblock \bibinfo{journal}{\emph{SIGMOD Rec.}} \bibinfo{volume}{43}, \bibinfo{number}{1} (\bibinfo{date}{May} \bibinfo{year}{2014}), \bibinfo{pages}{9–20}.
\newblock
\showISSN{0163-5808}
\urldef\tempurl%
\url{https://doi.org/10.1145/2627692.2627694}
\showDOI{\tempurl}


\bibitem[Mitzenmacher(2004)]%
        {Mitzenmacher/SurveyBook/2004/BriefSurveyPowerLaw}
\bibfield{author}{\bibinfo{person}{Michael Mitzenmacher}.} \bibinfo{year}{2004}\natexlab{}.
\newblock \showarticletitle{A brief history of generative models for power law and lognormal distributions}.
\newblock \bibinfo{journal}{\emph{Internet mathematics}} \bibinfo{volume}{1}, \bibinfo{number}{2} (\bibinfo{year}{2004}), \bibinfo{pages}{226--251}.
\newblock


\bibitem[Mitzenmacher and Upfal(2005)]%
        {Mitzenmacher_Upfal_2005}
\bibfield{author}{\bibinfo{person}{Michael Mitzenmacher} {and} \bibinfo{person}{Eli Upfal}.} \bibinfo{year}{2005}\natexlab{}.
\newblock \bibinfo{booktitle}{\emph{Probability and Computing: Randomized Algorithms and Probabilistic Analysis}}.
\newblock \bibinfo{publisher}{Cambridge University Press}, \bibinfo{address}{Cambridge}.
\newblock


\bibitem[Newman(2003)]%
        {Newman/SiamRev/2003/StructureandFunctionofComplexNetworks}
\bibfield{author}{\bibinfo{person}{Mark E.~J. Newman}.} \bibinfo{year}{2003}\natexlab{}.
\newblock \showarticletitle{The Structure and Function of Complex Networks}.
\newblock \bibinfo{journal}{\emph{SIAM Rev.}}  \bibinfo{volume}{45} (\bibinfo{year}{2003}), \bibinfo{pages}{167--256}.
\newblock
\urldef\tempurl%
\url{https://api.semanticscholar.org/CorpusID:221278130}
\showURL{%
\tempurl}


\bibitem[Pennock et~al\mbox{.}(2002)]%
        {PennockFlakeLawrence/PNAS/2002/WebDegDist}
\bibfield{author}{\bibinfo{person}{David~M. Pennock}, \bibinfo{person}{Gary~W. Flake}, \bibinfo{person}{Steve Lawrence}, \bibinfo{person}{Eric~J. Glover}, {and} \bibinfo{person}{C.~Lee Giles}.} \bibinfo{year}{2002}\natexlab{}.
\newblock \showarticletitle{Winners don't take all: Characterizing the competition for links on the web}.
\newblock \bibinfo{journal}{\emph{Proceedings of the National Academy of Sciences}} \bibinfo{volume}{99}, \bibinfo{number}{8} (\bibinfo{year}{2002}), \bibinfo{pages}{5207--5211}.
\newblock
\urldef\tempurl%
\url{https://doi.org/10.1073/pnas.032085699}
\showDOI{\tempurl}
\showeprint{https://www.pnas.org/doi/pdf/10.1073/pnas.032085699}


\bibitem[Pržulj(2007)]%
        {PrzuljNatasa/Bioinformatics/2007/DegreeDistBiologicalNetComp}
\bibfield{author}{\bibinfo{person}{Nataša Pržulj}.} \bibinfo{year}{2007}\natexlab{}.
\newblock \showarticletitle{Biological network comparison using graphlet degree distribution}.
\newblock \bibinfo{journal}{\emph{Bioinformatics}} \bibinfo{volume}{23}, \bibinfo{number}{2} (\bibinfo{date}{01} \bibinfo{year}{2007}), \bibinfo{pages}{e177--e183}.
\newblock
\showISSN{1367-4803}
\urldef\tempurl%
\url{https://doi.org/10.1093/bioinformatics/btl301}
\showDOI{\tempurl}
\showeprint{https://academic.oup.com/bioinformatics/article-pdf/23/2/e177/49820586/bioinformatics\_23\_2\_e177.pdf}


\bibitem[Rao and Yehudayoff(2020)]%
        {rao2020communication}
\bibfield{author}{\bibinfo{person}{Anup Rao} {and} \bibinfo{person}{Amir Yehudayoff}.} \bibinfo{year}{2020}\natexlab{}.
\newblock \bibinfo{booktitle}{\emph{Communication complexity: and applications}}.
\newblock \bibinfo{publisher}{Cambridge University Press}, \bibinfo{address}{Cambridge, UK}.
\newblock


\bibitem[Ribeiro and Towsley(2012)]%
        {RibeiroTowsley/CDC/2012/DegreeDistributionGraphSampling}
\bibfield{author}{\bibinfo{person}{Bruno Ribeiro} {and} \bibinfo{person}{Don Towsley}.} \bibinfo{year}{2012}\natexlab{}.
\newblock \showarticletitle{On the estimation accuracy of degree distributions from graph sampling}. In \bibinfo{booktitle}{\emph{2012 IEEE 51st IEEE Conference on Decision and Control (CDC)}}. \bibinfo{publisher}{IEEE}, \bibinfo{address}{Maui, Hawai}, \bibinfo{pages}{5240--5247}.
\newblock
\urldef\tempurl%
\url{https://doi.org/10.1109/CDC.2012.6425857}
\showDOI{\tempurl}


\bibitem[Seshadhri et~al\mbox{.}(2012)]%
        {SeshadriKoldaPinar/PhysRevE/2012/CommunityStructureandDegDist}
\bibfield{author}{\bibinfo{person}{C. Seshadhri}, \bibinfo{person}{Tamara~G. Kolda}, {and} \bibinfo{person}{Ali Pinar}.} \bibinfo{year}{2012}\natexlab{}.
\newblock \showarticletitle{Community structure and scale-free collections of Erd\ifmmode \mbox{\H{o}}\else \H{o}\fi{}s-R\'enyi graphs}.
\newblock \bibinfo{journal}{\emph{Phys. Rev. E}}  \bibinfo{volume}{85} (\bibinfo{date}{May} \bibinfo{year}{2012}), \bibinfo{pages}{056109}.
\newblock
Issue 5.
\urldef\tempurl%
\url{https://doi.org/10.1103/PhysRevE.85.056109}
\showDOI{\tempurl}


\bibitem[Seshadri(2019)]%
        {sublinearProblemEstimating}
\bibfield{author}{\bibinfo{person}{C Seshadri}.} \bibinfo{year}{2019}\natexlab{}.
\newblock \bibinfo{title}{{P}roblem 98: {E}stimating a {G}raph's {D}egree {D}istribution - {O}pen {P}roblems in {S}ublinear {A}lgorithms --- sublinear.info}.
\newblock \bibinfo{howpublished}{\url{https://sublinear.info/index.php?title=Open_Problems:98}}.
\newblock
\newblock
\shownote{[Accessed 08-05-2025]}.


\bibitem[Simpson et~al\mbox{.}(2015)]%
        {SeshadriMcGregor/ICDM/2015/CCDHStreamingEmpirical}
\bibfield{author}{\bibinfo{person}{Olivia Simpson}, \bibinfo{person}{C. Seshadhri}, {and} \bibinfo{person}{Andrew McGregor}.} \bibinfo{year}{2015}\natexlab{}.
\newblock \showarticletitle{{ Catching the Head, Tail, and Everything in Between: A Streaming Algorithm for the Degree Distribution }}. In \bibinfo{booktitle}{\emph{2015 IEEE International Conference on Data Mining (ICDM)}}. \bibinfo{publisher}{IEEE Computer Society}, \bibinfo{address}{Los Alamitos, CA, USA}, \bibinfo{pages}{979--984}.
\newblock
\showISSN{1550-4786}
\urldef\tempurl%
\url{https://doi.org/10.1109/ICDM.2015.47}
\showDOI{\tempurl}


\bibitem[Stewart(2025)]%
        {STEWART2025105420}
\bibfield{author}{\bibinfo{person}{Jonathan~R. Stewart}.} \bibinfo{year}{2025}\natexlab{}.
\newblock \showarticletitle{Consistency of empirical distributions of sequences of graph statistics in networks with dependent edges}.
\newblock \bibinfo{journal}{\emph{Journal of Multivariate Analysis}}  \bibinfo{volume}{207} (\bibinfo{year}{2025}), \bibinfo{pages}{105420}.
\newblock
\showISSN{0047-259X}
\urldef\tempurl%
\url{https://doi.org/10.1016/j.jmva.2025.105420}
\showDOI{\tempurl}


\bibitem[Stumpf and Wiuf(2005)]%
        {StumphWiuf/PhyiscalReview/2005/SamplingGraphsDegreeDist}
\bibfield{author}{\bibinfo{person}{Michael P.~H. Stumpf} {and} \bibinfo{person}{Carsten Wiuf}.} \bibinfo{year}{2005}\natexlab{}.
\newblock \showarticletitle{Sampling properties of random graphs: the degree distribution.}
\newblock \bibinfo{journal}{\emph{Physical review. E, Statistical, nonlinear, and soft matter physics}}  \bibinfo{volume}{72 3 Pt 2} (\bibinfo{year}{2005}), \bibinfo{pages}{036118}.
\newblock
\urldef\tempurl%
\url{https://api.semanticscholar.org/CorpusID:6521791}
\showURL{%
\tempurl}


\bibitem[Vitter(1985)]%
        {Vitter/TOMS/1985/ReservoirSampling}
\bibfield{author}{\bibinfo{person}{Jeffrey~S. Vitter}.} \bibinfo{year}{1985}\natexlab{}.
\newblock \showarticletitle{Random sampling with a reservoir}.
\newblock \bibinfo{journal}{\emph{ACM Trans. Math. Softw.}} \bibinfo{volume}{11}, \bibinfo{number}{1} (\bibinfo{date}{March} \bibinfo{year}{1985}), \bibinfo{pages}{37–57}.
\newblock
\showISSN{0098-3500}
\urldef\tempurl%
\url{https://doi.org/10.1145/3147.3165}
\showDOI{\tempurl}


\bibitem[West et~al\mbox{.}(2001)]%
        {west_introduction_2000}
\bibfield{author}{\bibinfo{person}{Douglas~Brent West} {et~al\mbox{.}}} \bibinfo{year}{2001}\natexlab{}.
\newblock \bibinfo{booktitle}{\emph{Introduction to graph theory}}. Vol.~\bibinfo{volume}{2}.
\newblock \bibinfo{publisher}{Prentice hall Upper Saddle River}, \bibinfo{address}{USA}.
\newblock


\bibitem[Yan et~al\mbox{.}(2017)]%
        {book/biggraphdistributed}
\bibfield{author}{\bibinfo{person}{Da Yan}, \bibinfo{person}{Yingyi Bu}, \bibinfo{person}{Yuanyuan Tian}, {and} \bibinfo{person}{Amol Deshpande}.} \bibinfo{year}{2017}\natexlab{}.
\newblock \showarticletitle{Big Graph Analytics Platforms}.
\newblock \bibinfo{journal}{\emph{Found. Trends Databases}} \bibinfo{volume}{7}, \bibinfo{number}{1–2} (\bibinfo{date}{Jan.} \bibinfo{year}{2017}), \bibinfo{pages}{1–195}.
\newblock
\showISSN{1931-7883}
\urldef\tempurl%
\url{https://doi.org/10.1561/1900000056}
\showDOI{\tempurl}


\bibitem[Yuan et~al\mbox{.}(2015)]%
        {YuanShaoStanleyHavlin/PhysRevE/2015/BreadthDegDistRobustness}
\bibfield{author}{\bibinfo{person}{Xin Yuan}, \bibinfo{person}{Shuai Shao}, \bibinfo{person}{H.~Eugene Stanley}, {and} \bibinfo{person}{Shlomo Havlin}.} \bibinfo{year}{2015}\natexlab{}.
\newblock \showarticletitle{How breadth of degree distribution influences network robustness: Comparing localized and random attacks}.
\newblock \bibinfo{journal}{\emph{Phys. Rev. E}}  \bibinfo{volume}{92} (\bibinfo{date}{Sep} \bibinfo{year}{2015}), \bibinfo{pages}{032122}.
\newblock
Issue 3.
\urldef\tempurl%
\url{https://doi.org/10.1103/PhysRevE.92.032122}
\showDOI{\tempurl}


\bibitem[Zhang et~al\mbox{.}(2015)]%
        {ZhangKolaczykSpencer/AnnalsAS/2015/EstimatingDegreeDistributions}
\bibfield{author}{\bibinfo{person}{Yaonan Zhang}, \bibinfo{person}{Eric~D. Kolaczyk}, {and} \bibinfo{person}{Bruce~D. Spencer}.} \bibinfo{year}{2015}\natexlab{}.
\newblock \showarticletitle{{Estimating network degree distributions under sampling: An inverse problem, with applications to monitoring social media networks}}.
\newblock \bibinfo{journal}{\emph{The Annals of Applied Statistics}} \bibinfo{volume}{9}, \bibinfo{number}{1} (\bibinfo{year}{2015}), \bibinfo{pages}{166 -- 199}.
\newblock
\urldef\tempurl%
\url{https://doi.org/10.1214/14-AOAS800}
\showDOI{\tempurl}


\end{thebibliography}

\newpage
\appendix
\section*{Appendix}\label{sec:appendix}
\setcounter{section}{0}

\section{Chernoff Bounds}
We will be using the following variation of the Chernoff bound that bounds the deviation of the sum of independent Poisson trials~\citep{Mitzenmacher_Upfal_2005}.

\begin{lemma}[Multiplicative Chernoff Bound]\label{Lemma: Multiplicative Chernoff Bound}
    Given i.i.d. random variables $X_1,X_2,...,X_t$ where $\Pr[X_i = 1] = p$ and $\Pr[X_i = 0] = (1-p)$, define $X = \frac{1}{t}\sum_{i \in [t]} X_i$. Then, we have:
    \begin{align}
    \Pr[X \geq (1+\approxerror) \E\tbrac{X}] &\leq \fbrac{\frac{e^\approxerror}{\fbrac{1+\approxerror}^{\fbrac{1+\approxerror)}}}}^{t\E[X]} & 0 \leq \approxerror\label{Eq: Base Chernoff}\\ 
    \Pr[X \leq (1-\approxerror) \E\tbrac{X}] &\leq \exp{\fbrac{-\frac{t\approxerror^2\E\tbrac{X}}{2}}} & 0 \leq \approxerror <1\\
    \Pr[X \geq (1+\approxerror) \E\tbrac{X}] &\leq \exp{\fbrac{-\frac{t\approxerror^2\E\tbrac{X}}{3}}} & 0 \leq \approxerror <1\\
    \Pr[\abs{X - \E\tbrac{X}} \leq \approxerror\E\tbrac{X}] &\leq 2\exp{\fbrac{-\frac{t\approxerror^2\E\tbrac{X}}{3}}} & 0 \leq \approxerror <1\\
    \Pr[X \geq (1+\approxerror) \E\tbrac{X}] &\leq \exp{\fbrac{-\frac{t\approxerror^2\E\tbrac{X}}{2+\approxerror}}} & 0 \leq \approxerror \label{Eq: Uncommon Chernoff}
    \end{align}
\end{lemma}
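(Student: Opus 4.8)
The plan is to prove all five displayed inequalities by the standard exponential-moment (Chernoff) method, deriving the two ``raw'' estimates first and obtaining the remaining three as corollaries through elementary scalar inequalities. Throughout, write $S = \sum_{i \in [t]} X_i = tX$ and $\mu \defeq \E\tbrac{S} = tp = t\E\tbrac{X}$, so that every right-hand side above is expressed in terms of $\mu$.

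First I would establish the master upper-tail estimate, i.e.\ the first display. For any $\lambda > 0$, Markov's inequality applied to $e^{\lambda S}$ gives $\Prob\tbrac{S \geq (1+\approxerror)\mu} \leq e^{-\lambda(1+\approxerror)\mu}\,\E\tbrac{e^{\lambda S}}$. Independence factorizes the expectation as $\prod_{i} \E\tbrac{e^{\lambda X_i}} = \fbrac{1 + p(e^\lambda - 1)}^t$, and the inequality $1 + z \leq e^z$ bounds this by $e^{\mu(e^\lambda - 1)}$. Choosing $\lambda = \ln(1+\approxerror)$ to minimize the resulting exponent $\mu\fbrac{e^\lambda - 1 - \lambda(1+\approxerror)}$ yields exactly $\fbrac{e^\approxerror / (1+\approxerror)^{1+\approxerror}}^\mu$. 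The lower-tail bound (the second display) is symmetric: apply Markov to $e^{-\lambda S}$, optimize at $\lambda = -\ln(1-\approxerror)$, obtain $\fbrac{e^{-\approxerror}/(1-\approxerror)^{1-\approxerror}}^\mu$, and then invoke the elementary inequality $e^{-\approxerror}/(1-\approxerror)^{1-\approxerror} \leq e^{-\approxerror^2/2}$, valid for $0 \leq \approxerror < 1$.

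The remaining three bounds require no further probabilistic work. Bounds three and five follow once we verify the scalar inequalities $\frac{e^\approxerror}{(1+\approxerror)^{1+\approxerror}} \leq e^{-\approxerror^2/3}$ for $0 \leq \approxerror < 1$ and $\frac{e^\approxerror}{(1+\approxerror)^{1+\approxerror}} \leq e^{-\approxerror^2/(2+\approxerror)}$ for all $\approxerror \geq 0$; taking $\mu$-th powers then converts the master bound into the stated forms. The two-sided fourth display is a union bound over the upper and lower tails, using $e^{-\mu\approxerror^2/2} \leq e^{-\mu\approxerror^2/3}$, which is what produces the leading factor of $2$ (interpreting the probability as $\Prob\tbrac{\abs{X - \E\tbrac{X}} \geq \approxerror\E\tbrac{X}}$).

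The main obstacle is purely analytic: establishing the scalar inequality $(1+\approxerror)\ln(1+\approxerror) - \approxerror \geq \approxerror^2/3$ on $[0,1)$ that converts the master bound into the third display (and, with $2+\approxerror$ in place of $3$, the fifth). Writing $g(\approxerror) = (1+\approxerror)\ln(1+\approxerror) - \approxerror - \approxerror^2/3$, we have $g(0) = 0$ and $g'(\approxerror) = \ln(1+\approxerror) - \tfrac{2}{3}\approxerror$. The subtlety is that $g'$ is itself \emph{not} monotone, since $g''(\approxerror) = \tfrac{1}{1+\approxerror} - \tfrac23$ vanishes at $\approxerror = \tfrac12$, so $g'$ rises on $[0,\tfrac12]$ and falls on $[\tfrac12,1)$; one therefore cannot read off the sign of $g'$ from monotonicity alone. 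Instead I would check the two endpoint values $g'(0) = 0$ and $g'(\approxerror) \to \ln 2 - \tfrac23 > 0$ as $\approxerror \to 1^-$, so that unimodality forces $g' \geq 0$ on all of $[0,1)$; hence $g$ is nondecreasing and $g \geq g(0) = 0$, which is the claim. The companion inequality with denominator $2+\approxerror$ is handled the same way and in fact holds on all of $[0,\infty)$. All other steps are mechanical, so once these scalar estimates are in hand the lemma follows at once.
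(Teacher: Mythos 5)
Your proof is correct, but it takes a genuinely different route from the paper's. The paper treats the first four inequalities as standard and cites Mitzenmacher--Upfal for them, proving only the last bound (Eq.~\ref{Eq: Uncommon Chernoff}); it does so by starting from Eq.~\ref{Eq: Base Chernoff} and applying the closed-form tangent-type estimate $\log_e(1+\approxerror) \geq \frac{2\approxerror}{2+\approxerror}$, which turns the exponent $\approxerror - (1+\approxerror)\log_e(1+\approxerror)$ into $-\frac{\approxerror^2}{2+\approxerror}$ in three lines. You instead re-derive everything from first principles: the MGF/Markov argument for the two raw tail bounds, a union bound for the two-sided statement, and a two-derivative unimodality analysis for the scalar inequality $(1+\approxerror)\log_e(1+\approxerror)-\approxerror \geq \frac{\approxerror^2}{3}$ on $[0,1)$. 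Your calculus argument is sound --- you correctly observe that $g''$ changes sign at $\approxerror=\frac{1}{2}$, so one cannot conclude from monotonicity of $g'$ alone, and checking $g'(0)=0$ together with $g'(1^-)=\log_e 2-\frac{2}{3}>0$ does force $g'\geq 0$ by unimodality --- and your write-up is more self-contained, since it also proves the four bounds the paper merely cites. What the paper's logarithm inequality buys is economy: it handles the fifth bound uniformly for all $\approxerror \geq 0$ with no case analysis, and (a shortcut you missed) it subsumes your hardest step, because $\frac{\approxerror^2}{2+\approxerror} \geq \frac{\approxerror^2}{3}$ whenever $0 \leq \approxerror \leq 1$, so the third display follows immediately from the fifth and the delicate unimodality discussion is avoidable. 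One small point in your favor: you correctly flag and repair the typo in the fourth display (the event should read $\abs{X - \E\tbrac{X}} \geq \approxerror\E\tbrac{X}$), which the paper leaves as stated.
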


The bound of Equation~\ref{Eq: Uncommon Chernoff}, while well-known in the literature, is not explicitly given in~\cite{Mitzenmacher_Upfal_2005}. Hence, we provide a derivation for it here for completeness.

\begin{proof}
    Our starting point is the bound of Equation~\ref{Eq: Base Chernoff}. We have for all $\approxerror > 0$:
    \begin{align*}
    \Pr[X \geq (1+\approxerror) \E\tbrac{X}] &\leq \fbrac{\frac{e^\approxerror}{\fbrac{1+\approxerror}^{\fbrac{1+\approxerror)}}}}^{t\E[X]} \\
    &\leq\exp\fbrac{t\E\tbrac{X}\fbrac{\approxerror  - \fbrac{1+\approxerror}\log_e\fbrac{1+\approxerror}}}\\
    &\leq\exp\fbrac{t\E\tbrac{X}\fbrac{\approxerror  - \fbrac{1+\approxerror}\frac{2\approxerror}{2+\approxerror}}}&\log_e(1+x)\geq\frac{2x}{2+x}\\
    &\leq\exp\fbrac{t\approxerror\E\tbrac{X}\fbrac{1  - \frac{2+2\approxerror}{2+\approxerror}}}\\
    &\leq\exp\fbrac{t\approxerror\E\tbrac{X}\fbrac{\frac{-\approxerror}{2+\approxerror}}}\\
    &\leq \exp{\fbrac{-\frac{t\approxerror^2\E\tbrac{X}}{2+\approxerror}}}
    \end{align*}
\end{proof}


We also derive the following bound for poisson trials with upper bounded $p$. Several forms of this bound are known in the literature, but we proof the exact form that we use for completeness.

\begin{lemma}[One Sided Chernoff Like Bound]\label{Lemma: One Sided Threshold Bound}
    Given i.i.d. random variables $X_1,X_2,...,X_t$ where $\Pr[X_i = 1] = p$ and $\Pr[X_i = 0] = (1-p)$ with $p \leq \threshold(1-\approxerror)$ for some $0 \leq \approxerror \leq \frac{1}{2}$, define $X = \frac{1}{t}\sum_{i \in [t]} X_i$. Then, we have:
    \begin{align*}
    \Pr[X \geq t\threshold] &\leq \exp\fbrac{-\frac{\approxerror^2t\threshold}{10}}
    \end{align*}
\end{lemma}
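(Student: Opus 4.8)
The plan is to read the target event as the empirical mean $X = \frac1t\sum_{i\in[t]} X_i$ exceeding the level $\threshold$ (equivalently, the sum $S \defeq tX$ reaching $t\threshold$), exactly as this lemma is applied in Lemma~\ref{Lemma: General Algorithm ApproxDegree Low Degree Vertex Guarantee}, and to derive the inequality directly from the multiplicative Chernoff bound of Equation~\ref{Eq: Uncommon Chernoff}. The only feature distinguishing this from a textbook upper-tail bound is that the hypothesis pins $p$ down only from above, $p \le \threshold(1-\approxerror)$, while the exponent must be expressed through $\threshold$ rather than the true $p$. So the first move is to reduce to the extremal value of $p$.

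First I would argue monotonicity of the upper tail in $p$. Realizing each trial as $X_i = \mathbf{1}[U_i \le p]$ for a shared uniform $U_i$, raising $p$ increases every $X_i$ pointwise, hence increases $S$ and therefore $\Pr[X \ge \threshold]$. Consequently it suffices to establish the bound at the worst case $p = \threshold(1-\approxerror)$, where $\E[X] = \threshold(1-\approxerror)$.

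Next I would write the level as a multiplicative deviation from the mean: set $1+\delta = \threshold/\E[X] = \frac{1}{1-\approxerror}$, so that $\delta = \frac{\approxerror}{1-\approxerror} \ge \approxerror > 0$. Applying Equation~\ref{Eq: Uncommon Chernoff} with deviation $\delta$ gives
\[
\Pr[X \ge \threshold] = \Pr[X \ge (1+\delta)\E[X]] \le \exp\!\fbrac{-\frac{t\delta^2\E[X]}{2+\delta}}.
\]
Substituting $\delta = \frac{\approxerror}{1-\approxerror}$ and $\E[X] = \threshold(1-\approxerror)$ and simplifying, the exponent collapses to $\frac{t\delta^2\E[X]}{2+\delta} = \frac{\approxerror^2 t\threshold}{2-\approxerror}$.

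Finally, since $0 \le \approxerror \le \frac12$ we have $2-\approxerror \le 2 \le 10$, so the exponent is at least $\frac{\approxerror^2 t\threshold}{2} \ge \frac{\approxerror^2 t\threshold}{10}$, which is the claimed inequality (with a good deal of slack in the constant). I expect no genuine obstacle: once $p$ is pushed to its extremal value everything reduces to the already-established Chernoff bound, and the single step deserving care is the monotonicity reduction, which is what legitimizes replacing the unknown $p$ by $\threshold(1-\approxerror)$ when expressing the exponent in terms of $\threshold$.
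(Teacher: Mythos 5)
Your proposal is correct, but it takes a genuinely different route from the paper's proof. The paper does not reduce to the extremal value of $p$; instead it performs a case analysis on the unknown $p$ by partitioning the admissible range into dyadic buckets $\sB_i = \tbrac{\threshold(1-\approxerror)/2^{i+1},\,\threshold(1-\approxerror)/2^{i}}$, and within each bucket applies Equation~\ref{Eq: Uncommon Chernoff} with deviation parameter $2^i(1+\approxerror)-1$, using the bucket floor to lower-bound $\E\tbrac{X}$; the dyadic losses are what produce the constant $10$. You instead observe, via the standard monotone coupling $X_i = \indicator\tbrac{U_i \le p}$, that the upper tail $\Pr\tbrac{X \ge \threshold}$ is nondecreasing in $p$, reduce to the single worst case $p = \threshold(1-\approxerror)$, and apply Equation~\ref{Eq: Uncommon Chernoff} once with $\delta = \approxerror/(1-\approxerror)$; your algebra (exponent $\approxerror^2 t\threshold/(2-\approxerror)$) is right, and since $2-\approxerror \le 2 \le 10$ this even yields a sharper constant than the paper's. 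Your reading of the event as $X \ge \threshold$ (equivalently the sum reaching $t\threshold$) is also the right one: it matches how the lemma is invoked in Lemma~\ref{Lemma: General Algorithm ApproxDegree Low Degree Vertex Guarantee}, and in any case it implies the literally stated event for $t \ge 1$. The only (trivial) corner cases you leave implicit are $\approxerror = 0$, where the bound is vacuous, and $\threshold(1-\approxerror) > 1$, where the tail probability is $0$; neither affects correctness. If you wished to avoid the coupling altogether, you could also apply the Chernoff bound directly with $\delta_p = \threshold/p - 1$ and note that the resulting exponent $t(\threshold-p)^2/(\threshold+p)$ is minimized over $p \le \threshold(1-\approxerror)$ at the endpoint, which gives the same bound.
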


\begin{proof}
    Let us divide the possible range of values $p$ can take into buckets $\sB_i = \tbrac{\frac{\threshold(1-\approxerror)}{2^{i+1}},\frac{\threshold(1-\approxerror)}{2^{i}}}$, for $i \geq 0$. For any value of $p$ in the $i$-th such bucket $\sB_i$, we have:
    \begin{align*}
        \E[X] = \E\tbrac{\frac{1}{t}\sum_{i \in [t]}X_i} = \frac{1}{t}\sum_{i\in[t]} \E[X_i] = \Pr\tbrac{X_i = 1}
    \end{align*}
    Also, as we are considering the $i$-th bucket $\sB_i$, we have:
    \begin{align}
     \frac{\threshold\fbrac{1-\approxerror}}{2^{i}} \geq \E\tbrac{X}\geq \frac{\threshold\fbrac{1-\approxerror}}{2^{i+1}}\label{Eq: Expectation Bound | One Sided Chernoff}
    \end{align}
    Now, we have:
    \begin{align*}
        &\Pr[X \geq t\threshold]\\
        \leq&\Pr\tbrac{X \geq \frac{2^i}{1-\approxerror}\E\tbrac{X}}&\text{By Equation~\ref{Eq: Expectation Bound | One Sided Chernoff}}\\
        \leq&\Pr\tbrac{X \geq 2^i(1+\approxerror)\E\tbrac{X}}&\approxerror\leq 1 \implies \frac{1}{1-\approxerror}\geq1+\approxerror \\
        \leq&\exp{\fbrac{-\frac{\fbrac{2^i(1+\approxerror)-1}^2t\E\tbrac{X}}{1+ 2^i(1+\approxerror)}}}&\text{By Lemma~\ref{Lemma: Multiplicative Chernoff Bound}}\\
        \leq&\exp{\fbrac{-\frac{2^{2i}\approxerror^2t\E\tbrac{X}}{1+ 2^i(1+\approxerror)}}}&i\geq 0\implies\fbrac{2^i(1+\approxerror)-1}\geq2^i\approxerror\\
        \leq&\exp{\fbrac{-\frac{2^{2i}\approxerror^2t\E\tbrac{X}}{2^i(2+\approxerror)}}}&i\geq 0\implies1+ 2^i(1+\approxerror)\leq 2^i(2+\approxerror)\\
        \leq&\exp{\fbrac{-\frac{2^{2i}\approxerror^2 t\threshold\fbrac{1-\approxerror}}{2^{2i+1}(2+\approxerror)}}}&\text{By Equation~\ref{Eq: Expectation Bound | One Sided Chernoff}}\\
        =&\exp\fbrac{-\frac{\approxerror^2 t\threshold\fbrac{1-\approxerror}}{2\fbrac{2+\approxerror}}}\\
        \leq&\exp\fbrac{-\frac{\approxerror^2t\threshold}{10}}&\approxerror\leq\frac{1}{2}\implies\frac{1-\approxerror}{2\fbrac{1-\approxerror}}\geq\frac{1}{10}
    \end{align*}
\end{proof}



\newpage

\section{Proof of the General Lower Bounds}\label{Appendix: General LB}

In this section, we provide the proof for the general lower bounds. All the results are restated here for clarity.


\subsubsection{Streaming Model}\label{Subsubsection: Streaming General LB}

\hfill

\noindent The broad idea of obtaining lower bounds in the streaming setup is to construct one part of the streaming input entirely from Alice's $\alicestring$, and the remaining part of the stream from Bob's $\bobstring$. If the streaming algorithm uses $\streamspace$ bits to compute the function, then $\streamspace$ bits suffice to solve the corresponding communication problem in the one-way setting. Now, we state and prove our result for this section:

\StrmGenLB*
\begin{proof}
    We use the gadgets described above to reduce from the $\promisedisjointness{\stringlength}$ problem. Given an input $\alicestring$, Alice generates the edge stream of $\stream^A = \fbrac{\streamelement_1,\ldots,\streamelement_\stringlength}$ where $\streamelement_i = \fbrac{5i,5i+1}$ if $\alicestring_i = 1$, and $\fbrac{5i+1,5i+3}$ otherwise. Correspondingly, Bob generates the edge stream $\stream^B = \fbrac{\edge_{\stringlength+1},\ldots,\edge_{2\stringlength}}$ where $\streamelement_{\stringlength+i} = \fbrac{5i,5i+2}$ if $\bobstring_i = 1$, and $\fbrac{5i+2,5i+4}$ otherwise. 
    
    Now, observe that the graph has $\vertexcount = 5\stringlength$ vertices, and $\edgecount = 2\stringlength$ edges. Furthermore, if $\promisedisjointness{\stringlength}\fbrac{\alicestring,\bobstring} = 1$, then the ccdh is of the form:
    \begin{align*}
        \ccdh(i) = \begin{dcases}
            n & i\leq 1\\
            0 & o/w
        \end{dcases}
    \end{align*}
    Alternatively, if $\promisedisjointness{\stringlength}\fbrac{\alicestring,\bobstring} = 0$, then:
        \begin{align*}
        \ccdh(i) = \begin{dcases}
            n & i\leq 1\\
            1 & i = 2\\
            0 & o/w
        \end{dcases}
    \end{align*}

    Let $\algo$ be a one-pass streaming algorithm that uses $\streamspace$ space to compute the $\biapprox$-BMA of the CCDH. As the algorithm $\algo$ computes an $\biapprox$-BMA $\approxccdh$ for $\approxdom < \frac{1}{2}$, we would have:
    \begin{align*}
        \approxccdh(2) = \begin{dcases}
            >0 & if \text{ }\promisedisjointness{\stringlength}\fbrac{\alicestring,\bobstring} = 0\\
            0 & if \text{ }\promisedisjointness{\stringlength}\fbrac{\alicestring,\bobstring} = 1
        \end{dcases}
    \end{align*}

    Then, we would be able to solve $\promisedisjointness{\stringlength}$ using $\bigo{\streamspace}$ bits. Given the lower bound on $\promisedisjointness{}$ of Lemma~\ref{Lemma: Hardness of Promise Disjointness}, , we have $\streamspace = \bigomega{\vertexcount}$.
\end{proof}


\subsubsection{Query Model}\label{SubsectionL Query General LB}

\hfill

\noindent First, we give a short overview of the idea of the techniques established in~\cite{EdenRosenbaum/Approx/2018/LowerBoundGraphCommunication}. Let us denote by $\graphclass{\vertexcount}$ the set of all possible graphs containing $\vertexcount$ vertices. Consider a communication complexity problem $\func{f}{\sbrac{0,1}^\stringlength\times\sbrac{0,1}^\stringlength}{\sbrac{0,1}}$ with a known  lower bound of $\bigomega{\ccspace}$ bits of communication. Let $\func{\embedding}{\sbrac{0,1}^\stringlength\times\sbrac{0,1}^\stringlength}{\graphclass{f(\stringlength)}}$ for some $\func{f}{\Nat}{\Nat}$ be an embedding function that given $\alicestring,\bobstring \in \sbrac{0,1}^\stringlength$ constructs a graph $\ccgraph \in \graphclass{f(\stringlength)}$. Now, if for some property $\func{\sP}{\graphclass{\vertexcount}}{\R}$, we have:
\begin{align*}
    \sP\fbrac{\ccgraph} = \begin{dcases}
        p & \text{if }f(\alicestring,\bobstring) = 1\\
        p' & \text{if }f(\alicestring,\bobstring) = 0\\
    \end{dcases}
\end{align*}
Then any algorithm that computes a sufficiently good approximation of $\sP$ to distinguish between the cases $p$ and $p'$ must use $\bigomega{\frac{\ccspace}{\querycomm}}$ queries if each query allowed can be implemented in the communciation model using $\querycomm$ bits of communication. Our main objective in establishing the lower bound is to construct an appropriate embedding $\embedding$ and establish an upper bound on $\querycomm$ for all allowable queries given the corresponding graph property $\sP$.

\QueryGenLB*

\begin{proof}
    The embedding $\embedding$ that we use here is as described above in Figure~\ref{Figure: Gadget for Lower Bound General Graph}. For any $\alicestring,\bobstring \in \sbrac{0,1}^\stringlength$, $\embedding(\alicestring,\bobstring)$ creates a graph $\ccgraph = \fbrac{\ccvertexset,\ccedgeset}$ with $5\stringlength$ vertices labeled $\cup_{i\in\tbrac{\stringlength}} \{5i,5i+1,5i+2,5i+3,5i+4\} = \ccvertexset$. For the edges in $\ccedgeset$, for each $i \in \tbrac{\stringlength}$, $\fbrac{5i,5i+1} \in \ccedgeset$ if $\alicestring_i = 1$, $\fbrac{5i+1,5i+3} \in \ccedgeset$ if $\alicestring_i = 0$. Similarly, we also add for each $i \in \tbrac{\stringlength}$, $\fbrac{5i,5i+2} \in \ccedgeset$ if $\bobstring_i = 1$, $\fbrac{5i+2,5i+4} \in \ccedgeset$ if $\bobstring_i = 0$. Now, as stated earlier, we have for this construction:
    \begin{align*}
        \approxccdh_{\ccgraph}(2) = \begin{dcases}
            >0 & \text{if  }\promisedisjointness{\stringlength}\fbrac{\alicestring,\bobstring} = 0\\
            0 & \text{if  }\promisedisjointness{\stringlength}\fbrac{\alicestring,\bobstring} = 1
        \end{dcases}
    \end{align*}
    An algorithm that constructs an $\biapprox$-BMA of ccdh for any $\approxdom < \frac{1}{2}$ for all graphs will be able to distinguish between these two cases. Now, we discuss how to implement the queries in the communication model:
    \begin{itemize}
        \item \textbf{\degreeq{}: } Given a vertex $j = 5i+k$, $0 \leq k \leq 4$, checking $\alicestring_i$ and $\bobstring_i$ suffices to know the degree. Therefore, we need at most $2$ bits of communication to implement the \degreeq{} query.
        \item \textbf{\randedgeq{}: } Each edge in the graph has an one-to-one correspondence with each element in the two strings. Choose one index out of the $[2\stringlength]$ indices at random and communicate that bit to obtain the corresponding edge. Hence, we need exactly $1$ bit of communication to implement the \randedgeq{} given access to shared random string to Alice and Bob.
        \item \textbf{\edgeexistsq{}: } Each vertex-pair has at most one possible way of having an edge, defined by corresponding $\alicestring_i$ or $\bobstring_i$. Communicating this bit suffices to decide whether the given vertex-pair contain an edge or not. Thus, communicating $1$ bit suffices to implement \edgeexistsq{} query.
        \item \textbf{\neighbourq{}: } Each vertex have one-possible neighbour, simulate \edgeexistsq{} for that vertex. Return it if there, otherwise return null. Hence, communicating $1$ bit suffices to implement the \neighbourq{} query.
    \end{itemize}
    As described above, all the queries can be implemented using $\bigo{1}$ bits of communication. Thus, any algorithm that computes an $\biapprox$-BMA of ccdh for any $\approxdom < \frac{1}{2}$ requires to make $\bigomega{\stringlength}$ queries. Given, $\vertexcount = 5\stringlength$, we have the stated lower bound.
\end{proof}

\newpage

\section{Proof of the h-index Sensitive Lower Bounds}

In this section, we provide the proof for the h-index sensitive lower bounds. All the results are restated here for clarity.


\subsubsection{Streaming Model}\label{Subsubsection: Streaming h-index LB}

\hfill

\noindent Here, we prove our h-index sensitive lower bounds for computing an $\biapprox$-BMA of ccdh in the streaming model.

\StrmhindLB*

\begin{proof}
    We use the construction described above to reduce from the $\promisedisjointness{\stringlength}$ problem. We construct a graph on $\vertexcount$ vertices with six independent sets of vertices $\vertexsetA,\vertexsetB,\vertexsetC,\vertexsetD,\vertexsetE,\isovertexset$ with $\size{\vertexsetA} = \size{\vertexsetB} = \size{\vertexsetC} = \size{\vertexsetD} = \size{\vertexsetE} = \stringlength$, and $\size{\isovertexset} = \vertexcount - 5\stringlength$. For the stream of edges ,given an input $\alicestring \in \sbrac{0,1}^\stringlength$, Alice generates the edge stream of $\stream^A = \fbrac{\stream^A_1,\stream^A_2,\ldots,\stream^A_\stringlength}$ where for each $i \in \tbrac{\stringlength}$, alice generates $\hindex$ edges $\stream^A_i$ as follows:

    \begin{itemize}
        \item \textbf{$\alicestring_i = 0$: } $\stream^A_i = \fbrac{\fbrac{\vertexA_i,\vertexB_i},\fbrac{\vertexA_i,\vertexB_{i+1}},\ldots,\fbrac{\vertexA_i,\vertexB_{i+\frac{\hindex}{4}-1}},\fbrac{\vertexD_i,\vertexB_{i+\frac{\hindex}{4}}},\fbrac{\vertexD_i,\vertexB_{i+\frac{\hindex}{4}+1}},\ldots,\fbrac{\vertexD_i,\vertexB_{i+\hindex-1}}}$
        \item \textbf{$\alicestring_i = 1$: } $\stream^A_i = \fbrac{\fbrac{\vertexA_i,\vertexB_i},\fbrac{\vertexA_i,\vertexB_{i+1}},\ldots,\fbrac{\vertexA_i,\vertexB_{i+\frac{3\hindex}{4}-1}},\fbrac{\vertexD_i,\vertexB_{i+\frac{3\hindex}{4}}},\fbrac{\vertexD_i,\vertexB_{i+\frac{3\hindex}{4}+1}},\ldots,\fbrac{\vertexD_i,\vertexB_{i+\hindex-1}}}$
    \end{itemize}

    Correspondingly, given an input $\bobstring \in \sbrac{0,1}^\stringlength$, Bob generates the edge stream of $\stream^B = \fbrac{\stream^B_1,\stream^B_2,\ldots,\stream^B_\stringlength}$ where for each $i \in \tbrac{\stringlength}$, Bob generates $\hindex$ edges $\stream^B_i$ as follows:

    \begin{itemize}
        \item \textbf{$\bobstring_i = 0$: } $\stream^B_i = \fbrac{\fbrac{\vertexA_i,\vertexC_i},\fbrac{\vertexA_i,\vertexC_{i+1}},\ldots,\fbrac{\vertexA_i,\vertexC_{i+\frac{\hindex}{4}-1}},\fbrac{\vertexE_i,\vertexC_{i+\frac{\hindex}{4}}},\fbrac{\vertexE_i,\vertexC_{i+\frac{\hindex}{4}+1}},\ldots,\fbrac{\vertexE_i,\vertexC_{i+\hindex-1}}}$
        \item \textbf{$\bobstring_i = 1$: } $\stream^B_i = \fbrac{\fbrac{\vertexA_i,\vertexC_i},\fbrac{\vertexA_i,\vertexC_{i+1}},\ldots,\fbrac{\vertexA_i,\vertexC_{i+\frac{3\hindex}{4}-1}},\fbrac{\vertexE_i,\vertexC_{i+\frac{3\hindex}{4}}},\fbrac{\vertexE_i,\vertexC_{i+\frac{3\hindex}{4}+1}},\ldots,\fbrac{\vertexE_i,\vertexC_{i+\hindex-1}}}$
    \end{itemize}    
    
    Now, observe that the graph has $\vertexcount$ vertices by construction, and $\edgecount = \hindex\stringlength$ edges. Note that each vertex in $\vertexsetB$ and $\vertexsetC$ has degree exactly $\hindex$, and there are total of $2\hindex$ such vertices. Each vertex in $\vertexsetD$ and $\vertexsetE$ has degree at most $\frac{3\hindex}{4}$. Each vertex $\vertexA_i$ in $\vertexsetA$ has degree $\frac{6\hindex}{4}$ if $\alicestring_i = \bobstring_i = 1$, and at most $\hindex$ otherwise. As we have considered the promise version of $\disjointness{}$, there can be only one vertex in $\vertexsetA$ with degree greater than $\hindex$.  Hence, the constructed graph has h-index $\hindex$. Furthermore, if $\promisedisjointness{\stringlength}\fbrac{\alicestring,\bobstring} = 1$, then the ccdh is of the form:
    \begin{align*}
        \ccdh(i) = \begin{dcases}
            > \hindex & i \leq \hindex\\
            0 & i > \hindex
        \end{dcases}
    \end{align*}
    Alternatively, if $\promisedisjointness{\stringlength}\fbrac{\alicestring,\bobstring} = 0$, then:
        \begin{align*}
        \ccdh(i) = \begin{dcases}
            > \hindex & i \leq \hindex\\
            1 & \hindex < i \leq \frac{6\hindex}{4}\\
            0 & \frac{6\hindex}{4} < i
        \end{dcases}
    \end{align*} 
    Let $\algo$ be a one-pass streaming algorithm that uses $\streamspace$ space to compute the $\biapprox$-BMA of the CCDH. As the algorithm $\algo$ computes an $\biapprox$-BMA $\approxccdh$ for $\approxdom < \frac{1}{3}$, we would have:
    \begin{align*}
        \approxccdh\fbrac{\frac{6\hindex}{4}} = \begin{dcases}
            >0 & if \text{ }\promisedisjointness{\stringlength}\fbrac{\alicestring,\bobstring} = 0\\
            0 & if \text{ }\promisedisjointness{\stringlength}\fbrac{\alicestring,\bobstring} = 1
        \end{dcases}
    \end{align*}

    Then, we would be able to solve $\promisedisjointness{\stringlength}$ using $\bigo{\streamspace}$ bits. Given the lower bound on $\promisedisjointness{}$ of $\bigomega{\stringlength}$ (~\Cref{Lemma: Hardness of Promise Disjointness}), and the fact that $\stringlength = \frac{\edgecount}{\hindex}$, we have $\streamspace = \bigomega{\frac{\edgecount}{\hindex}}$.
\end{proof}


\subsubsection{Query Model}\label{SubsectionL Query h-index LB}

\hfill

\noindent Here, we prove our h-index sensitive lower bounds for computing an $\biapprox$-BMA of ccdh in the query model.

\QueryhindLB*

\begin{proof}
    We first describe the embedding $\embedding$ from the strings $\alicestring$ and $\bobstring$. The embedding $\embedding$ that we use here is as described above in Figure~\ref{Figure: Gadget for Lower Bound h index}. Formally, we start with the vertex set $\ccvertexset$, consisting of six independent sets of vertices $\vertexsetA,\vertexsetB,\vertexsetC,\vertexsetD,\vertexsetE,\isovertexset$ with $\size{\vertexsetA} = \size{\vertexsetB} = \size{\vertexsetC} = \size{\vertexsetD} = \size{\vertexsetE} = \stringlength$, and $\size{\isovertexset} = \vertexcount - 5\stringlength$. For the edge set $\ccedgeset$, for each $\alicestring_i$, we add edges:
    \begin{align*}
        &\sbrac{\fbrac{\vertexA_i,\vertexB_i},\fbrac{\vertexA_i,\vertexB_{i+1}},\ldots,\fbrac{\vertexA_i,\vertexB_{i+\frac{\hindex}{4}-1}},\fbrac{\vertexD_i,\vertexB_{i+\frac{\hindex}{4}}},\fbrac{\vertexD_i,\vertexB_{i+\frac{\hindex}{4}+1}},\ldots,\fbrac{\vertexD_i,\vertexB_{i+\hindex-1}}} & \text{if } \alicestring_i=0\\        &\sbrac{\fbrac{\vertexA_i,\vertexB_i},\fbrac{\vertexA_i,\vertexB_{i+1}},\ldots,\fbrac{\vertexA_i,\vertexB_{i+\frac{3\hindex}{4}-1}},\fbrac{\vertexD_i,\vertexB_{i+\frac{3\hindex}{4}}},\fbrac{\vertexD_i,\vertexB_{i+\frac{3\hindex}{4}+1}},\ldots,\fbrac{\vertexD_i,\vertexB_{i+\hindex-1}}} & \text{if } \alicestring_i = 1       
    \end{align*}
Correspondingly, we also add  for each $\bobstring_i$:
    \begin{align*}
        &\sbrac{\fbrac{\vertexA_i,\vertexC_i},\fbrac{\vertexA_i,\vertexC_{i+1}},\ldots,\fbrac{\vertexA_i,\vertexC_{i+\frac{\hindex}{4}-1}},\fbrac{\vertexE_i,\vertexC_{i+\frac{\hindex}{4}}},\fbrac{\vertexE_i,\vertexC_{i+\frac{\hindex}{4}+1}},\ldots,\fbrac{\vertexE_i,\vertexC_{i+\hindex-1}}} & \text{if }\bobstring_i = 0\\
        &\sbrac{\fbrac{\vertexA_i,\vertexC_i},\fbrac{\vertexA_i,\vertexC_{i+1}},\ldots,\fbrac{\vertexA_i,\vertexC_{i+\frac{3\hindex}{4}-1}},\fbrac{\vertexE_i,\vertexC_{i+\frac{3\hindex}{4}}},\fbrac{\vertexE_i,\vertexC_{i+\frac{3\hindex}{4}+1}},\ldots,\fbrac{\vertexE_i,\vertexC_{i+\hindex-1}}} & \text{if } \bobstring_i = 1
    \end{align*}

     For any $\alicestring,\bobstring \in \sbrac{0,1}^\stringlength$, $\embedding(\alicestring,\bobstring)$ creates a graph $\ccgraph = \fbrac{\ccvertexset,\ccedgeset}$ with $\vertexcount$ vertices, $\stringlength\hindex$ edges. Fixing $\stringlength = \frac{\edgecount}{\hindex}$ makes the number of edges $\edgecount$. Again, by the earlier arguments, the h-index of the graph is always $\hindex$. Now, consider $\algo$ to be an algorithm that computes an $\biapprox$-BMA of the ccdh for all graphs using $\bigo{\ccspace}$ queries. As stated earlier, we have for this construction:
    \begin{align*}
        \approxccdh_{\ccgraph}\fbrac{\frac{6\hindex}{4}} = \begin{dcases}
            >0 & if \text{ }\promisedisjointness{\stringlength}\fbrac{\alicestring,\bobstring} = 0\\
            0 & if \text{ }\promisedisjointness{\stringlength}\fbrac{\alicestring,\bobstring} = 1
        \end{dcases}
    \end{align*}
    Thus, we can solve $\promisedisjointness{}$ using $\algo$. Now, we discuss how to implement the queries in the communication model:
    \begin{itemize}
        \item \textbf{\degreeq{}: } Given a vertex in $\vertexsetB$ or $\vertexsetC$, return $\hindex$. For a vertex in $\vertexD \in \vertexsetD$ (resp. $\vertexE \in \vertexsetE$), return $\frac{3\hindex}{4}$ if $\alicestring_i = 0$ (resp. $\bobstring_i = 0$), and return $\frac{\hindex}{4}$ if $\alicestring_i = 1$ (resp. $\bobstring_i = 1$). For a vertex in $\vertexsetA$, if $\alicestring_i =\bobstring_i = 0$, return $\frac{\hindex}{2}$. If $\alicestring_i = 0$ and $\bobstring_i = 1$ or $\alicestring_i = 1$ and $\bobstring_i = 0$, return $\hindex$. If $\alicestring_i = \bobstring_i = 1$, return $\frac{6\hindex}{4}$. Hence, for any degree query, exchanging at most $2$ elements of the strings suffice. Thus, $\degreeq{}$ can be implemented using at most $2$ bits of communication.
        
        \item \textbf{\randedgeq{}: } Each edge in the graph is associated with a element in either of the two strings. On the other hand, each string element is associated with exactly $\hindex$ edges. To generate a $\randedgeq{}$, choose one index out of the $[2\stringlength]$ indices at random and communicate that bit, then use shared randomness to obtain an edge out of the possible $\hindex$ edges uniformly at random. Hence, we need exactly $1$ bit of communication to implement the $\randedgeq{}$ given access to shared random string to Alice and Bob.
        \item \textbf{\edgeexistsq{}: } Each vertex-pair has at most one possible way of having an edge, defined by corresponding $\alicestring_i$ or $\bobstring_i$. Communicating this bit suffices to decide whether the given vertex-pair contain an edge or not. Thus, communicating $1$ bit suffices to implement $\edgeexistsq{}$ query.
        \item \textbf{\neighbourq{$\fbrac{\vertex,j}$}: } For a vertex $\vertexD_i$ (resp. $\vertexE_i$) in $\vertexsetD$ (resp. $\vertexsetE$), if corresponding $\alicestring_i$ (resp. $\bobstring_i$) is $1$, return $\vertexC_{i+j}$ if $j \leq \frac{h}{4}$, and if $\alicestring_i$ (resp. $\bobstring_i$) is $0$, return $\vertexC_{i+j}$ if $j \leq \frac{3h}{4}$. For a vertex $\vertexB_i$ (resp. $\vertexC_i$) in $\vertexsetB$ (resp. $\vertexsetC$), if $j \geq \frac{3\hindex}{4}$, return $\vertexD_{i-\hindex+j}$ (resp. $\vertexE_{i-\hindex+j}$, else if $j \leq \frac{\hindex}{4}$, return $\vertexA_{i-\hindex+j}$, else if $\alicestring_{i-\hindex+j} = 1$(resp. $\bobstring_{i-\hindex+j} = 1$, return $\vertexA_{i-\hindex+j}$, else return $\vertexD_{i-\hindex+j}$ (resp. $\vertexE_{i-\hindex+j}$). For a vertex $\vertexA_i$ in $\vertexsetA$, if $j \leq \frac{\hindex}{4}$, return $\vertexB_{i+j}$, else if $\alicestring = 1$, return $\vertexB_{i+j}$ if $j \leq \frac{3\hindex}{4}$.
    \end{itemize}
    As described above, all the queries can be implemented using $\bigo{1}$ bits of communication. Thus, any algorithm that computes an $\biapprox$-BMA of ccdh for any $\approxdom < \frac{1}{2}$ requires to make $\bigomega{\stringlength}$ queries. Given, $\edgecount = 2\hindex\stringlength$, we have the stated lower bound.
\end{proof}

\newpage

\section{Experiments}
\label{sec:appendix-experiment}
In this section, we evaluate our algorithms on datasets from the SNAP network datasets~\cite{snapnets}. The datasets consists of those studied in the earlier works~\cite{SeshadriMcGregor/ICDM/2015/CCDHStreamingEmpirical,Eden/WWW/2017/CCDHinQueryModel}, and range from social network data, web graphs, location data etc. We implemented our algorithms in python using the NetworkX library~\cite{SciPyProceedings_11}. We have used a $64$-bit and $4$ GHz $12$-th Gen Intel(R) Core(TM) i7-12700 processor with $12$ cores, and $32$ GB RAM.

\subsection{Implementation Details}

We implement the algorithm in its generic form (Algorithm~\ref{Algorithm: Generalized CCDH Approx}) through obtaining degree and edge samples directly. We set the approximation parameters as $\approxdom = \approxran = 0.1$, and the universal constant $\constant$ is fixed at $0.01$. Although an $\approxhindex$ satisfying our criteria can be obtained sublinearly through the techniques in~\cite{AssadiNguyen/Approx/2022/OptimalH-IndexAndTriangle}, we implement using the exact value of h-index for simplicity. 

\subsection{Observations}

\begin{itemize}
    \item \textbf{General Performance:} For all the graphs that we have considered, our algorithm has produced accurate estimates for the ccdh across all degrees. Refer to~\Cref{Subsec: Figures} for visualization of the results.
    \item \textbf{Sampling ratio improves for larger graphs:}As observed by~\cite{Eden/WWW/2017/CCDHinQueryModel}, the larger graphs, with the exception of \texttt{cit-Patents} have higher h-index. This results in the sample ratio being lower for these graphs, as illustrated in~\Cref{tab:simulation data}.
    \item \textbf{Body is harder than Head and Tail: } The estimates produced by our algorithm approximates the head and tail parts of the ccdh better than the body, or the middle section. This suggests there might be a better algorithm if we want to approximate just the tail section of the ccdh, which was another question posed in~\cite{sublinearProblemEstimating}.
\end{itemize}

\begin{table}[ht]
\centering
\begin{tabular}{l c c c c c c}
\hline
Data  & \#Vertices & \#Vertex  & \#Edges & \#Edge  & \%Sample  & h-index \\
 Set & ($\vertexcount$) & Sample($\vertexsamplesize$) & ($\edgecount$)  & Sample ($\edgesamplesize$) & ($\nicefrac{\vertexsamplesize}{\vertexcount}=\nicefrac{\edgesamplesize}{\edgecount}$) & ($\hindex$) \\
\hline
\texttt{loc-gowalla} & \(1.97 \times 10^5\) & 4,356 & \(9.50 \times 10^5\) & 21,060 & 2.22\% & 275 \\
\texttt{web-Stanford} & \(2.82 \times 10^5\) & 4,142 & \(1.99 \times 10^6\) & 29,281 & 1.47\% & 427 \\
\texttt{web-BerkStan} & \(6.85 \times 10^5\) & 6,457 & \(6.65 \times 10^6\) & 62,659 & 0.94\% & 713 \\
\texttt{web-Google} & \(8.76 \times 10^5\) & 14,298 & \(4.32 \times 10^6\) & 70,570 & 1.63\% & 419 \\
\texttt{com-youtube} & \(1.13 \times 10^6\) & 14,463 & \(2.99 \times 10^6\) & 38,074 & 1.27\% & 547 \\
\texttt{soc-pokec} & \(1.63 \times 10^6\) & 23,738 & \(2.23 \times 10^7\) & 324,235 & 1.45\% & 492 \\
\texttt{as-skitter} & \(1.70 \times 10^6\) & 12,389 & \(1.11 \times 10^7\) & 81,034 & 0.73\% & 982 \\
\texttt{wiki-Talk} & \(2.39 \times 10^6\) & 16,652 & \(4.66 \times 10^6\) & 32,406 & 0.70\% & 1,056 \\
\texttt{cit-Patents} & \(3.77 \times 10^6\) & 120,600 & \(1.65 \times 10^7\) & 527,764 & 3.19\% & 237 \\
\texttt{com-lj} & \(4.00 \times 10^6\) & 37,514 & \(3.47 \times 10^7\) & 325,431 & 0.94\% & 810 \\
\texttt{soc-LiveJournal1} & \(4.85 \times 10^6\) & 37,726 & \(4.34 \times 10^7\) & 337,528 & 0.78\% & 989 \\
\hline
\end{tabular}
\vspace{10pt}
\caption{Summary data for simulation of Algorithm~\ref{Algorithm: Generalized CCDH Approx}}
\label{tab:simulation data}
\end{table}

\subsection{Figures}\label{Subsec: Figures}

\begin{figure}[ht]
    \centering
    \begin{minipage}{0.49\textwidth}
        \centering
        \includegraphics[width=1\linewidth]{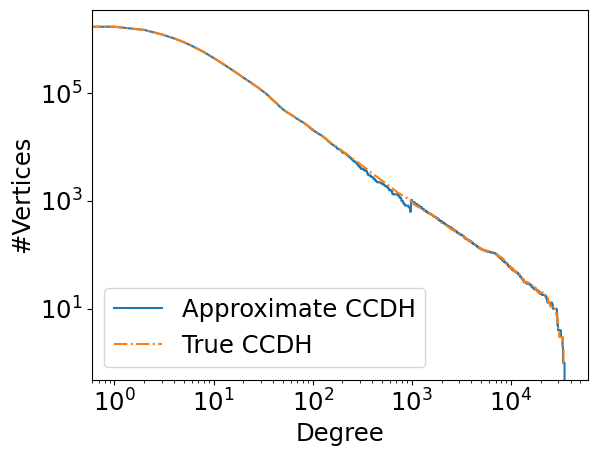}
        \caption{as-skitter}
        \ifarxiv{  } \else{ \Description[]{} } \fi
        \label{fig:as-skitter}
    \end{minipage}\hfill
    \begin{minipage}{0.49\textwidth}
        \centering
        \includegraphics[width=1\linewidth]{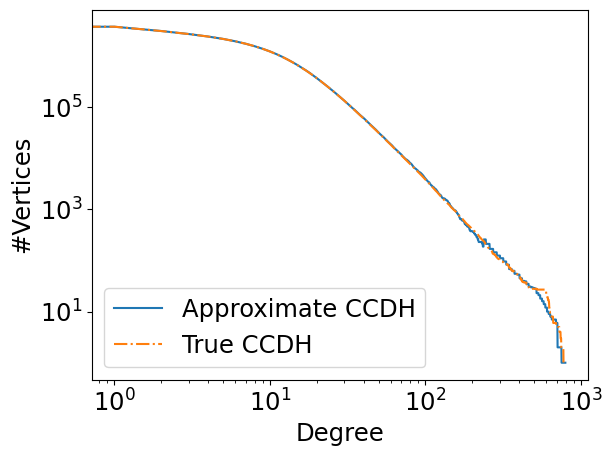}
        \caption{cit-Patents}
        \ifarxiv{  } \else{ \Description[]{} } \fi
        \label{fig:cit-Patents}
    \end{minipage}
\end{figure}

\begin{figure}[ht]
    \centering
    \begin{minipage}{0.49\textwidth}
        \centering
        \includegraphics[width=1\linewidth]{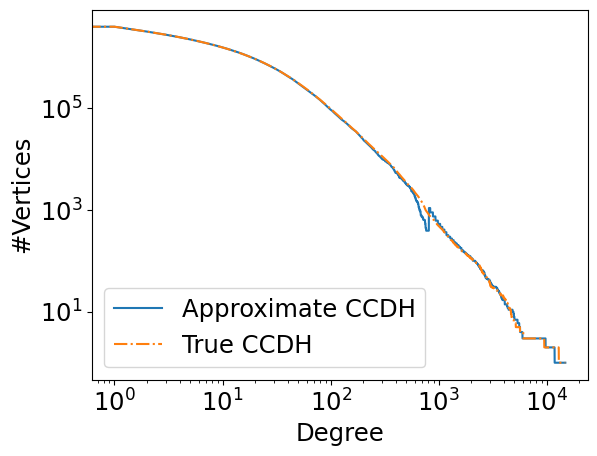}
        \caption{com-lj}
        \ifarxiv{  } \else{ \Description[]{} } \fi
        \label{fig:com-lj}
    \end{minipage}
    \begin{minipage}{0.49\textwidth}
        \centering
        \includegraphics[width=1\linewidth]{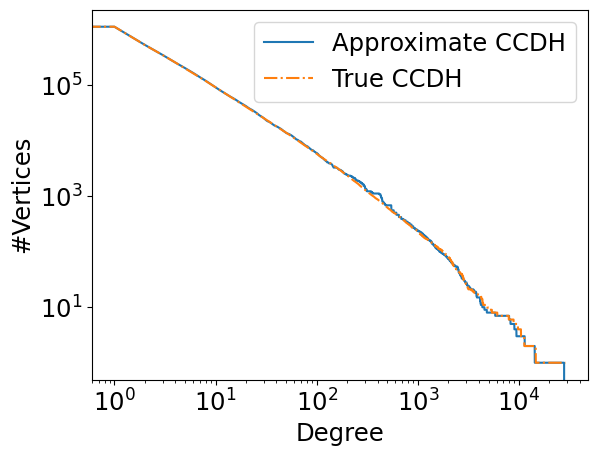}
        \caption{com-youtube}
        \ifarxiv{  } \else{ \Description[]{} } \fi
        \label{fig:com-youtube}
    \end{minipage}\hfill
\end{figure}

\begin{figure}[ht]
    \centering
    \begin{minipage}{0.49\textwidth}
        \centering
        \includegraphics[width=1\linewidth]{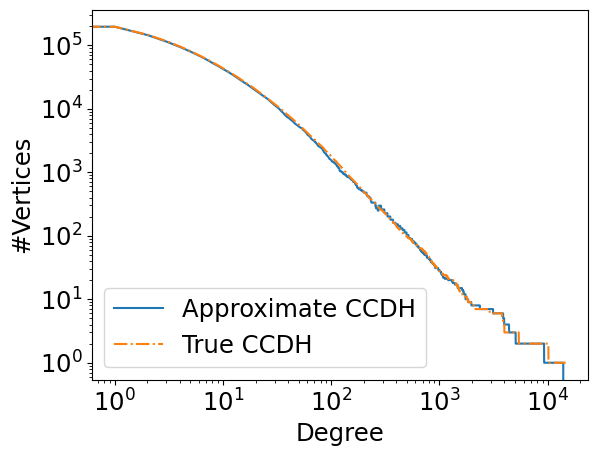}
        \caption{loc-gowalla}
        \ifarxiv{  } \else{ \Description[]{} } \fi
        \label{fig:loc-gowalla}
    \end{minipage}
    \begin{minipage}{0.49\textwidth}
        \centering
        \includegraphics[width=1\linewidth]{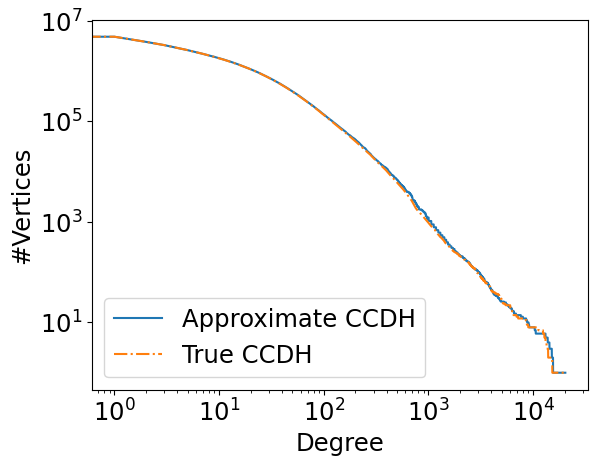}
        \caption{soc-LiveJournal1}
        \ifarxiv{  } \else{ \Description[]{} } \fi
        \label{fig:soc-LiveJournal1}
    \end{minipage}
\end{figure}

\begin{figure}[ht]
    \centering
    \begin{minipage}{0.49\textwidth}
        \centering
        \includegraphics[width=1\linewidth]{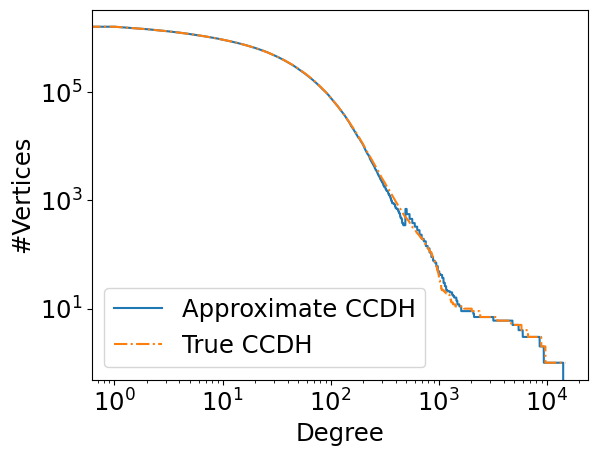}
        \caption{soc-pokec-relationships}
        \ifarxiv{  } \else{ \Description[]{} } \fi
        \label{fig:soc-pokec-relationships}
    \end{minipage}\hfill
    \begin{minipage}{0.49\textwidth}
        \centering
        \includegraphics[width=1\linewidth]{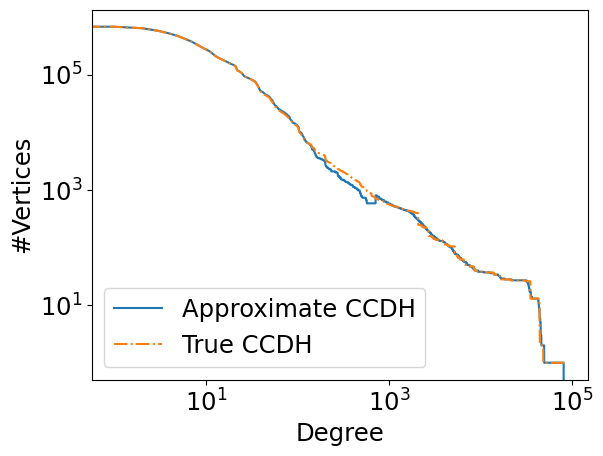}
        \caption{web-BerkStan}
        \ifarxiv{  } \else{ \Description[]{} } \fi
        \label{fig:web-BerkStan}
    \end{minipage}
    
\end{figure}

\begin{figure}[ht]
    \centering
    \begin{minipage}{0.49\textwidth}
        \centering
        \includegraphics[width=1\linewidth]{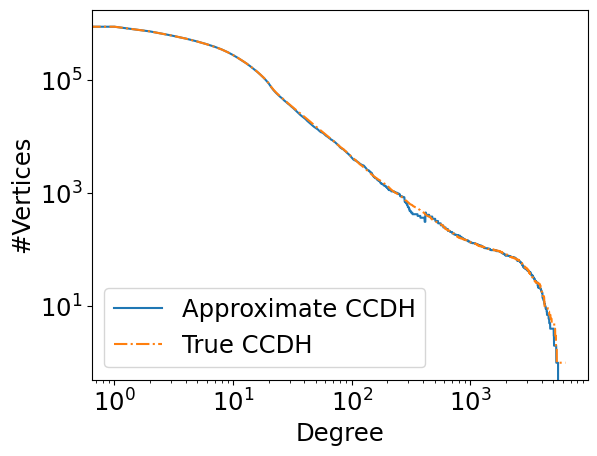}
        \caption{web-Google}
        \ifarxiv{  } \else{ \Description[]{} } \fi
        \label{fig:web-Google}
    \end{minipage}
    \begin{minipage}{0.49\textwidth}
        \centering
        \includegraphics[width=1\linewidth]{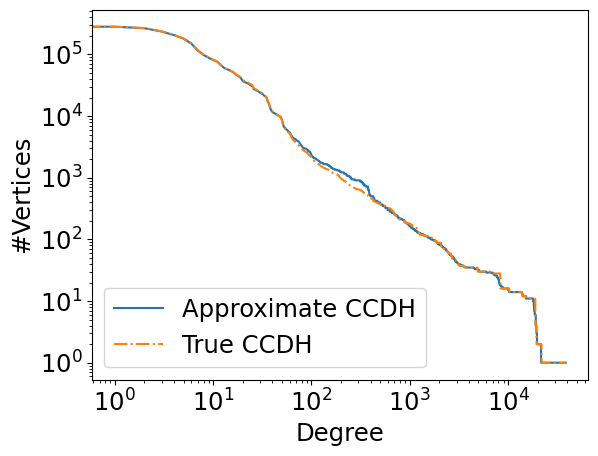}
        \caption{web-Stanford}
        \ifarxiv{  } \else{ \Description[]{} } \fi
        \label{fig:web-Stanford}
    \end{minipage}
\end{figure}

\begin{figure}
    \centering
    \begin{minipage}{0.49\textwidth}
        \centering
        \includegraphics[width=1\linewidth]{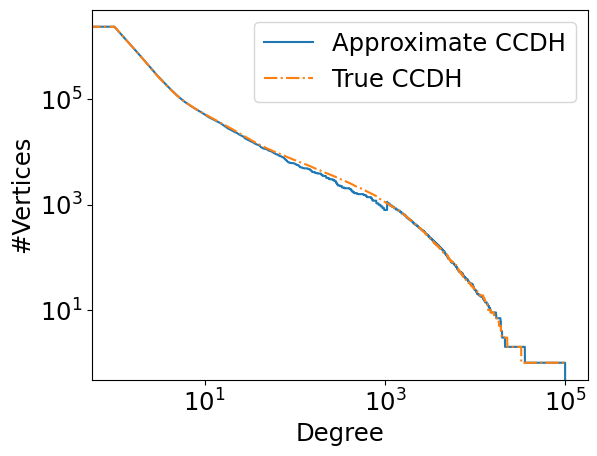}
        \caption{wiki-Talk}
        \ifarxiv{  } \else{ \Description[]{} } \fi
        \label{fig:wiki-Talk}
    \end{minipage}
\end{figure}


\end{document}